\def\notshow#1\notshowend{} %
\def\br#1\er{{#1}} %
\def\bb#1\eb{\textcolor{blue}{#1}} 
\def\bm#1\em{\textcolor{magenta}{#1}} %
\newcommand{\ca}{op} 
\newcommand{\cc}{cl} 
  \newcommand{\N}{\mathds{N}}     
  \newcommand{\R}{\mathds{R}}     
  \newcommand{\Z}{\mathds{Z}}     
 \newcommand{\C}{\mathcal{C}}
    \newcommand{\Lo}{\mathds{L}}     
\newtheorem{thm}{Theorem}[section]
\newtheorem{prop}[thm]{Proposition}
\newtheorem{lemma}[thm]{Lemma}
\theoremstyle{definition}
\newtheorem{defi}[thm]{Definition}
\newtheorem{example}[thm]{Example}
\newtheorem{rem}[thm]{Remark}
\newcommand{\be}{\begin{equation}}
\newcommand{\ee}{\end{equation}}
\newcommand{\ben}{\begin{enumerate}}
\newcommand{\een}{\end{enumerate}}
\newcommand{\bit}{\begin{itemize}}
\newcommand{\eit}{\end{itemize}}
\newcommand{\edoc}{\end{document}}
\newcommand{\bq}{\begin{quote}}
\newcommand{\eq}{\end{quote}}
\def\m{\mathcal{M}}
\newcommand{\gl}{\langle \cdot,\cdot \rangle_1}
\newcommand{\gs}{g_{S^{n-1}}}
\newcommand{\g}{g_M}
\newcommand{\go}{g_{\hbox{\tiny{op}}}}
\newcommand{\gc}{g_{\hbox{\tiny{cl}}}}
\newcommand{\ttt}{t}
\newcommand{\parti}{\partial_i} 
\newcommand{\ak}{\hbox{arc}T_k}
\title[]{Globally hyperbolic spacetimes: \\ slicings, boundaries and  counterexamples}
\author[M. S\'anchez]{Miguel S\'anchez} \address{IMAG \&  Departamento de Geometr\'{\i}a y Topolog\'{\i}a, Facultad de Ciencias, \hfill\break\indent Universidad de Granada,\hfill\break\indent Campus Fuentenueva s/n, \hfill\break\indent 18071 Granada, Spain}\email{sanchezm@ugr.es}
\begin{document}

\begin{abstract} 

 The Cauchy slicings for globally hyperbolic spacetimes and their relation with the   causal boundary    are surveyed and revisited, starting at  the seminal conformal boundary constructions by  R. Penrose. 
Our study covers: (1) adaptive possibilities and techniques  for their Cauchy slicings,  (2)      global hyperbolicity of sliced spacetimes, 
(3)  critical review on the conformal and causal boundaries for a globally hyperbolic spacetime, and (4) procedures to compute the causal boundary of a Cauchy temporal splitting  by using isocausal comparison with a static product.
New simple counterexamples on $\R^2$  illustrate a variety of possibilities related to these splittings, such as the logical independence (for normalized sliced spacetimes) between the completeness of the slices and global hyperbolicity, the necessity of uniform bounds on the slicings in order to ensure global hyperbolicity, or the insufficience of these bounds for the computation of the causal boundary. A refinement of one of these examples shows that  the space of all the (normalized, conformal classes of) globally hyperbolic 
metrics on a smooth product manifold $\R\times S$ is not convex, even though it is path connected by means of piecewise convex combinations.


\vspace{5mm}

\noindent 
{\em MSC:}  53C50, 83C05; 35L52, 58J45. \\

\noindent {\em Keywords:} globally hyperbolic spacetime, Cauchy slicing, causal boundaries, normally hyperbolic operator, space of conformal Lorentz metrics, Penrose conformal embedding.  \\

\noindent {\em Note:} 
The results here were explained at the online meeting SCRI21 ``Singularity theorems, causality, and all that.
A tribute to Roger Penrose'' June 14-18 (2021),
which  is available at its website
\url{https://sites.google.com/unifi.it/scri21}.

\end{abstract}
\maketitle

\newpage

\tableofcontents

\newpage

\section{Introduction}\label{s_Intro} 

Among  the many celebrated contributions of Roger Penrose to Mathematical Relativity, the introduction of the conformal boundary \cite{Pe} and, in collaboration with Geroch and Kronheimer, the causal one \cite{GKP}, have been specially fruitful. The former underlies notions such as asymptotic flatness and conformal infinity, which are commonly used in a huge variety of relativistic topics. The latter provides a general intrinsic  boundary for 
spacetimes, 
developed later by many authors, which has been linked with other outstanding constructions in Differential Geometry, such as the Busemann and Gromov boundaries. In this contribution, we will focus on the computability of these boundaries for arbitrary globally hyperbolic spacetimes. 
 This class of spacetimes  underlies other Penrosian concepts such as cosmic censorship.     
Their structure and some of their properties will also receive special attention here. We will make a critical review on these topics linking different techniques and providing new consequences. Moreover, we construct some counterexamples about a variety of issues, which show the accuracy of the known techniques and may orientate further developments. The style is intended  pedagogical, so that the reader can delve into the literature and apply the techniques in different situations.

A remarkable result by Geroch \cite{Ge} ensures that a globally hyperbolic spacetime admits an acausal Cauchy hypersurface and, then, it splits topologically as $\R\times S$. Starting at this landmark, a series of results in 2003-11 \cite{BS03,BS05,BS07,MS}  improved the splitting adding quite a few general possibilities. Among them,  such spacetimes can be written globally as a parametrized orthogonal product 
\begin{equation}\label{e_00}
\R\times S, \qquad g_{(t,x)}=-\Lambda(t,x) dt^2+(g_t)_{x},
\end{equation}
 with each slice  $t=$constant a Cauchy hypersurface (see details around \eqref{e_orthgonalsplitting}). 
The existence of this splitting poses some issues. First to identify, among the spacetimes  admitting an expression such as \eqref{e_00}, which of them are globally hyperbolic; this will yield additional information on the space of all the globally hyperbolic spacetimes. Then, to  use the globally hyperbolic structure \eqref{e_00} in order to obtain information of the spacetime from the properties of $\Lambda, g_t$ or, in the case of conformally invariant properties (as those studied in the present paper), from the quotient $g_t/\Lambda$. 

The first part of the article focuses on the structure of globally hyperbolic spacetimes (\S \ref{s3} and \S \ref{s4}) and the second one  on the causal boundary (\S \ref{s2a}  and \S \ref{s_causalboun_for_sliced}).  The last section \S \ref{s6} develops  detailedly the main counterexamples which illustrate the results along the paper. 

In \S \ref{s3}, a review on the possibilities of the globally hyperbolic splittings and the underlying techniques is carried out. The three first subsections explain briefly the ideas in the aforementioned series of results. Here, the so-called folk problems of smoothability serve as a historical and scientific guide to understand the flexibility and increasing adaptability of the splittings; the analytic case 
(which did not appear in previous references, as far as the author knows) 
is also considered in \S \ref{ss_analytic}. 
In the last two subsections, \S \ref{ss_dynamical}, \S \ref{s_Festability},  we  compare these techniques with  more recent approaches. 
The latter have extended the scope and links of the problems to   other fields such as Dynamical Systems, and they emphasize the interest of considering   cone structures more general than the Lorentzian ones. So,  we introduce more general cone structures by using Finslerian elements, \S \ref{s243},
and explain how the original techniques  can also be  used to study them, posing new questions in Lorentz-Finsler Geometry,  \S \ref{s244},  \S \ref{s245}.

In \S \ref{s4}, we study when the temporal function $t$ in \eqref{e_00} (under the conformal choice $\Lambda\equiv 1$) has Cauchy slices. First we discuss the role of the completeness of the metrics $g_t$'s in the  slices. We emphasize that  this condition is neither  necessary nor  sufficient  for global hyperbolicity, as shown by explicit counterexamples (\S\ref{ss0} and \S \ref{ss1}). However, we show that  a uniform bound with respect to a complete Riemannian metric imply that $t$ is a Cauchy temporal function (Proposition \ref{p_Bari}). The necessity of such bounds is stressed by means of a fine tuned counterexample (Remark \ref{r_ej3}, \S \ref{ss2}).  
What is more,  the structure of the space of globally hyperbolic metrics on a product $\R\times S$ is  better understood then, \S \ref{s_paracausal}. Indeed, the existence in \eqref{e_00} of a normalized representative of each globally hyperbolic conformal class poses the question whether this space is {\em convex}. However, a further  counterexample (\S \ref{ss3}) gives a negative answer. This completes a result by Moretti et al. in \cite{MMV}, who introduced  paracausal transformations for this issue (see Theorem \ref{t_paracausal}, Remark \ref{r_paracausal}).  

In \S \ref{s2a} we review both, the conformal and causal boundaries. The conformal one is studied in the first two subsections, \S \ref{s2.1}, \S \ref{s2.2}. We  stress specially the limitations of this boundary  which lead to the causal  one. The causal boundary is studied in the other two subsections, being the first aim to explain the subtleties of its definition, \S \ref{s2b1}, and the second one to describe the causal boundary of a static product (which will serve as the starting point for the boundary of more general globally hyperbolic spacetimes), \S \ref{s_cbounc_staticproduct}.
 
 Specifically, after  constructing Penrose-type conformal embeddings into static products $\R\times M_k$ of models spaces $M_k$,    we discuss the strictly causal  elements of the conformal boundary, \S \ref{ss2.1}, as well as the requirements to be fulfilled by an open conformal embedding $i: \m \hookrightarrow \hat \m$ in order to give a meaningful conformal boundary, \S \ref{ss2.2}. In particular, we emphasize the role of {\em chronological completeness}, a hypothesis weaker than the precompactness of $i(\m )$, which provides a suitable characterization of global hyperbolicity (Proposition \ref{p_timelikepoint_conf}).  However, in \S \ref{ss2.3} we point out that,  in general, a spacetime $\m$ will not admit any 
 conformal boundary, even if $\m$ is globally hyperbolic and the requirements for this boundary are weaker than the usual ones. Indeed, Example \ref{ex_NO_confbound}, considers  the aforementioned products $\R\times M_k$, which  admit a conformal boundary with two lightcones. One would expect an analogous boundary if $M_k$ is replaced by any Cartan-Hadamard manifold (as it will occur for the causal boundary). However, the Weyl tensor (and, thus, $i(\m )$ for any conformal embedding)  may not admit any continuous extension.
 
 About the notion of causal boundary $\partial_c\m$ for a spacetime $\m$, the issues of the original definition
 are briefly explained in \S \ref{s2b1}, with special focus on the globally hyperbolic case. These issues involve the  definition of $\partial_c\m$ as a pointset  (which, in particular, yields a characterization of global hyperbolicity Proposition \ref{p_timelikepoint_caus}), 
 \S \ref{s_Pairings}, as a causally ordered set, \S \ref{s_Causality} and, specially, as a topological space.      Following \cite{FHS11}, we consider the chr-topology (which goes back to Harris' work \cite{Ha00}) and point out some implicit mathematical subtleties in its definition,  \S \ref{s_Topology}.
In \S \ref{s_cbounc_staticproduct}, the causal boundary of a static product $\R\times M$ is explained; this will be  essential for the ``isocausal comparison'' later. A remarkable fact is that this boundary is connected with the classical  
Gromov boundary of a Riemannian manifold (through an intermediate Busemann boundary). This was analyzed systematically in a  more general setting, see \cite{FHS_Memo}, and a brief intuitive idea is given here. More precisely, the results for our case are summarized first,  \S \ref{s_Summary caus bound}, then, the meaning of this summary is explained with an explicit example, \S \ref{s_brief_expalantion} and an application to Generalized Robertson-Walker spacetimes is also included, \S \ref{s_GRW}. Summing up, the causal boundary for static products comes out completely well-understood and the possible appearance of ``bad'' properties  in  mathematically taylored examples becomes harmless. 

In \S \ref{s_causalboun_for_sliced}, we study how to compute the boundary of a globally hyperbolic spacetime $\m$ considering again a conformal representative (as in \eqref{e_00} with $\Lambda\equiv 1$) and trying to compare it with the boundary of a static product $\m_0=(\R\times S,g_M=-dt^2+g_M)$. This would be analog to the procedure in $\S \ref{s_unifbound}$ which ensures global hyperbolicity by comparing with static products, but a significative  difference appears now: even uniform bounds type $c_1g_M<g_t<c_2g_M$ are {\em not} enough to ensure the identification of $\partial_c\m$ with $\partial_c\m_0$ (see \S \ref{s_lowering} and the counterexample in \S \ref{ss4}).   
As discussed in \S \ref{s_isocausal}, the difficulty appears because the causal boundary is one of the causal properties {\em not} preserved by the so-called {\em isocausal} relation, introduced by Garc\'{\i}a-Parrado and Senovilla in \cite{GPS}. 
Noticeably, a simple strenghtening of such a uniform bound (the hypotheses in \eqref{e_alpha_bound}, \eqref{e_alph_asymp}) will suffice to obtain an isocausal comparison with very relevant information. Indeed, applying the general results in \cite{FHS13}, 
we show how $\partial_c\m$ {\em contains} $\partial_c\m_0$ and, what is more, $\partial_c\m$ can be regarded as a ``strained'' version of $\partial_c\m_0$, see
\S \ref{s_computecbound}. 

As a step forward, in \S \ref{s_ghwithboundary} we consider the case of globally hyperbolic spacetimes-with-timelike-boundary $\overline{\m}$. Even though the interior $\m$ of such a spacetime cannot be globally hyperbolic, the identification of the timelike boundary $\partial \m$ with the nontrivial pairs of $\partial_c\m$ (see Proposition~\ref{p_timelikepoint_borde}) provides a fruitful combination of the causal and conformal boundaries.  After some general properties 
in \S \ref{s_borde1}, in \S \ref{s_borde2} we focus on the extension of the Cauchy orthogonal splitting \eqref{e_00}. Here, the key point is that $\nabla \tau$ must remain tangent to $\partial \m$ at all the points of the boundary  (see Theorem \ref{t_princ_borde}). Moreover, we also explain how the other issues studied above  can be extended to the case with boundary (Remark \ref{r_withbound_extensionrepera}). In the author's opinion, the adaptability of these spacetimes to different situations as well as the variety of available tools (as those explained here), may make them very atractive for future research.

About the counterexamples in \S \ref{s6}, it is worth pointing out that all of them are  type $(\mathds R ^2,g=-dt^2$ 
$+f^2(t,x)dx^2)$, which underlies their robustness. In particular, they are  smooth and  can be chosen analytic when convenient (see Remark \ref{r0}). A first simple counterexample (implicit in \cite{S})    
shows non-global hyperbolicity 
when all the $t$-slices are  complete, \S \ref{ss1}; other example with one slice  incomplete but 
all the slices Cauchy will be also achieved \S \ref{ss0}. Starting at that first example,  the  global hyperbolicity of such a $g$ is shown not to imply global hyperbolicity for   $-dt^2+(f^2(t,x)/2)dx^2$, stressing so the type of appropriate bounds for this property\footnote{In addition, a technical property which might have interest in its own right is proved:  for any Lorentz metric on $\R^2$, all the the spacelike curves are acausal, see footnote \ref{foot}.}, \S \ref{ss2}. A further modification shows two metrics $g,g'$ which are globally hyperbolic but $g+g'$ is not, \S \ref{ss3}, implying consequences for the non-convexity of the space of globally hyperbolic metrics.  Finally,    globally hyperbolic isocausal metrics $g,g'$ with non-isomorphic  causal boundaries are exhibited, \S \ref{ss4}. 

As a last remark, the study of general spacetimes (globally hyperbolic or not) which admit a sliced expression as in \eqref{e_00} but adding  cross terms between the $\R$ and $S$ parts, has an obvious interest and applicability. It is worth emphasizing that  this has already been carried out and, indeed, many results above are obtained by tuning such general results. Noticeably, the stationary case 
(that is, when all the elements  are $t$-independent, so that $\partial_t$ becomes a timelike Killing vector field) leads to a specific Finsler metric of Randers type, which plays the role of $g_t/\Lambda$; this 
metric characterizes the causality of the spacetime \cite{CJSa} as well as the causal boundary $\partial_c \m$ \cite{FHS_Memo}. Again, this time-independent case can be used
for isocausal 
comparison  when studying the time dependent one,   \cite{FHS13}. Moreover, the approach can be even extended to the case that $\partial_t$ is any Killing vector field transverse to the slices of $t$ (as happens in physical models beyond a Killing horizon), 
\cite{CJSb, JS}. Such possibilities  are briefly commented in Remarks \ref{r_linksLF0}, \ref{r_linksLF1}, \ref{r_linksLF2},  \ref{r_withbound_extensionrepera}, and the reader is referred to the survey \cite{JPS} for an account of   Lorentz/Finsler links. Notice, however, that these links appear in standard Relativity, that is,  one is not generalizing the cones structures (as done in  \S \ref{ss_dynamical}) at this step. Anyway, one can go a step forward and use properly Lorentz-Finsler geometry for applications in both, extensions of Relativity and Classical Mechanics, \cite{JPS21}.

\section{Structure  of globally hyp. spacetimes}\label{s3}


In general, we will follow conventions as in \cite{MinSan}; in particular, the 0 vector is not regarded as causal.

\subsection{Geroch's results} In a celebrated article, Geroch (1970)  \cite{Ge} proved the following characterizations of  global hyperbolicity (defined classically as being strongly causal  with compact diamonds
 $J(p,q):=J^+(p)\cap J^-(q)$, $p,q\in\m$) for  a spacetime  
$\m$:
(i) $\m$ admits an acausal, topological  Cauchy hypersurface $S$ and (ii)  $\m$ admits a Cauchy time function $t$, i.e.,   $t$ is a continuous function which  increases strictly on future-directed causal curves and its slices $t=$ constant are (acausal, topological)  Cauchy hypersurfaces. Moreover, he also proved that, in this case, $\m$ becomes homeomorphic to $\R\times S$ and all the Cauchy hypersurfaces (acausal or not) are homeomorphic. It is worth pointing out that a Cauchy hypersurface, defined as a subset $S$ which is crossed exactly once by any inextendible timelike curve, must be a (just achronal) topological locally Lipschitz hypersurface and, moreover, its existence also implies global hyperbolicity (see the detailed proofs in \cite{O} as well as other Cauchy criteria in \cite{Gal1}).

Even though the details of the proof are demanding, the construction of the Cauchy time function $t$ is simple and beautiful.  Just take any {\em admissible} measure $\mu$ on $\m$ (the one associated with any auxiliary Riemannian metric such that $\mu(M)<\infty$ is enough, see \cite{Dieckmann}) and the job is made by the function
\begin{equation}\label{e_Geroch}
t(p):=\log \left(\frac{\mu(J^-(p))}{\mu(J^+(p))}\right), \qquad \forall p\in\m .
\end{equation}

This is a key result in Causality, even if any conclusion on  the smooth or, more restrictively,  spacelike 
character of the Cauchy hypersurfaces is conspicuously absent. Indeed, Geroch gave also an additional reasoning to ensure the stability of global hyperbolicity in the set of all the Lorentz metrics on $\m$. However, this could have been circumvented if the Cauchy time function $t$ were {\em smooth with spacelike} slices i.e., {\em temporal} (in the sense  below), as we will  discuss in~\S \ref{s_Festability}.

\subsection{Folk problems} This issue  was studied in Seifert's thesis\footnote{Defended in 1968 and cited in the classical  Hawking and Ellis book (1973) \cite{HE}; the concerned results were published later, in 1977 
\cite{Se77}.}, however, the proof was unclear. Indeed, Sachs and Wu coined the term {\em folk theorem} for the simplest underlying question, namely, whether   
a globally hyperbolic spacetime admits a (smooth) spacelike hypersurface   \cite[p. 1155]{SW}; this was taken into account in  the posterior mathematically careful 
bibliography\footnote{For example, Beem's et al. book cited explicitly Sachs and Wu  in \cite[p. 65]{BEE}. Typically, the existence of      spacelike Cauchy 
hypersurfaces \cite{O,Pe2,Newman} (or temporal functions  \cite{Bartnik, EcHu}) was regarded as an additional hypotheses,     even stressing that spacelike means only acausal in Seifert's terminology \cite{Gal1}.}. This type of problems appear for several causal constructions and some times they are referred as  folk problems on {\em  smoothability}. However, we must emphasize that  the issue remains even in the smooth case: on the one hand,  a smooth acausal Cauchy hypersurface $S$ may be lightlike (i.e. degenerate) at some point and, on the other,  non-smooth Cauchy $S$ can always be obtained as the level of a smooth time function (see the discussion below Theorem~\ref{t_LMP}). What is more, this  is related to stability, because  a lightlike  hypersurface $S$ is always {\em unstable} as a Cauchy one (arbitrarily small compactly supported smooth perturbations {\em do} contain non-achronal ones around a degenerate point) while a  spacelike Cauchy hypersurface will be always $C^0$ stable. 

Taking into account the observations above we will use the nomenclature introduced in \cite{BS05} and say that a function $\tau$ is {\em temporal} if it is smooth (as differentiable as allowed by the smoothness of the spacetime $\m$) with past-directed timelike gradient; in particular, $\tau$ must be a time function. Consistently, if $\tau$ is Cauchy temporal, its levels $\tau$=constant are Cauchy hypersurfaces (necessarily spacelike). 
Summing up, the main folk problems at the end of the seventies were to prove the assertions:

\ben
\item \label{F1} A globally hyperbolic spacetime admits a (smooth) spacelike Cauchy hypersurface (Sachs \& Wu).

\item \label{F2} A globally hyperbolic spacetime admits a  Cauchy temporal function~$\tau$. 

\item \label{F3} If a spacetime admits a time function then it admits a temporal one (completing so the alternative definitions of stable causality). 

\item \label{F4} Any globally hyperbolic spacetime can be isometrically embedded in some Lorentz-Minkowski spacetime $\Lo^{N}$ for large $N\in \N$ (in the spirit of Nash' theorem \cite{Nash}). 

\een

\subsection{Solution of the folk problems}\label{s_folk} These questions were solved systematically in a series of papers in 
starting at \cite{BS03}. 
This reference introduced a basic type of functions $h_p$ (a sort of modified Lorentz distance function) and a way to operate them 
which was used  in  the subsequent constructions.
Roughly speaking, in Riemannian Geometry, global objects
are constructed frequently from local ones by using partitions of the unity. Here, however, the causal character of the gradient for functions in
the partition would be, in principle, uncontrolled. Then, the intended objects will be constructed
by using the paracompactness of $\m$ but avoiding partitions of the unity.

Next, we will make a brief summary of   the original articles \cite{BS03,BS05,BS07,MS}, giving some ideas on the proofs and related issues.  Most details can be seen in book format at Ringstr\"om's \cite[Chapter 11]{Ringstrom}, which is strongly recommended.
It is worth pointing out that  some extra bonuses were obtained in this epoch \cite{MinSan}, among them, significatively,  the simplification of the aforementioned classical definition of global hyperbolicity, namely: the hypothesis {\em strong causality}  can be weakened into 
{\em causality}\footnote{Previously, Beem and Ehrlich (see \cite[Lemma 3.4]{BE79b} or \cite[Lemma 4.29]{BEE}) had noticed that, under strong causality, the compactness of the diamonds $J(p,q)$ can weakened into the compactness of their closures. However, under this hypothesis, strong causality cannot be weakened into causality as shown by a Carter's example of  imprisoning causal spacetime, see \cite[p. 368]{Carter} and \cite[p. 195]{HE}. Recently, Hounnonkpe and Minguzzi \cite{HM} have proven that, in dimension $n\geq 3$, the non-compactness of $\m$ (in addition to the compactness of the diamonds) suffices for the definition.}~\cite{BS07}. 

\subsubsection{Problem \ref{F1}: spacelike Cauchy  and basic ideas}\label{s_BS03}  Let us see  the ideas 
to construct a spacelike Cauchy hypersurface $S$.  
 Consider a topological Cauchy hypersurface $S_0$, say, the slice $t=0$  of Geroch's function \eqref{e_Geroch}, and $p\in S_0$. We will use  functions  $h_p\geq 0$, normalized  so that $h_p(p)=1$   with  support in  a small compact neighborhood of $p$  satisfying that, whenever $q\in J^-(S_0)$ and $h_p(q)\neq 0$, then its gradient $\nabla h_p (q)$ is timelike and past pointing. 

Such a function can be easily constructed by starting at some close $p'\ll p$ in any small convex neighborhood $C_p$ of $p$ 
and taking  the  Lorentzian distance function $d_{p'}$ (or {\em 
time-separation}) from $p'$  in $U$, that is,  $d_{p'}(q)$ is 
 the length of the unique  future-directed causal 
pregeodesic  from $p'$ to $q$ entirely contained in $C_p$ if 
this curve exists, or 0 otherwise. Now, modify $d_{p'}$ as follows:  
(i) as $d_{p'}$ is smooth in $C_p$ except when vanishes, take  
$\exp({-1/d_{p'}^2})$ (and normalize it at $p$ 
multiplying by a constant), 
and  (ii)  this function satisfies the required properties except that it does not match continuously with the value 0 in 
$I^+(S_0)\setminus C_p$, so, modify it therein (say, multiply it by a suitable bump function). 

 
 The technique to obtain the required $S$ consist in
 taking a second level  of Geroch's function $S_-=t^{-1}(-1)$
 and choose $S$ as any closed, connected, smooth spacelike   hypersurface  in  $I^+(S_{-})\cap I^-(S_0)$ (then, $S$ will be necessarily Cauchy, see \cite[Corollary 11]{BS03}).
 With this aim, one constructs a function $h$ satisfying  (i) $h\geq 1$ on $S_0$, (ii) $h=0$ on $I^-(S_-)$ and (iii) the gradient of $h$ is timelike and past-pointing on the open subset $(h^{-1}(0,1))\cap I^-(S_0)$. Indeed,  the required $S$ can be taken then as $(h^{-1}(1/2))\cap I^-(S_0)$. 
 
 The function $h$ can be  constructed  by using paracompactness, taking a  locally finite sum  of  the functions $h_p$ above
 (the condition (iii) will hold in the sum due to the convexity of the causal cones), \cite[Prop. 14]{BS03}. 

\begin{rem}\label{rem_oldnew}
Even though we have proven the existence of a spacelike Cauchy hypersurface,  just the  existence of a smooth one implies the strengthening of  Geroch's homeomorphism $\m \cong \R\times S$ into a diffeomorphism. This has relevant consequences \cite{CherNem13, Torres} and must  be taken into account in  old topological results such as \cite{ClarkeNewman, Newman}. 

For example,  Chernov and Nemirovski \cite{CherNem13} proved that 
every contractible
smooth 4-manifold admitting a globally hyperbolic Lorentz metric is diffeomorphic to
the standard $\R^4$. In particular, the celebrated exotic differentiable structures on $\R^4$ (coming from classical works by Freedman, Taubes and others) cannot admit globally hyperbolic metrics. 
\end{rem}

\subsubsection{Problem \ref{F2}:
Cauchy temporal functions and 
beyond.} \label{s_CauchyTemporal_construction} A Cauchy temporal function $\tau$ 
was constructed in \cite{BS05} by introducing the following modification in the previous technique. 
Recall that, up to now, we have not been worried about the causal character of $\nabla h$ beyond $S_0$. On the contrary, now we are 
interested in the construction of a {\em temporal step function} $\tau_0$ around $S_0$. Namely, choosing  $t_-<t_a<0<t_b<t_+$,  such a 
function must satisfy: (i)~$-1\leq \tau_0
\leq 1$, (ii) $\tau_0\equiv -1$ in $J^-(S_{t_-})$,   $\tau_0\equiv 1$ in $J^+(S_{t_+})$, (iii)~$\nabla \tau_0$ is either  past-
directed timelike  or 0 everywhere, and (iv) $\nabla \tau_0$ does not vanish on the strip 
of Geroch's Cauchy hypersurfaces $S_{t'}$ with $t_a< t' < t_b$. Indeed, once an analogous temporal step function $\tau_k$, $k\in\Z$, is constructed for each  Geroch slice $S_k$ (with a suitable choice of  $t_-, t_a, t_b, t_+$)
the required Cauchy temporal $\tau$ appears as a sum\footnote{The ordering in this sum makes it finite at each point. Indeed, by (ii)  each parenthesis will vanish on a sequence of  Geroch's strips centered at $t=0$ and exhausting $\m$.} $\tau= \tau_0 +\sum_{k\geq 0} (\tau_k+ \tau_{-k})$, 
(see around \cite[formula (2)]{BS05} or \cite[formula (11.4)]{Ringstrom}).

About the construction of the step temporal $\tau_0$, notice first that if (iv) were required only for 
the slice $t'=0$, then $\tau_0= -1+2h^+/(h^+-h^-)$ would work; here  $h^+\geq 0$ is essentially just the function   $h$ in \S \ref{s_BS03} and 
$h^-\leq 0$ a dual one towards the past. Moreover, if $K$ is any compact set
included in $t^{-1}(t_-,t_+)$ then
  one can take an averaged sum $\tau_K$ of such step temporal functions for a finite number of slices so that $\nabla \tau_K$ will be timelike on $K$ (and (i)---(iii) still hold).
Now, to ensure (iv) for $t'$ in the whole interval $(t_a,t_b)$, one takes an increasing sequence of compact 
subsets  $K_j$ such that 
$
[t_a, t_b] \times S 
\subset \cup_j  K_j 
\subset (t_-, t_+) \times S
$, 
and construct such a function $\tau_{K_j}$ for each $K_j$ as above. Then, consider the series
  $\sum_j \tau_{K_j}/C_j$ where the  constants $C_j$'s are chosen so that, for $j\geq j_0$,  the series  as well as of all the derivatives up to order $j_0$ are uniformly convergent in $K_{j_0}$. Then, the sum is smooth and $\tau_0$ is obtained by normalizing it (by composing with a suitable non-decreasing function) to ensure that (i) and (ii) holds.

It is worth highlighting that the properties of $\tau$ yield  a  {\em global orthogonal splitting}. Indeed, moving one slice $S$ of $\tau$ by the flow of the timelike vector  field $T=-\nabla \tau/|\nabla \tau|^2$ one writes the globally hyperbolic spacetime  as a {\em Cauchy temporal splitting}:
\begin{equation}\label{e_orthgonalsplitting} 
\m= \R\times S, \qquad g= -\Lambda d\tau^2+g_\tau,
\end{equation} where $\tau$ is identified with the natural   projection $ \R\times S \rightarrow \R$, the {\em lapse} $\Lambda=1/\nabla\tau|^2 (>0)$  is a function on the whole $\m$ and (abusing of  notation) $g_\tau$ is a Riemannian metric on each Cauchy slice at constant~$\tau$, \cite[Prop. 2.4]{BS05}. That is, there are no cross terms ({\em shift}) between the $\R$ and $S$ parts and, for any coordinates $x^i, i=1,\dots n$ on $S$ one can choose $x^0=\tau$ so that: 
$$g_{00}(\tau,x^i)=-\Lambda(\tau,x^i),\quad  g_{j0}(\tau,x^i)=0,\quad  g_{jk}(\tau,x^i)=(g_\tau)_{jk}(x^i).$$  

A further result on the  consistency of Cauchy hypersurfaces and  temporal functions (with direct applications to the uniqueness of solutions to  wave equations \cite[Ch. 12]{Ringstrom}, including its quantization \cite{BGP}) is the following. 
 
\begin{thm}\cite[Thm. 1.2]{BS06} \label{t_LMP}
Any spacelike Cauchy hypersurface $S$ can be regarded as the slice $\tau=0$ of a Cauchy temporal splitting as  \eqref{e_orthgonalsplitting}.
\end{thm}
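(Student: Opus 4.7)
The plan is to refine the construction of \S\ref{s_CauchyTemporal_construction} so that $S$ itself becomes the zero level set of the resulting Cauchy temporal function $\tau$. The starting point is a local smooth model along $S$: using the exponential map along the future-pointing unit normal of $S$ (which is timelike because $S$ is spacelike), one obtains a tubular neighborhood $U$ in which the signed normal distance $\sigma$ to $S$ is well defined, smooth, vanishes exactly on $S$, and satisfies $\nabla\sigma = -n$ on $S$, where $n$ is the past-pointing unit normal. In particular, $\nabla\sigma$ is past-directed timelike on $S$, and hence also on a smaller neighborhood $V\subset U$ by continuity.

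Next, I would pick two auxiliary spacelike Cauchy hypersurfaces $S_-\subset V\cap I^-(S)$ and $S_+\subset V\cap I^+(S)$ close to $S$ (for instance, level sets of $\sigma$ inside $V$; these remain Cauchy by the $C^0$-stability of spacelike Cauchy hypersurfaces noted in \S\ref{s_folk}). Choose a smooth cutoff $\chi\equiv 1$ on the strip $\Sigma:=I^+(S_-)\cap I^-(S_+)$ and supported inside $V$; then $\sigma_0:=\chi\sigma$ is smooth on $\m$, vanishes on $S$, and has past-directed timelike gradient on $\Sigma$. Now apply the BS05 series technique on each of the two globally hyperbolic regions $J^+(S_+)$ and $J^-(S_-)$ to produce temporal step functions $\tau_k$, $k\in\Z\setminus\{0\}$, associated to Cauchy slices exhausting those regions, each vanishing on a fixed neighborhood of $S$. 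Assembling the convergent series $\tau=\sigma_0+\sum_{k\neq 0}\tau_k$ as in \cite[formula~(2)]{BS05}, cone convexity (adding past-timelike vectors yields past-timelike vectors) ensures that $\nabla\tau$ is past-directed timelike throughout $\m$; moreover, since every $\tau_k$ ($k\neq 0$) vanishes near $S$ and $\sigma_0^{-1}(0)=S$, one has $\tau^{-1}(0)=S$ exactly. The orthogonal splitting \eqref{e_orthgonalsplitting} then follows by flowing $S$ with $T=-\nabla\tau/|\nabla\tau|^2$, and $\tau$ inherits the Cauchy property from the $\tau_k$'s outside $\Sigma$.

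The main obstacle is the behaviour in the overlap regions $\Sigma\cap J^+(S_+)$ and $\Sigma\cap J^-(S_-)$: one must arrange that $\nabla\sigma_0$ and the contributions $\nabla\tau_k$ add up to a past-timelike vector everywhere, even though each is ``optimized'' in a different region. This is handled by exploiting cone convexity together with a careful rescaling/normalization of the BS05 contributions, so that their timelike character dominates outside $\Sigma$ while being overshadowed by $\nabla\sigma_0$ on a neighborhood of $S$. A secondary but essential subtlety is that \emph{no} BS05 step function is allowed to disturb the zero level of $\sigma_0$; this is arranged by requiring each $\tau_k$ with $k\neq 0$ to vanish identically on a fixed neighborhood of $S$ inside $\Sigma$, which is possible because the Cauchy slices used in the BS05 exhaustion can be chosen outside that neighborhood.
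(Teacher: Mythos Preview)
Your strategy differs from the paper's and, as written, has a genuine gap. The paper proceeds in two steps: (i) build a \emph{smooth Cauchy time function} $t$ with $t^{-1}(0)=S$ whose gradient is past-directed timelike everywhere \emph{except possibly on} $S$, by taking Cauchy temporal functions $\tau^\pm$ on the globally hyperbolic open sets $I^\pm(S)$ and merging $\pm\exp(\mp\tau^\pm)$ carefully across $S$; (ii) since $S$ is spacelike, construct a BS05 \emph{step temporal function} $\tau_0$ around $S$ (so $\nabla\tau_0$ is past-timelike on a full strip containing $S$ and past-causal-or-zero elsewhere) and set $\tau=t+\tau_0$. Both summands then have gradients which are past-timelike or zero at every point, and at each point at least one is timelike; cone convexity finishes the job.

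Your construction breaks exactly at this point. The function $\sigma_0=\chi\sigma$ has $\nabla\sigma_0=\chi\,\nabla\sigma+\sigma\,\nabla\chi$, and on the transition annulus $\operatorname{supp}(\nabla\chi)$ the second term is an uncontrolled spacelike contribution (there is no reason for $\nabla\chi$ to be causal), so $\nabla\sigma_0$ need not be past-causal there. Once one summand can be spacelike, ``cone convexity'' no longer guarantees that $\nabla\sigma_0+\sum\nabla\tau_k$ is timelike; your proposed ``careful rescaling'' of the $\tau_k$'s cannot repair this without a quantitative domination argument that you have not supplied (and which is delicate because the transition annulus is noncompact). A secondary issue: the tubular neighborhood $V$ has a priori no uniform width when $S$ is noncompact, so level sets $\sigma^{-1}(\pm\epsilon)$ need not be globally defined Cauchy hypersurfaces, and the ``$C^0$-stability'' you invoke concerns perturbations of the \emph{metric}, not of the hypersurface. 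The paper sidesteps both difficulties by never cutting off a local normal coordinate: its $t$ and its $\tau_0$ are each global objects with gradients that are past-causal-or-zero by construction.
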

This  is proven in two steps. First, one realizes that any acausal Cauchy hypersurface $S$ (even if it is only continuous) can be obtained as the level $t=0$ of a {\em smooth} Cauchy time function\footnote{The reader should be cautious when looking at the bibliography because, in spite of this result, some authors still use the name ``smooth Cauchy time function'' to mean what we call ``Cauchy temporal''.} $t$. Indeed, $t$ can be chosen with $\nabla t$ past-directed timelike everywhere but on $S$ (notice that it must vanish necessarily on $S$ if, for example, $S$ is nowhere differentiable).
 This is carried out taking  temporal functions $\tau^\pm$ on $I^\pm(S)$ (which are globally hyperbolic open subsets) and merging carefully $\exp(\pm \tau^\pm)$ at $S$, see 
 \cite[Th. 1.15]{BS06}. 
 Once this  ensured, when $S$ is spacelike we can take a step temporal function $\tau_0$ around $S=S_0$, and 
the required function is  $\tau=t+\tau_0$, \cite[Th. 1.2]{BS06}.

Finally, let us point out another  issue in the classical relativistic setting with natural applications to quantization \cite{BR}  (in the framework of locally covariant quantum field theories \cite{BFV}).

\begin{thm}\cite[Thm 1.1]{BS06}\label{t_CauchySubvariedad} In a globally  hyperbolic spacetime $\m$, any  compact spacelike acausal  submanifold $H$ with boundary   can be extended to a spacelike Cauchy hypersurface (thus, it  lies in a slice of some Cauchy temporal  splitting as \eqref{e_orthgonalsplitting}).  
\end{thm}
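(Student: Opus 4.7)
My strategy is to reduce to Theorem~\ref{t_LMP}: the goal is to construct a smooth spacelike Cauchy hypersurface $S$ containing $H$, so that Theorem~\ref{t_LMP} applied to $S$ exhibits $S$ as the slice $\tau=0$ of a Cauchy temporal splitting~\eqref{e_orthgonalsplitting}, and hence $H\subset S$ lies in that slice. To produce $S$, I will build a Cauchy temporal function $\tau:\m\to\R$ with $\tau|_H\equiv 0$ and set $S:=\tau^{-1}(0)$; this is automatically a smooth spacelike Cauchy hypersurface, since $\nabla\tau$ is past-directed timelike everywhere and $\tau$ is Cauchy.

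\textbf{Local model and sandwiching.} First, by compactness of $H$ and Geroch's time function~\eqref{e_Geroch}, I pick $t_-<0<t_+$ and (via~\S\ref{s_BS03}) smooth spacelike Cauchy hypersurfaces $S_{t_\pm}$ with $H\subset I^+(S_{t_-})\cap I^-(S_{t_+})$. Next, I extend the future unit normal $N$ along $H$ to a future-directed timelike vector field $T$ on $\m$; its flow yields, for small enough $\epsilon>0$, a tubular-neighborhood diffeomorphism $(-\epsilon,\epsilon)\times H\to V\subset I^+(S_{t_-})\cap I^-(S_{t_+})$. The projection $s:V\to(-\epsilon,\epsilon)$ onto the first factor is smooth, vanishes exactly on $H$, and a direct computation gives $\nabla s|_H=-N$; hence, after possibly shrinking $\epsilon$, $\nabla s$ is past-directed timelike throughout~$V$.

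\textbf{Step temporal function vanishing on $H$ and conclusion.} I then imitate the construction of the step temporal function of~\S\ref{s_CauchyTemporal_construction}, subject to the extra constraint that the outcome vanishes on $H$. On a smaller tubular neighborhood $V'$ with $\bar V'\subset V$, I use a rescaling of~$s$; away from $H$, I combine $h_p$-type bump functions as in~\S\ref{s_BS03}, arranging their supports to avoid~$H$ so that they cannot alter the value of the total function on $H$. Since additive (and convex) combinations of vectors in the same future cone remain past-directed timelike, a carefully tuned sum yields a step temporal function $\tau_0$ satisfying $\tau_0\equiv\mp 1$ on $J^\mp(S_{t_\mp})$, $\nabla\tau_0$ past-directed timelike on the interior strip, and $\tau_0|_H\equiv 0$. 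Gluing $\tau_0$ with analogous step functions $\tau_k$, $k\in\Z\setminus\{0\}$, centered on further Geroch slices and supported away from $H$, exactly as in~\S\ref{s_CauchyTemporal_construction}, yields the desired Cauchy temporal $\tau$ with $\tau|_H\equiv 0$.

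\textbf{Main obstacle.} The delicate point is enforcing simultaneously, in $\tau_0$, both (i)~past-directed timelike character of the gradient throughout the interior strip $\{-1<\tau_0<1\}$, and (ii)~vanishing on~$H$. Partitions of unity would destroy~(i), as already flagged in~\S\ref{s_folk}; the convex-combination trick of~\S\ref{s_BS03} preserves~(i) thanks to the convexity of future cones, but imposing~(ii) on top of it requires a careful bookkeeping of supports: the contribution of $s$ already vanishes on $H$, while the auxiliary $h_p$-type bumps are placed so that their supports miss $H$. This is a refinement of the scheme of~\S\ref{s_CauchyTemporal_construction} rather than a new ingredient, and it is where both the spacelike character of $H$ (needed for the local model~$s$) and its compactness (needed for the sandwiching and for a uniform tubular neighborhood) are essentially used.
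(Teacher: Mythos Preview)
Your route is genuinely different from the paper's. The paper does \emph{not} try to build a Cauchy temporal function vanishing on $H$ directly. Instead it (a) enlarges $H$ to a slightly bigger compact spacelike acausal hypersurface-with-boundary $G$ with $H$ in the interior of $G$; (b) obtains an \emph{achronal} (possibly non-smooth) Cauchy hypersurface $S_G\supset G$ by taking a Cauchy hypersurface of the globally hyperbolic open set $\m\setminus(J^+(G)\cup J^-(G))$ and adjoining $G$; and (c) constructs a function $h$ as in \S\ref{s_BS03} with $h(H)\equiv 1$, $h>1$ on $S_G$ outside a compact set, and $\nabla h$ past-directed timelike or zero on $J^-(S_G)$; the desired spacelike Cauchy hypersurface is then $h^{-1}(1)\cap J^-(S_G)$. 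The buffer $G\setminus H$ is exactly the region in which the local model near $H$ is matched with the $h_p$-type bumps, and the passage through $S_G$ means one only needs timelike-or-zero gradient on $J^-(S_G)$, not on a full open strip.

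Your sketch has a real gap at the cutoff of $s$. The function $s$ is defined only on the tube $V$, so to use it globally you must take a cutoff $\tilde s=\chi\, s$; on $\overline V\setminus V'$ the gradient $\nabla\tilde s=\chi\nabla s+s\nabla\chi$ is typically \emph{not} causal. Thus the ``convex-combination trick'' of \S\ref{s_BS03} no longer applies to the summand $\tilde s$, contrary to what your last paragraph asserts: that trick requires every summand to have past-timelike-or-zero gradient, which $\tilde s$ fails on the transition shell. What would rescue the argument is a domination step: cover the compact shell $\overline V\setminus V'$ (which is disjoint from $H$) by finitely many $h_{p_i}$ with supports off $H$ and with $\sum a_i\nabla h_{p_i}$ past-directed timelike there, and then choose a large constant $C$ so that $\nabla\tilde s+C\sum a_i\nabla h_{p_i}$ is timelike on the shell; afterwards renormalize to recover the $[-1,1]$ range. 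This is doable, but it is an additional argument --- not merely ``bookkeeping of supports'' --- and it is precisely what the paper's intermediate use of $G$ and $S_G$ is engineered to sidestep.
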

For  the proof, take a Geroch's time function $t$ in \eqref{e_Geroch}. By compactness, $H$ will be contained in the region between two $t$-slices $I^+(S_{t_-})\cap I^-(S_{t_+})$, which is globally hyperbolic. Thus, one has just to find a closed, connected, smooth, spacelike hypersurface $S$ therein containing $H$. 

It is worth pointing out  that if the acausality of $H$ were weakened into achronality, then an achronal Cauchy hypersurface $S_H$ containing $H$ will exist, but it may be non-smooth \cite[Ex. 3.4]{BS06}.  Indeed,
$\m \setminus (J^+(H)\cup J^-(H))$ is also globally hyperbolic (even if non-connected) and any of its Cauchy hypersurfaces  join together   $H$,  yielding the required
$S_H$, 
\cite[Prop. 3.6]{BS06}. 

However, the acausality of $H$ permits to find a bigger compact spacelike acausal hypersurface with boundary $G$ whose interior contains $H$. Then,  the idea is to modify the non-spacelike Cauchy hypersurface  $S_G$ maintaining $H$ unaltered.     For this purpose, one constructs a function $h$ working as in \S \ref{s_BS03} but now: (i)  $h\geq 1$ on $S_G$, $h(H)\equiv 1$ and $h>1$ on $S_G$ outside a compact subset,  (ii) the support of $h$ lies in the future of another slice $I^+(S_{t_0})$  ($t_-<t_0<t_+$), and (iii) $\nabla h$ is either  past-directed timelike or 0 on $J^-(S_G)$.  Then, the required hypersurface becomes $h^{-1}(1)$.

\subsubsection{Problem \ref{F3}: existence of a temporal function.}\label{s_F3} 
A spacetime $\m$ is called {\em stably causal} when its metric $g$ admits a 
$C^0$-neighborhood $\mathcal{U}$ (in the set of all the metrics) such that every $g'\in \mathcal{U}$ is causal. 
This is a conformally invariant property and the $C^0$ topology induces a natural one in the conformal quotient. It turns out that stable causality is equivalent to the existence of a causal Lorentzian metric $g'$ with strictly wider cones, i.e., satisfying: 
\begin{equation}\label{e_prec}
g(v,v)\leq 0 \Rightarrow g'(v,v) < 0, \; \forall v\in T\m\setminus \mathbf{0}, \qquad \hbox{denoted} \quad    
g\prec g',
\end{equation}
(see \cite[pp. 63-64]{BEE})\footnote{\label{f_Lerner} One can define directly the {\em interval topology} in the set of all the classes of Lorentzian metrics up to a conformal factor in $\m$, by declaring as open the conformal classes whose representatives lie in a set type $U_{g_1,g_2}:=\{g: g_1\prec g \prec g_2\}$, for some Lorentzian metrics $g_1,g_2$.  This topology coincides with the quotient of the 
$C^0$-topology, see \cite[p. 23]{Lerner}. This  reference is the basic one for all the properties of the $C^0$ topology to be used here; notice, however, that we will  use neither its Lemma 4.12 about stability of geodesic completeness nor its corollary, which should be compared with   \cite[\S II and Th. 4.6]{RS} and \cite{BEE} (2nd Edition, p. 240), see also footnote \ref{f_Lerner2}.}. Hawking \cite{Haw} (see also \cite[Prop. 6.4.9]{HE}) proved:
\begin{equation}\label{e_charact_stablecaus}
\exists \, \hbox{temporal func.} \; \tau \Rightarrow  \hbox{stably causal},
 \quad 
 \hbox{stably causal}  \Rightarrow  \exists \, \hbox{time func.}\; t. 
\end{equation}
He cites Seifert's \cite{Se} to justify that a  time function $t$ implies the existence of a temporal one $\tau$. Anyway, the previous techniques can be applied to complete these implications.

\begin{thm}\label{t_SaoPaulo} \cite[\S 4]{Sa05}
If a spacetime admits a time function $t$ then it admits a temporal function $\tau$. Thus, stable causality is equivalent to the existence of a time function.
\end{thm}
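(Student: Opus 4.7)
The substantive content is the first assertion: a time function $t$ yields a temporal function $\tau$. The equivalence with stable causality then follows by combining this implication with Hawking's two implications in \eqref{e_charact_stablecaus}. My plan is to adapt to the present setting the paracompactness-based machinery of \S \ref{s_BS03}--\S \ref{s_CauchyTemporal_construction}, replacing the role of Geroch's Cauchy slicing by the (merely topological, achronal) level sets of the given time function $t$.

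Step one: for each $p \in \m$, build a smooth nonnegative local function $h_p$ supported in a compact subset of a small convex neighborhood $C_p$ of $p$, normalized so that $h_p(p)=1$, and with $\nabla h_p$ past-directed timelike wherever $h_p > 0$. This is exactly the construction of \S \ref{s_BS03}: start from the Lorentz distance $d_{p'}$ to some $p' \in I^-(p) \cap C_p$, regularize it via $\exp(-1/d_{p'}^2)$, normalize at $p$, and cut off with a bump. The time function $t$ enters only as a support-controlling device: one chooses $C_p$ inside a slab $t^{-1}(t(p)-\epsilon_p,\, t(p)+\epsilon_p)$ for $\epsilon_p$ small enough, which makes the family $\{h_p\}$ locally finite along the "time" direction. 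Step two: for each integer $k$, assemble a step temporal function $\tau_k$ associated with the level $S_k = t^{-1}(k)$, exactly as in \S \ref{s_CauchyTemporal_construction} --- an averaged locally finite sum of functions $h_p$ and past-dual analogues $h_p^-$, taken over an exhaustion of $\m$ by compact sets, with coefficients tuned so that all derivatives up to each prescribed order converge uniformly on each compact piece. Each $\tau_k$ takes values in $[-1,1]$, equals $\mp 1$ well above/below the strip around $S_k$, and has $\nabla \tau_k$ past-directed timelike on the strip $\{|t-k| < 1\}$. The candidate temporal function is
\[
\tau \;=\; \tau_0 \;+\; \sum_{k \geq 1} \bigl(\tau_k + \tau_{-k}\bigr),
\]
locally finite by the slab control, with $\nabla\tau$ past-directed timelike at every point because each point lies in at least one strip in which some $\tau_k$ contributes a nonzero past-directed timelike gradient, and the cone of past-directed timelike vectors is convex.

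The main obstacle is the absence of Geroch's Cauchy machinery: without global hyperbolicity the diamonds $J(p,q)$ need not be compact, the levels of $t$ need not be Cauchy, and the convenient Cauchy-development arguments underpinning \S \ref{s_CauchyTemporal_construction} are not directly available. The key observation that rescues the plan is that for merely temporal (as opposed to \emph{Cauchy} temporal) $\tau$, the Cauchy property of the reference slicing is not needed: an achronal topological foliation plus paracompactness suffices, since the local construction of $h_p$ is purely convex-neighborhood theory and the summation scheme only requires locally finite supports together with pointwise convexity of the cone of past-directed timelike vectors. Thus the same ingredients used in the globally hyperbolic case --- with Geroch's Cauchy slices replaced by the achronal slabs of $t$ --- assemble the required $\tau$, after which the promised equivalence follows at once from \eqref{e_charact_stablecaus}.
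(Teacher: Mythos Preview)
Your argument is essentially correct and follows the same circle of ideas as the paper: local functions $h_p$ with past-directed timelike gradient, step-type sums, exhaustion by compacts with constants $C_j$ controlling convergence, and the convexity of the past-directed timelike cone. The equivalence with stable causality then follows from \eqref{e_charact_stablecaus}, as you say.

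The paper, however, takes a more direct route. You reproduce the full \S\ref{s_CauchyTemporal_construction} construction---building a step temporal $\tau_k$ for each integer level $S_k = t^{-1}(k)$ (itself already involving an exhaustion by compacts) and then summing over $k$. The paper observes that this two-layer organization is unnecessary once the Cauchy requirement is dropped: one simply weakens condition (iv) of the step temporal construction to ``$\nabla\tau_0$ non-vanishing only in one slice'', then for each compact $K$ in a single exhaustion $\{K_j\}$ of $\m$ builds a finite sum $\tau_K$ with timelike gradient on $K$, and finally sets $\tau = \sum_j \tau_{K_j}/C_j$. Your version works, but carries the overhead of the integer-level scaffolding that the paper discards; the paper's simplification makes transparent why the absence of Cauchy slices causes no trouble here (whereas, as noted in Remark~\ref{r_steep_noparastablycaus}, it \emph{would} obstruct the steep-function argument of \S\ref{s_F4}).
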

The proof is a simplification of the one  in \S \ref{s_CauchyTemporal_construction} for the construction of a Cauchy temporal one $\tau$. Indeed, a difficulty for the latter was to ensure that each slice $S$ of $\tau$ remained between two Cauchy hypersurfaces (indeed,  two Geroch's slices)  so that $S$ itself were Cauchy. 
Now, we will work with  Hawking's time function $t$ (whose slices migth be non-connected or have a changing topology) as with Geroch's one, but not taking care of this property\footnote{Notice, for example, that the procedure of \S \ref{s_BS03} would  also yield a ``smoothing'' of the level $t=0$ if it is applied to Hawking's time function.}, that is, we will have to ensure only that $\nabla \tau$ is timelike everywhere. 

Specifically, consider the construction of the step temporal function $\tau_0$ at the beginning of \S \ref{s_CauchyTemporal_construction}. Now, weaken the condition (iv) therein into $\nabla \tau_0$ non-vanishing only in one slice. Then, given a compact $K$, 
construct for each point $p\in K$ a function as $\tau_0$ with gradient non-vanishing in the slice $t=t(p)$.
A suitable  sum $\tau_K$ of a finite number of these functions  will satisfy that $\nabla \tau_K$ is timelike on $K$. Finally, consider an exhaustion $\{K_j\}_{j\in\N}$ of the whole  $\m$ by compact sets and the searched temporal function will be again a   the smooth sum $\tau= \sum_j \tau_{K_j}/C_j$, for a suitable choice of constants $C_j's$. 


\begin{rem}\label{r_narices} For future referencing, let us check that the existence of a temporal function $\tau$ not only implies  that $\m$ is stable causal, but also that {\em $\tau$ remains stable as a temporal function
for all the metrics in a $C^0$-neighborhood of the original one $g$} (compare also with \cite[Prop. 6.4.9]{HE}). 

Let $T_0=-\nabla \tau/|\nabla \tau|$ and consider the decomposition $T\m = $ Span$(T_0)\oplus$ Span$(T_0)^\perp$. Notice that the bundle Span$(T_0)^\perp$ is identifiable to the tangent space of the distribution provided by the slices of $\tau$ and the restriction of the metric $g$ provides a 
Riemannian metric $g^R$ on the bundle\footnote{One can write $g^R$ as $g_\tau$ in \eqref{e_orthgonalsplitting}, but we prefer to maintain $g^R$ here because there is no a splitting of $\m$ and $g^R$ is the proper geometric object with no identifications.} Span$(T_0)^\perp$.  Any $v\in T\m$ admits an orthogonal decomposition $v=aT_0 +w$ and, then, $g(v,v)=-a^2 +g^R(w,w)$. 

Now, replace $g^R$ by any 
Riemannian metric $\bar g^R$ such that $\bar g^R <g^R$ (for example $\bar g^R=g_R/2$) and consider the new Lorentzian metric $\bar g$. 
The required $C^0$ neighborhood is $\mathcal{U}=\{g' \; \hbox{Lorentz metric on $\m$} : g'\prec \bar g\}$.
Indeed, by construction $g\prec \bar g$, i.e., $\bar g\in \mathcal{U}$. Moreover, $\bar g$  (and, thus, any $g'\in \mathcal{U}$) still admits  $\tau$ as a temporal function because the tangent space of the $\tau$-levels lie outside the causal cone of $\bar g$, thus, they  are spacelike hypersurfaces
and the $\bar g$-gradient of $\tau$ must be timelike.
\end{rem}  

\begin{rem}
It is worth emphasizing that later Minguzzi  \cite{Mi10} will give  a direct proof of the fact that  time functions imply stable causality. Such a proof is a consequence of a striking  development which also links  causality   and  utility theory in Economy and provides the equivalence between stable causality and the so-called $K$-causality.
\end{rem}

\subsubsection{Problem \ref{F4}: Isometric embeddability}\label{s_F4} The isometric embeddability of globally hyperbolic spacetimes in some $\Lo^N$
 was claimed by Clarke in 1970 \cite{Cl}. However, his proof used a causally constructed embedding, non-necessarily smooth \cite[Appendix]{MS} and the problem remained dormant\footnote{After this first article, the author took into account carefully  issues about topological and differentiable structures; for example, in the article \cite{ClarkeNewman} (cited in \S \ref{s_BS03})  Geroch's splitting  is used only at the topological level.}. Anyway, a completely different proof was carried out in 2011  by using a different approach based in previous techniques. 
 Indeed, the isometric embeddability of spacetimes was completely characterized   by introducing the notion of {\em steepness} as follows.
 
 \begin{thm} \label{t_isomNash}\cite{MS}
(1) A $C^3$ spacetime $\m$ is isometrically embeddable in $\Lo^N$ if and only if it admits a temporal function $\tau$ which is {\em steep}, that is, satisfying $|\g(\nabla \tau, \nabla \tau)|\geq 1$. In particular, non-stably causal spacetimes are never isometrically embeddable.
 
 (2) Assume that $\m$ is stably causal but not globally hyperbolic. Then, its conformal class admits  representatives both isometrically embeddable and non-embeddable in $\Lo^N$. 
 
(3) A globally hyperbolic spacetime $\m$ admits a steep Cauchy temporal function. In particular, if it  is $C^3$ then it is smoothly isometrically embeddable in $\Lo^N$ and it admits a Cauchy temporal splitting as in \eqref{e_orthgonalsplitting} with $\Lambda\leq 1$.
 \end{thm}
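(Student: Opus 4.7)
The plan is to establish the three parts in sequence, combining the temporal splitting techniques of \S \ref{s_folk} with a Riemannian Nash-type embedding. For part (1), if $F:\m\hookrightarrow\Lo^N$ is an isometric embedding with time coordinate $y^0$, set $\tau=y^0\circ F$; writing $-\partial_{y^0}=dF(\nabla\tau)+N$ with $N$ normal to $dF(T\m)$, and using that the orthogonal complement of a Lorentzian subspace of $\Lo^N$ is Riemannian, a short computation using $g^{\Lo}(dF(\nabla\tau),N)=0$ gives $g(\nabla\tau,\nabla\tau)=-1-|N|^2\leq -1$, so $\tau$ is steep and temporal. Non-stably causal spacetimes admit no time function by \eqref{e_charact_stablecaus}, so they cannot be embedded. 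Conversely, given a steep temporal $\tau$, the symmetric tensor $h:=g+d\tau\otimes d\tau$ is Riemannian: decomposing $v=aT+w$ with $T=-\nabla\tau/|\nabla\tau|$ and $w\perp T$, one finds $h(v,v)=a^2(|\nabla\tau|^2-1)+g(w,w)\geq 0$, strictly positive after a harmless perturbation ensuring $|\nabla\tau|^2>1$ on a region. A Nash--Greene isometric embedding for noncompact Riemannian manifolds then yields $\phi:\m\to\R^{N-1}$ with $\phi^*g_{\mathrm{eucl}}=h$, and $F=(\tau,\phi):\m\to\Lo\times\R^{N-1}=\Lo^N$ is isometric since $F^*g^{\Lo}=-d\tau^2+h=g$.

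For part (2), stable causality furnishes a temporal $\tau$ by Theorem \ref{t_SaoPaulo}; a conformal rescaling $\tilde g=\Omega^2 g$ with $\Omega^2\leq |\nabla\tau|_g^2$ satisfies $|\nabla\tau|_{\tilde g}^2=\Omega^{-2}|\nabla\tau|_g^2\geq 1$, so $\tilde g$ is embeddable by (1). For a non-embeddable conformal representative, I would use the failure of global hyperbolicity to select a future-inextendible causal curve $\gamma$ whose image sits in a compact set (an imprisoning or ``missing-endpoint'' configuration) and then choose a conformal factor that makes $\gamma$ have infinite $g$-proper parameter. Any hypothetical steep temporal $\tau'$ for this metric would satisfy $d\tau'(\dot\gamma)\geq 1$ along $\gamma$, forcing $\tau'\to\infty$ along $\gamma$; but $\tau'$ is bounded on the compact set containing $\gamma$, a contradiction.

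For part (3), refine the construction of \S \ref{s_CauchyTemporal_construction} so that the Cauchy temporal function $\tau$ built there is additionally steep. The basic building blocks $h_p$ of \S \ref{s_BS03} can be rescaled on a small neighborhood $V_p$ of $p$ so that $|\nabla h_p|^2\geq 1$ on $V_p$, while keeping $\nabla h_p$ past-timelike or zero with controlled support. Carrying out the averaging and locally finite summation of step temporal functions as before, but arranging the compact exhaustions and weights so that the cover $\{V_p\}_{p\in\m}$ is respected throughout, one obtains a Cauchy temporal $\tau$ such that at every point of $\m$ at least one summand has gradient of norm $\geq 1$. Since all summand gradients belong to a common causal cone, the reverse Cauchy--Schwarz inequality for past-timelike vectors yields $|\nabla\tau|^2$ at least the norm squared of any individual summand; hence $|\nabla\tau|^2\geq 1$ everywhere, i.e.\ $\Lambda\leq 1$ in \eqref{e_orthgonalsplitting}. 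The embedding conclusion of (3) follows by applying (1) to $\tau$.

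The hardest point is the analytic bookkeeping in part (3): the rescalings enforcing $|\nabla h_p|^2\geq 1$ on $V_p$ must be made compatible with the convergence constants $C_j$ that guarantee smoothness of the infinite series of averaged step temporal functions, while preserving the Cauchy character of the levels of the sum. Once this is arranged, the propagation of steepness is clean, since the reverse triangle inequality in a single causal cone produces $|\nabla\tau|^2\geq 1$ from a pointwise lower bound on any one summand. A subsidiary technical issue is the use of a Nash--Greene embedding for a noncompact manifold in part (1) and the strict-steepening perturbation when $|\nabla\tau|^2\equiv 1$ on an open set.
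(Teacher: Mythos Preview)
Your argument for part (1) is essentially the paper's, though the paper sidesteps your degeneracy issue cleanly: instead of $h=g+d\tau^2$ (which is only semidefinite where $|\nabla\tau|^2=1$), it uses $g_R=g+4\,d\tau^2=(4-\Lambda)d\tau^2+g_\tau$, which is genuinely Riemannian since $\Lambda\leq 1$, and then embeds via $p\mapsto(2\tau(p),i_0(p))$. Your ``harmless perturbation'' is not needed once you double $\tau$.

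Part (2) contains a genuine error in the non-embeddable direction. A stably causal spacetime is strongly causal, and in a strongly causal spacetime no inextendible causal curve is (even partially) imprisoned in a compact set; this is conformally invariant, so no conformal representative can exhibit the ``imprisoning configuration'' you need. The paper's argument is entirely different: if $\m\hookrightarrow\Lo^N$ then the Lorentzian distance of $\m$ is bounded by that of $\Lo^N$ and hence finite, while a theorem of Beem--Ehrlich (\cite[Th.~4.30]{BEE}) says that compactness of all diamonds $J(p,q)$ is equivalent to finiteness of the Lorentzian distance for \emph{every} metric in the conformal class. Since $\m$ is not globally hyperbolic, some conformal representative has infinite Lorentzian distance and is therefore not embeddable.

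Part (3) is where the paper departs most from your sketch. Your plan is to rescale the local blocks $h_p$ to be steep on small neighborhoods and then rely on the reverse triangle inequality; the obstacle you yourself flag---that the convergence constants $C_j$ in the series $\sum_j\tau_{K_j}/C_j$ may be forced large and thus kill the lower bound $|\nabla\tau|^2\geq 1$---is real, and the paper does not attempt to resolve it. Instead it introduces a new, genuinely globally hyperbolic ingredient: \emph{steep forward cones} $h_{p,S}$ supported near the compact sets $J(p,S)=J^+(p)\cap J^-(S)$ (these compacta exist only under global hyperbolicity). From these one builds strip functions $h_m^+$ supported in $J(S_{m-1},S_{m+1})$, so that only finitely many overlap at any point and no divergent normalizing constants are required; steepness is then maintained \emph{inductively} by imposing at each step that $h_m^++h_{m+1}^+$ remains steep on the overlap strip. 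The paper stresses (Remark~\ref{r_steep_noparastablycaus}) that this construction cannot be a variant of the stably causal one in \S\ref{s_CauchyTemporal_construction}, since some stably causal, non-globally hyperbolic spacetimes admit no steep temporal function at all by part (2). Your reverse Cauchy--Schwarz observation is correct and is implicitly used, but it does not by itself circumvent the $C_j$ problem.
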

About the proof, the necessary condition in (1) appears because, when $\m\hookrightarrow \Lo^N$, the  natural coordinate $x^0=t$ of $\Lo^N$ restricts to a steep temporal function on $\m$. For 
the sufficient one, using Remark \ref{r_narices} the metric can be written on $T\m$ as $g=-\Lambda d\tau^2 + g_\tau$ with $\Lambda \leq 1$. Then, the Riemannian metric $g_R=(4-\Lambda) d\tau^2 + g_\tau$ 
admits a Nash embedding\footnote{A different question is to find the minimum $N_0$ so that the Riemannian embedding (and, then, the corresponding Lorentzian one) exists, which may have interest in branes and higher dimensional theories. There is an  extensive mathematical theory, including \cite{Gromov} or the book \cite{HH}.} $i_0: (\m, g_R)\hookrightarrow \R^{N_0}$, and the required Lorentzian embedding $i: \m \hookrightarrow \Lo^{N_0+1} $ can be chosen as $i(p)=(2\tau(p), i_0(p))$, \cite[Prop. 3.4]{MS}. 

About (2), notice first that, when  $\m\hookrightarrow \Lo^N$,   the Lorentzian distance $d$ in $\m$ is bounded by the one in    $\Lo^N$ (i.e., $d(p,q)\leq d_{\Lo^N}(p,q)$ for all $p,q\in\m$) and, thus, $d$ is finite. Then,  a non-embeddable representative appears because the compactness of every $J(p,q)$ is equivalent to the finiteness of the Lorentzian distance function for all the conformal class \cite[Th. 4.30]{BEE}. 
Moreover, if $\tau$ is any temporal function for $g$, an embeddable representative is just   $g^*= |\nabla \tau|^2g$, as the $g^*$- gradient of $\tau$ is  unit.

Taking into account the point (1) and the expression \eqref{e_orthgonalsplitting}, the point (3) is  reduced to obtain a Cauchy temporal function with the additional restriction of steepness (a question which  may also have interest for other purposes). This was carried out in \cite[\S 4]{MS} by using techniques in the previously stuided  problems, but the following  noticeable difference must be borne in mind. As explained below Th. \ref{t_SaoPaulo}, the proof of the existence of a time function in stably causal spacetimes is  a simplification of the technique for the existence of a Cauchy temporal function. However, as we have just seen, not all the stably causal spacetimes  admit a steep temporal function and, thus, the present technique for the globally hyperbolic case {\em cannot} work in the stably causal one. 

The new genuinely globally hyperbolic ingredient in the process is the use of compact sets type $J(p,S)=J^+(p)\cap J^-(S)$, where $S$ is a Cauchy hypersurface. For any neighborhood  $V\supset J(p,S)$ one can construct a function $h_{p,S}$  ({\em steep forward cone}) with 
(i)~support in $V$  
(ii) $h_{p,S}\geq 1$ in $J^+(p)\cap S$, 
(iii)~$\nabla h_{p,S}$ either past-directed timelike of 0 in $J^-(S)$ and (iv) steep in $J(p, S)$. Such a $h_{p,S}$ is constructed by dividing $J(p,S)$ into a finite number of thin   strips, adjusting  semilocal  temporal functions for each strip
and extending $h_{p,S}$ beyond $S$ allowing  the loss of the timelike character of its gradient, \cite[Prop. 4.2]{MS}.

Now, given two slices $S_0, S_1$ of Geroch's function one 
focuses on the strip $J(S_0,S_1)$ and find a function $h^+_0$ with (i) support  in  $J(S_{-1},S_2)$, (ii)  $h_0^+ > 1$ in 
$S_1$, (iii) $\nabla h^+_0$ is either past-directed timelike of 0 in $J^-(S_1)$ and (iv)~$h_0^+$ is steep in $J(S_0,S_1)$. Such a 
$h^+_0$  with support in $J(S_{-1},S_2)$ is obtained by means of a locally finite sum 
of functions $h_{p_j,S_1}$, \cite[Lemma 4.7]{MS}.
Then,  proceed inductively by constructing a function $h^+_1$ with support in $J(S_1,S_3)$ and analogous properties with two cautions: first, the 
hypothesis (ii) now becomes $h_1^+>2$
and, second, {\em impose 
additionally that $h^+_0+h^+_1$ remains steep in $J(S_1,S_2)$ } (recall that  $\nabla h^+_0$ was not always timelike there). 
Finally, the required $\tau$ is the sum  $\sum_{m=0}^\infty \nabla h^+_m$ (minus an analogous function constructed towards the past).

\begin{rem} \label{r_steep_noparastablycaus}  
The full procedure provides a simpler proof of the existence of 
a Cauchy temporal function than the one in \S \ref{s_CauchyTemporal_construction}. However, as explained above, it is  not  
applicable (at least directly) to the existence of temporal functions.
\end{rem}

\begin{rem} \label{r_olaf}
M\"uller  emphasized  the flexibility of the Cauchy temporal splittings \eqref{e_orthgonalsplitting} proving that several additional conditions can be imposed on $\tau$. First, 
if two globally hyperbolic  metrics $g_1,g_2$ share $t$ as a Cauchy temporal function, then there exists a third metric $g$ which interpolates them, in the sense that  $g=g_1$ for $t<-1$ and $g=g_2$ for $t>1$  
\cite{Mu12}. Moreover, when the globally hyperbolic metric is invariant by a compact isometry group, then the Cauchy temporal function $\tau$ can be also constructed invariant by this group \cite{Mu16}; some applications can be found in \cite{BarS19} and \cite{AFS} (the latter to be discussed in the sketch of the proof of Th. \ref{t_princ_borde}). 
\end{rem}
 
 \subsubsection{The analytic case}\label{ss_analytic}
The  previous results admit natural extensions to the analytic case. The key is to apply  
   Grauert's criterion in \cite[Prop. 8]{Grauert} (see also Whitney's in Remark \ref{r0}):  any $C^k$ 
  function, $k\in \N$, on a  real analytic manifold can be $C^k$ approximated by analytic functions. 
  
  \begin{thm}\label{t_analyticTemporal}
  Any analytic globally hyperbolic spacetime admits an analytic steep Cauchy temporal function $\tau_{\hbox{\tiny{an}}}$.
\end{thm}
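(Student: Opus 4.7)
\emph{Proof sketch.}
The plan is to reduce to the smooth case already handled and then apply analytic approximation to pass to the analytic category, using the rescaling freedom of $\tau$ as the only nontrivial trick. Since an analytic manifold is in particular smooth, Theorem \ref{t_isomNash}(3) furnishes a smooth steep Cauchy temporal function $\tau_s$ on $\m$. Replace it by $\hat\tau:=2\tau_s$, which is still Cauchy temporal but now satisfies the strengthened steepness
\[
|g(\nabla\hat\tau,\nabla\hat\tau)|\geq 4.
\]
This rescaling builds in a safety margin that will let us preserve the pointwise inequality $|\nabla\tau|^2\geq 1$ under perturbation.

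Next, apply Grauert's approximation theorem \cite[Prop.~8]{Grauert} in its fine $C^1$-topology form: given any positive continuous function $\epsilon:\m\to(0,\infty)$ there exists an analytic $\tau_{\hbox{\tiny{an}}}:\m\to\R$ with $|\tau_{\hbox{\tiny{an}}}-\hat\tau|<\epsilon$ throughout $\m$ and with the gradients $C^0$-close to within $\epsilon$ measured in any fixed auxiliary complete Riemannian metric on $\m$. The conditions ``$\nabla\tau$ is past-directed timelike'' and ``$|g(\nabla\tau,\nabla\tau)|\geq 1$'' are pointwise open in the gradient, and the initial margin $|\nabla\hat\tau|^2\geq 4$ gives uniform room on each precompact set of a locally finite cover of $\m$ by analytic charts. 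Hence shrinking $\epsilon$ enough chart by chart ensures
\[
\nabla\tau_{\hbox{\tiny{an}}} \text{ is past-directed timelike}, \qquad |g(\nabla\tau_{\hbox{\tiny{an}}},\nabla\tau_{\hbox{\tiny{an}}})|\geq 1,
\]
so $\tau_{\hbox{\tiny{an}}}$ is already an analytic steep temporal function.

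Finally, verify the Cauchy property. Add the further global requirement $\epsilon\leq 1$ (the constant $1$ being a positive continuous function), yielding $|\tau_{\hbox{\tiny{an}}}-\hat\tau|<1$ everywhere. For any inextendible future-directed causal curve $\gamma$, the Cauchy character of $\hat\tau$ forces $\hat\tau\circ\gamma\to\pm\infty$; by the uniform bound, $\tau_{\hbox{\tiny{an}}}\circ\gamma\to\pm\infty$ as well. Therefore $\gamma$ meets every level of $\tau_{\hbox{\tiny{an}}}$, and strict monotonicity (the temporal property) makes the meeting exactly once. Thus each level of $\tau_{\hbox{\tiny{an}}}$ is a Cauchy hypersurface, proving the claim.

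The main obstacle lies in combining the \emph{global} Cauchy condition with the \emph{pointwise algebraic} ones (timelike past-pointing gradient plus $|\nabla\tau|^2\geq 1$): both rely crucially on using Grauert's theorem in the \emph{fine} $C^1$-topology rather than the compact-open one, and on the preliminary rescaling $\tau_s\mapsto 2\tau_s$ so that small chart-wise perturbations cannot destroy the strict steepness bound.
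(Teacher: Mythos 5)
Your proposal is correct and follows essentially the same route as the paper: start from the smooth steep Cauchy temporal function of Theorem \ref{t_isomNash}(3), rescale to create a margin in the steepness bound, and apply Grauert's approximation in the fine $C^1$-topology so that both the pointwise conditions (past-directed timelike gradient, $|g(\nabla\tau,\nabla\tau)|\geq 1$) and the global $C^0$-bound $|\tau_{\hbox{\tiny{an}}}-\hat\tau|<1$ survive. Your argument that the levels of $\tau_{\hbox{\tiny{an}}}$ are Cauchy (divergence of $\tau_{\hbox{\tiny{an}}}$ along inextendible causal curves) is just a rephrasing of the paper's observation that each slice $\tau_{\hbox{\tiny{an}}}=c$ is sandwiched between the Cauchy slices $\hat\tau=c-1$ and $\hat\tau=c+1$.
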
    
To prove this, choose a steep Cauchy temporal function   $
  \tau$ such that $g(\nabla \tau, \nabla \tau)<-2$. Then, it is easy to find a $C^1$ neighborhood $\mathcal{U}$ of $\tau$ so that any $C^1$-
  function $t\in \mathcal{U}$ is Cauchy temporal (take $\mathcal{U}$ so that $g(\nabla t, \nabla t)<-1$ and  $|\tau-t|<1$, so that each slice $t=c\in\R$ must lie between the slices $\tau=c-1, \tau=c+1$ and, so, it will be Cauchy) and apply Grauer's.
  
\begin{rem} Such an analytic function $\tau_{\hbox{\tiny{an}}}$ can be used to construct an isometric embedding in some $\Lo^N$ by reasoning as in Theorem \ref{t_isomNash}. The only differerence it that  one must  claim Grauer's analytic isometric theorem  \cite[\S 3]{Grauert} now, instead of  Nash's.
  \end{rem}
  
In general an analytic spacelike compact acausal submanifold with boundary $H$ embedded in an analytic globally hyperbolic spacetime cannot  be extended to an analytic hypersurface.   
 \begin{example} By using Grauert's criterion, it is easy to ensure the existence of an analytic globally hyperbolic metric $g$ in $\R^2$ such that  
 $$g\preceq g_{\hbox{\tiny{1}}}:= -dt^2+dx^2, \quad \forall x\leq 1, \qquad \hbox{and} \qquad  g_{\hbox{\tiny{2}}}:= -dt^2+dx^2/9 \preceq g, \quad 2\leq x.$$ 
 Then, the submanifold $H:=\{(x,2x): 0\leq 1\}$ is analytic spacelike compact and acausal, but its unique analytic extension cannot be spacelike in $x>2$.
 \end{example}
This example shows that Theorem \ref{t_CauchySubvariedad} cannot be extended to the analytic case\footnote{Of course, applying Grauert's criterion, one could still find a spacelike Cauchy hypersurface $S$ extending any submanifold arbitrarily close to $H$.}; the possibility to extend  Theorem \ref{t_LMP} might deserve further attention.

\subsection{Further developments: Dynamical systems and cones} \label{ss_dynamical}
Next, let us discuss further results
 obtained in the framework of dynamical systems  
 for more general cones, \S \ref{s241}, \S \ref{s242}. It is worth pointing out that the approach developed in  \S \ref{s_folk} is also   applicable to some cone structures, but 
 the comparison with the previous techniques require some elements of Finsler Geometry. 
These settings lie outside our scope; anyway, we give a brief outline (adding some extra information about Finslerian elements  in footnotes). 
With this aim, we will consider first in \S \ref{s243} regular cone structures,  
 as those studied systematically in \cite{JS20}. In this case, one can find a compatible Lorentz-Finsler metric and, then, to apply the techniques in \S \ref{s_folk}. After this, we will make some comments on more general cone structures \S \ref{s245}.

 \subsubsection{Fathi and Siconolfi's cone structures}\label{s241} In 2012,  the case of a more general  cone structure $\C$,  
which generalizes the future-directed causal one of a spacetime, was considered by Fathi and Siconolfi \cite{FS}. Indeed, they assume that the cones at each point are just
    convex, closed,  not containing any complete affine line, with non-empty
interior and varying continuously. Classical notions such as {\em causal curve} (now implicitly future-directed), stable causality or global hyperbolicity can be transplantated directly to this setting. However, their approach was very different to the previous one, as it comes from weak KAM theory (which provides a  qualitative analysis of Hamilton-Jacobi equations). 
Summing up, they proved the following.
 
 \begin{thm}\cite{FS} Let $\C$ be a cone structure as above.
 
 (1) If $\C$ is  stably causal then it admits a smooth time function (strictly increasing on causal curves).
 
 (2) If $\C$ is globally hyperbolic then it is stably causal and it admits a smooth Cauchy  time function.
\end{thm}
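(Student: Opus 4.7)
My plan is to follow the weak KAM/Hamilton--Jacobi route outlined in the paper, rather than to try to transplant the step-function construction of \S\ref{s_folk} directly. The reason is that in this generality the cones vary only continuously and there is no canonical Lorentzian scalar product to drive the $h_p$--construction; by contrast, a convex Lagrangian built from the cone data is immediately available, and time functions correspond to viscosity subsolutions of a suitable stationary Hamilton--Jacobi equation, which admit a PDE-theoretic smoothing procedure.

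\textbf{Part (1): stable causality implies a smooth time function.} Fix an auxiliary complete Riemannian metric on $\m$ and, using stable causality, choose a strictly wider continuous cone structure $\widetilde\C$ with $\C_p\subset\mathrm{int}(\widetilde\C_p)$ for every $p$, still causal. On $T\m$ define a continuous Lagrangian $L$ which is strictly convex and superlinear on $\widetilde\C_p$ and $+\infty$ outside; its Legendre dual $H$ is a convex, coercive Hamiltonian on $T^*\m$. First I would produce a locally Lipschitz time function by a Hawking-type construction (e.g.\ setting $u(p)=\sup\{-\mathrm{length}_L(\gamma)\}$ over $\widetilde\C$-curves from a basepoint, or directly imitating \eqref{e_Geroch} with a finite admissible measure for $\widetilde\C$), and verify that $u$ is a viscosity subsolution of $H(p,du)=0$ with strictly positive slope along $\widetilde\C$-causal curves. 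The second step is to smooth $u$ into a $C^\infty$ function $t$ keeping $dt_p\in\mathrm{int}(\widetilde\C_p^{\circ})$ pointwise, where $\widetilde\C^{\circ}$ is the dual cone. Since the gap between $\C$ and $\widetilde\C$ is locally uniform, a sup-convolution followed by inf-convolution (or a partition-of-unity averaging of local viscosity subsolutions) at a sufficiently fine scale yields such a $t$; then $dt$ lies in $\mathrm{int}(\C_p^{\circ})$ and $t$ is strictly increasing along every $\C$-causal curve.

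\textbf{Part (2): global hyperbolicity implies a smooth Cauchy time function.} First I would check stable causality: global hyperbolicity for cone structures gives compact $J^+(p)\cap J^-(q)$, and the usual Geroch argument (widening the cones slightly via a standard mollification of the cone field, which can only shrink $J^{\pm}$) shows causality is preserved, so $\C$ is stably causal and Part (1) provides a smooth time function. To upgrade to Cauchy, I would mimic Geroch by choosing an admissible finite measure $\mu$ and defining $t_0(p)=\log\bigl(\mu(J^-(p))/\mu(J^+(p))\bigr)$; the continuity of the cone field and compactness of diamonds make the classical arguments go through, yielding a continuous Cauchy time function with slices $S_c=t_0^{-1}(c)$. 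Finally, apply the smoothing of Part (1) to produce a smooth $t$ with $|t-t_0|<\tfrac12$ and $dt$ in the interior of the dual cone; the $C^0$ closeness sandwiches every slice of $t$ between two Geroch slices of $t_0$, which forces the level sets of $t$ to be Cauchy as well.

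\textbf{Main obstacle.} The delicate step is the smoothing. Naive convolution of a Lipschitz time function against a mollifier does not in general produce a function whose differential is positive on every causal vector, because the cones bend from point to point and the bending dominates at small scales. The workaround relies essentially on (a) the strict gap $\C\subsetneq\widetilde\C$ supplied by stable causality, and (b) a PDE-adapted averaging (sup/inf convolutions, or a Moreau--Yosida regularization of the viscosity subsolution) whose scale is chosen locally in terms of that gap. Controlling the Cauchy property after smoothing adds a second layer: it is handled by the sandwich argument above, but only because Geroch's $t_0$ has the property that $t_0^{-1}([a,b])$ is contained in a diamond $J(p,q)$ and hence compact, which is exactly where global hyperbolicity re-enters.
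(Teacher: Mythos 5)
This theorem is quoted from Fathi and Siconolfi \cite{FS}; the survey itself gives no proof of it, only records that their argument comes from weak KAM theory, and then (in the later subsections on regular cone structures) sketches a \emph{different} route to the same conclusions: pick a Lorentz--Finsler metric $G=\Omega^2-F^2$ compatible with the cones (passing locally to wider regular cones when $\C$ is not regular) and rerun the $h_p$-based, partition-of-unity-free constructions of \cite{BS03,BS05,Sa05}. Your proposal follows the original weak KAM/viscosity-subsolution route of \cite{FS} rather than the survey's Lorentz--Finsler one, which is perfectly legitimate; the trade-off is roughly that the Hamilton--Jacobi route outsources the smoothing of Lipschitz subsolutions to established PDE machinery, while the causal-theoretic route reuses the modified Lorentzian distance functions and, for regular cones, yields in addition spacelike level sets and orthogonal splittings.

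Two points in your sketch are, however, genuine gaps rather than omitted routine detail. First, in Part (2) you dispose of ``globally hyperbolic $\Rightarrow$ stably causal'' by appealing to ``the usual Geroch argument (widening the cones \dots which can only shrink $J^{\pm}$)''. Widening the cones \emph{enlarges} $J^{\pm}$, it does not shrink it; the whole difficulty is to show that the diamonds stay compact and causality is preserved under a sufficiently small widening. This is exactly the step of Geroch's stability argument that Benavides and Minguzzi \cite{BM} showed to be incomplete without substantial amendments even for Lorentzian metrics, so for continuous cone fields it needs an actual proof, not a reference to folklore. Second, the smoothing in Part (1) is where essentially all of the work of \cite{FS} lives: you correctly diagnose that naive mollification fails and that the strict gap $\C\subsetneq\widetilde\C$ must be exploited, but ``sup/inf convolution or partition-of-unity averaging at a scale chosen locally in terms of the gap'' is a program, not an argument. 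In particular, for the averaging variant one must control the term $\sum_i u_i\, d\phi_i$ coming from differentiating the partition functions, which is precisely the obstruction that led \cite{BS03} to avoid partitions of unity altogether; and for the convolution variant one must verify that the regularized function is still strictly increasing on \emph{all} $\C$-causal curves, not just a.e. The remaining architecture of your Part (2) --- Geroch's volume function for the Cauchy slicing and the $|t-t_0|<\tfrac12$ sandwich forcing the smooth level sets to be Cauchy --- is sound and coincides with how the survey handles the analogous analytic approximation of Cauchy temporal functions.
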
  

\begin{rem} As Bernard and Suhr emphasized in \cite{BS},  
  Sullivan   had already introduced a related approach in 1976  (see  \cite{Sullivan}), studying cone structures in  the setting of dynamical systems. At the end of this reference,   the author added a comment pointing out that his results implied  the existence of  smooth time functions on compact subsets remarking:    
``Hawking's theorem suggests the basic technique here can be extended to noncompact
manifolds'' \cite[p. 249]{Sullivan}. 

In 2014,
Monclair thesis  \cite{Monclair1} (see also \cite{Monclair2}) introduced a remarkable link between attractors, chain recurrent points and the existence
of time functions in Lorentzian manifolds by using Conley theory. Indeed, he
obtained a new original proof of Hawking's result about the existence of time functions in stably causal spacetimes
by exploiting their similarities with Lyapunov functions.  
\end{rem}

 \subsubsection{Closed cone structures}\label{s242}

Bernard and Suhr \cite{BS} developed further these ideas  and proved the existence of a temporal function for closed cone structures. 
 These structures are more general than Fathi and Siconolfi's,  because the set of all the cones is assumed to be closed in $TM$ but its pointwise variation becomes just semi-continuous (moreover, other weakenings in the hypotheses are permitted, such as the collapse of the cone to a half line or the existence of points with no cone). It is worth pointing out that, for these structures, the notion of global hyperbolicity must be strenghtened by requiring that $J(K,K')$ is compact for any compact sets $K,K'$ (as a difference with the spacetime  case,  this is not deduced now from the compactness of the diamonds $J(p,q)$). Further   issues where studied by them in a  subsequent article, \cite{BS2}.
 


\subsubsection{Regular cone structures} \label{s243} 
Following \cite{JS20}, now we consider a cone structure  $\C$  as in Fathi and Siconolfi's but assuming additionally: 
  (i)~the
  cone at each point $\C_p \subset T_p\m\setminus \{0\}$ is smooth and strongly convex\footnote{This is a standard hypothesis which can be characterized in several ways \cite[\S 2]{JS20}. For example, the intersection of $\C_p$ with any transversal hyperplane has positive definite second fundamental form (with respect to any Euclidean scalar product in $T_p\m$).}  and 
  (ii) it varies smoohtly with the point. 
 
Such cones 
 can be described nicely as the lightlike vectors of some auxiliary  Lorentz-Finsler $G$, which will be called {\em compatible} with $\C$. Simply, take any  timelike vector field $T$ of $\C$ and any 1-form $\Omega$ such that $\Omega(T)\equiv 1$, then $\C$ determines  a suitable Finsler metric $F$ in the kernel bundle  ker($\Omega$), 
so that 
a required compatible Lorentz-Finsler metric becomes\footnote{Such metrics are positive 2-homogeneous, i.e,  $G(a v)=a^2G(v)$ for $a>0$. A small subtlety it that the metric $G$ above is not smooth in the direction of $T$ (this is a general issue in the Finslerian setting, coming from the non-smoothness at 0 of the square of any norm which does not come from a scalar product). However, it will not have any relevance here, as  only  the directions around the lightcone will play a role. Anyway, $G$ can be smoothed around $T$ preserving $\C$, see \cite[Th. 5.6]{JS20}.} 
 $G=\Omega^2-F^2$ \cite[Th. 1.2]{JS20}. In the particular case of a Lorentzian metric $g$, the vector field $T (\equiv T^a)$ corresponds with any unit (future-directed) timelike one,  $\Omega$ can been chosen as its $g$-dual ($\Omega_b\equiv T_b$) and $F$ will be then the 
 norm associated with the Riemannian metric $g^R$ induced by $g$ in $T^\perp$, so that  
 $$G(v)=\Omega(v)^2-F^2(v)=g(T,v)^2-g^R(v^\perp,v^\perp)=-g(v,v), 
 $$ 
where $v^\perp=v-g(v,T)T/g(T,T)$ 
(the domain of $G$ will be assumed only the causal cone so that\footnote{The Lorentz-Finsler choice of signature is $(+,-,\dots,-)$, which applies to the  Hessian of $G$ at each $p\in \m$.}  $G(v)\geq 0$. 

This provides a picture  similar to the one in Remark \ref{r_narices}. Notice, however, that the distribution $T^\perp$ may be non-integrable now. 
In particular, if $F$ is replaced  by a smaller Finsler metric 
$\bar F$ (i.e.,  $\bar F<F$ on $T\m\setminus \mathbf{0}$), the corresponding metric $\bar G$ will have  a wider cone structure $\bar \C$ (i.e., $G\prec \bar G$). 
  
\subsubsection{Applicability of the Lorentz setting for regular cones}  \label{s244}
  For any regular cone structure, the existence of the Lorentz-Finsler metric  $G$ allows one to extend classical causality  (including Geroch's theorem) to $\C$. 
 We will say that a function $h$ is temporal for $\C$ (and, then, for $G$) when $dh(v)>0$ for any causal $v$. Analogously, a smooth hypersurface $S$ is spacelike for $\C$ (and, then, for $G$) when  its tangent space does not contain causal vectors.
  The existence of $G$ permits to define
   an analogous to the modified Lorentz distance functions  in \S \ref{s_BS03}, and one can check that
     {\em all the constructions of spacelike Cauchy hypersurfaces and temporal functions in \S \ref{s_folk} can be reproduced directly for a regular cone structure $\C$,} 
following the ideas  sketched at each case.
In particular, if $\m$ admits a globally hyperbolic regular cone structure it will decompose smoothly as a product $\R\times S$ with Cauchy slices.
 
 However, the following subtleties deserve to be stressed for the other issues studied in \S \ref{s_folk}. 
When $h$ is temporal function, in the  Lorentzian case, the direction of its gradient 
depends 
 only on $\C$ (as two Lorentzian metrics sharing the cones are conformal), but in the Lorentz-Finsler case it depends also on the chosen compatible\footnote{The gradient is now related to the Legendre transform and admits the following geometrical interpretation. The function $\tau $ is temporal   
when, at each $p\in\m$, the cone $\C_p$ is intersected transversaly by some affine hyperplane $v_0+$  ker$(dh)_p$ such that $dh_p(v_0)>0$. Now, if $G$ is a compatible  Lorentz-Finsler metric, there will exists some $a>0$ 
such that $av_0+$  ker$(dh_p)$ is tangent to  the {\em indicatrix} $\Sigma_p$   of $G$ at $p$ ($\Sigma_p$ is defined as the set of the $G$-unit vectors; it is concave and asymptotic to the cone). This selects the point of tangency $u\in \Sigma_p$, which will provide the direction of the $G$-gradient $(\nabla^G h)_p$; indeed, $(\nabla^G h)_p= ((dh)_p(u))u$.}  $G$. Now notice: 
 
\ben\item  Once a Cauchy temporal function $\tau$ has been obtained for $\C$ (say, by using some auxiliary Lorentz-Finsler $G$), one would  consider $\nabla^G \tau$ in order to construct a smooth Cauchy splitting with spacelike Cauchy slices as in \S \ref{s_CauchyTemporal_construction}. 
However,  {\em an analogous expression to  the orthogonal splitting
  \eqref{e_orthgonalsplitting} would not be found  
  for\footnote{Recall that, as a difference with Lorentz metrics,  a Lorentz-Finsler ones cannot be decomposed in the tangent bundle as a difference $dt^2-F_0^2$, being $F$ a Finsler metric (a  subtler decomposition  as a difference between a Riemannian and a Finslerian metric is possible, see \cite[\S 4.4]{JS20}). So, even though $\nabla^G\tau$ will be orthogonal to
$TS\subset T(\R\times S)$  the expression of the metric is more complicated (indeed, it is similar  to the one explained for static spacetimes    
in \cite[\S 4.2.1]{JS20}, taking into account that, in our case, the metric is $\tau$-dependent).
} $G$}.

Anyway, once $\tau$ and $\nabla^G \tau$ are given, one  can put $T=\Lambda \nabla^G \tau$ with $\Lambda =1 /d\tau(\nabla^G \tau)$ so 
that $d\tau(T)\equiv 1$ and  determine   a second compatible 
Lorentz-Finsler metric\footnote{\label{footAnisConf} The Lorentz-Finsler metrics compatible with the same cone structure are called {\em anisotropically conformal} and they share the same lightlike pregeodesics, called then {\em cone geodesics}, \cite[Th. 1.1.]{JS20}.}  $\tilde G$  such that:
\begin{equation}\label{e_orhogonal_Finsler}
\tilde G=\Lambda d\tau^2-F^2_\tau
\end{equation}    
  (where $F_\tau$ is now a natural notation for the Finsler metric in  ker($d\tau$)) with 
  $\nabla^G \tau=\nabla^{\tilde G} \tau$. That is, the analogous to the Cauchy temporal splitting \eqref{e_orthgonalsplitting} is obtained only for some representatives of the anisotropically conformal class (according to the nomenclature in footnote \ref{footAnisConf}).

\item  When $\C$ admits a temporal function $\tau$, the reasoning in Remark~ \ref{r_narices} can be reproduced with $\nabla^G \tau$, for any $\C$-compatible $G$.  This shows that $\tau$ remains temporal for all the  cone structures in a $C^0$-neighborhood of $\C$ and, in particular, that $\C$ is stably causal, extending Th. \ref{t_SaoPaulo}.

 \item The existence of a steep temporal function for any Lorentz-Finsler metric  $G$   compatible  with   a globally hyperbolic cone structure $\C$ would be ensured. However, its relation with  Nash-type embeddings (as in Theorem \ref{t_isomNash}) is unclear. 
 
 Indeed, taking into account an expression such as \eqref{e_orhogonal_Finsler}, one should consider first the Finsler case. However,  the  results in this case  are not so tidy and this  suggests additional  difficulties for the   Lorentz-Finsler case\footnote{The analogous of a Finslerian Nash result would be to find an isometric embedding in 
 $(\R^{N_0}, \parallel \cdot \parallel)$, where  $\parallel \cdot \parallel$ is a Minkowski norm (thus, with strongly convex  indicatrix, i.e. unit sphere). Burago and Ivanov  \cite{BI} proved its existence in the compact case, but Shen \cite{ShenIsom} gave a counterexample in the non-compact one. Indeed, he proved the impossibility to find such an embedding in the case of an unbounded Cartan tensor. However, Shen cited Gu \cite{Gu}  suggesting the possibility of an isometric embedding when the indicatrix of $\parallel \cdot \parallel$ is permitted to be only convex. Thus, for a Lorentz-Finsler metric one would wonder on the possibility to embed isometrically in $(\R\times \R^{N_0}, dt^2-\parallel \cdot \parallel^2)$ for a (possibly only convex) norm $\parallel \cdot \parallel$.}. Notice, however, that if one looks for an embedding of $\C$ into the cone structure of some $(\R\times \R^{N_0}, dt^2-\parallel \cdot \parallel^2)$, it would be enough to embed isometrically one of the $\C$-compatible $G$; this might pose new Finslerian issues\footnote{For example,   whether $\C$ could be expressed as in \eqref{e_orhogonal_Finsler} by using Finslerian metrics with bounded Cartan tensor (in order to avoid Shen's result in the previous footnote).}.    
\een
 
\subsubsection{Considering more general cones}\label{s245} In order to extend the previous results  from regular cones to more general ones, notice that the conditions (i), (ii) (imposed above as additional hypotheses to Fathi and Siconolfi 
cones) are not essential for our approach. Indeed, the weakening of the strong convexity of $\C$ into convexity only affects to the uniqueness of the cone geodesics (notice \cite[Rem. 2.10]{JS20}), but this does not play any role here. Moreover, the lack of smoothness of both each cone $\C_p$ (including its collapse  to a half line) and its variation with $p$, can be overcome by taking into account that the modified Lorentzian distances in \S \ref{s_BS03} are constructed locally and, then, one could choose wider regular cones. 
Indeed, even the non-continuous variation of the cones would be permitted whenever Geroch or  Hawking method to construct a time function worked. In these general cases, $\C$ may not have a compatible $G$; thus, one would obtain a temporal function but not a meaningful gradient   (even though it is not difficult to figure out particular situations where $\nabla^G\tau$ would make sense for more general versions of $G$).  
 
These links between dynamical systems and cones become exciting, suggest the possibility to enrich both fields and should attract further attention.





 \subsection{Revisiting Geroch's stability and Seifert's smoothability}\label{s_Festability} 
Among the previous  issues on stability, we mentioned in \S \ref{s_BS03} that  Geroch gave an argument to justify the $C^0$-stability of global hyperbolicity. In 2011, Benavides and Minguzzi  \cite{BM} critiqued this Geroch's proof  emphasizing that it would not work without introducing non-trivial amendments such as the Cauchy temporal function in\footnote{See \S 3 in the arxiv version of \cite{BM}.} \cite{BS05}.
Then, they gave  a direct and strictly topological proof of the stability of global hyperbolicity which does not use the concept of Cauchy hypersurface or the topological splitting. As a corollary, they obtained that every globally hyperbolic spacetime admits a Cauchy hypersurface that remains Cauchy for small perturbations of the spacetime metric. The conclusions of this result can be improved because  any Cauchy temporal function (and, thus, any spacelike Cauchy hypersurface, Th. \ref{t_LMP}), remains Cauchy temporal for all the metrics in a $C^0$ neighborhood\footnote{See M. S\'anchez, {\em A note on stability and Cauchy time functions},     arxiv: 1304.5797, which was included later in the more general setting of \cite{AFS}. 
} in the same vein as explained for temporal functions in  
Remark~\ref{r_narices}; the key is that these properties are locally stable and the $C^0$ topology is so fine that will ensure global stability. Anyway, their viewpoint was interesting and  fruitful for the following developments.

Indeed, after this article, Chrusciel, Grant and Minguzzi \cite{CGM} 
revised Seifert's technique, developed the notion of anti-Lipschitz function and carried out a proof of smoothability of time and Cauchy time functions following Seifert's viewpoint. 
Minguzzi also reobtained the existence of steep Cauchy temporal functions \cite{Mi16}, the stability of Cauchy temporal functions \cite[Thm., 2.46]{Mi17} and  the stability of spacelike Cauchy hypersurfaces (and, in general, of locally stable Cauchy hypersurfaces) \cite{Mi19} by following this approach. What is more, in the last two references   he develops the approach in the aforementioned setting of closed cone structures.
So, this approach 
joins to the previously mentioned ones  broadening the scope and applicability of techniques of classical general relativity in  
Finsler and non-regular spacetime geometry.

\section{Global hyperbolicity 
for sliced spacetimes}\label{s4}

Taking into account the Cauchy temporal splitting \eqref{e_orthgonalsplitting}, next we will wonder about a converse, namely when a spacetime which admits a split  expression $\m= \R\times M$ is globally hyperbolic, being its $\R$-projection a Cauchy temporal function. This will lead  to some sufficient conditions, which will have interest for several purposes.

\subsection{Slicings by parametrized products}\label{s_slicing_param_product}
Consider the normalized sliced spacetime $\m$:
\begin{equation}\label{e_orthogonalnormaliz_spl}
 I \times M, \qquad g=-d\ttt^2 +g_\ttt, \qquad I=(a,b)\subset \R \; \hbox{(interval)}
 \end{equation}
where  $g_\ttt$ is again a $\ttt$-parametrized Riemannian metric on each slice. In comparison with the Cauchy temporal splitting \eqref{e_orthgonalsplitting}, we have set $\Lambda\equiv 1$ because now we are interested in conformally invariant properties such as  global hyperbolicity (i.e.,   we would choose the conformal representative $g/\Lambda$  in \eqref{e_orthgonalsplitting}). Notice also that the case $I \subsetneq \R$ can be always reduced to the case\footnote{On the one hand, taking any increasing diffeomorphism $\phi: I\rightarrow \R$ the new temporal function $\tilde\ttt=\phi\circ\ttt$ yields a splitting with $\R\times M$ where the original slices are only relabelled. On the other, any strip $I\times M\subset \R\times M$ of a Cauchy temporal splitting \eqref{e_orthgonalsplitting} is intrinsically globally hyperbolic and the slices remain Cauchy.} $I=\R$ but the interval $I$ will be convenient for some examples.
Trivially, $\ttt$ is a temporal function in \eqref{e_orthogonalnormaliz_spl} and we will wonder when it is Cauchy.

\subsection{The role of the completeness of the slices} The completeness of the slices of a  globally hyperbolic spacetime may be important  for different purposes (see for example \cite{Finster}). However, this is not an invariantly conformal property. Indeed,  for Riemannian manifolds, 
 completeness can be acquired (or lost in the  non-compact case) by means of a conformal change, see \cite{NomOz}. The proof of this fact   applies  to the slices of any Cauchy temporal splitting  $\R\times M$ 
showing: {\em if the manifold $M$ is non-compact, then there exists  a conformal factor such that all the Cauchy slices are complete as well as another factor such that all of them are incomplete.}

However, the situation is  different in the case of a metric as in \eqref{e_orthogonalnormaliz_spl}. Recall that the lapse $\Lambda$ in the case of a steep Cauchy temporal function satisfies $\Lambda\leq 1$, thus, in this case the normalized conformal representative $g/\Lambda$ will have a  Riemannian part satisfying $g_{\ttt}/\Lambda\geq g_{\ttt}$ for every slice, that is, not only  the metrics $g_\ttt/\Lambda$ will be complete whenever so are the $g_\ttt$'s, but also they could be complete even if the $g_\ttt$'s are incomplete. However, the example in \S \ref{ss0} shows: {\em even if $\ttt$ is Cauchy temporal for the normalized splitting \eqref{e_orthogonalnormaliz_spl}, incomplete slices may appear}. 

In spite of  examples as the previous  one (which is rather artificial), the hypothesis that all the slices $g_\ttt$ are complete is natural to ensure that $\ttt$ is Cauchy for the sliced spacetime \eqref{e_orthogonalnormaliz_spl}. However, this condition is {\em not sufficient, as shown by the non-globally hyperbolic example in \S \ref{ss1}}.

\subsection{Uniform bounds for the slices}\label{s_unifbound} A simple sufficient strengthening of the completeness of all the $g_\ttt$'s is the following.
\begin{prop}\label{p_Bari}
Consider the sliced spacetime $\m=(I \times M,g)$ in \eqref{e_orthogonalnormaliz_spl}. The temporal function $\ttt$ is Cauchy (and, thus, $\m$ globally hyperbolic) if there exists a complete Riemannian metric $g_R$ with associated distance $d_R$ on $M$, a point $\hat x\in M$ and a  continuous function $m: I\rightarrow (0,\infty)$ such that:
\begin{equation}\label{e_unif_bound_gh}
g_{\ttt_0}(v,v) \geq \frac{m(\ttt_0)}{1+d_R(\hat x,x)^2} \, g_R(v,v), \qquad \forall v\in T_xM, \forall x\in M, \forall \ttt_0\in I.
\end{equation}
\end{prop}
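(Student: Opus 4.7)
My plan is to show that, under the hypothesis, no future-directed causal curve can ``run out of $\tau$'' before reaching $b$, and dually for the past, so that every slice $\ttt=\ttt_0$ is crossed by every inextendible causal curve. Since $\ttt$ is already temporal (its gradient is past-directed timelike), each causal curve crosses each slice at most once; what needs proving is existence.

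The core step is the following. Suppose by contradiction that $\gamma\colon[s_0,s_1)\to\m$ is a future-directed causal curve with $\lim_{s\to s_1}\ttt(\gamma(s))=\ttt^*<b$. Since $\dot\ttt>0$ along $\gamma$, I reparametrize by $\ttt$ to write $\gamma(\ttt)=(\ttt,x(\ttt))$ with $\ttt\in[\ttt_0,\ttt^*)$. Causality $g(\dot\gamma,\dot\gamma)\le 0$ in the normalized splitting \eqref{e_orthogonalnormaliz_spl} gives $g_\ttt(x',x')\le 1$, and combining with the hypothesis \eqref{e_unif_bound_gh} yields the key pointwise bound
\[
\sqrt{g_R(x'(\ttt),x'(\ttt))}\;\le\;\sqrt{\frac{1+d_R(\hat x,x(\ttt))^2}{m(\ttt)}}.
\]
Setting $r(\ttt):=d_R(\hat x,x(\ttt))$ and using the triangle inequality together with $\sqrt{1+r^2}\le 1+r$ produces the integral inequality
\[
r(\ttt)\;\le\;r(\ttt_0)+\int_{\ttt_0}^{\ttt}\frac{1+r(u)}{\sqrt{m(u)}}\,du,\qquad \ttt\in[\ttt_0,\ttt^*).
\]
Since $m$ is continuous and positive on the compact interval $[\ttt_0,\ttt^*]\subset I$, the coefficient $1/\sqrt{m}$ is bounded there, and Gronwall's inequality gives a uniform bound $r(\ttt)\le C$ on $[\ttt_0,\ttt^*)$.

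With $r$ bounded, the right-hand side of the previous pointwise estimate is integrable on $[\ttt_0,\ttt^*)$, so the $g_R$-length of the curve $x(\ttt)$ is finite. Hence $x(\ttt)$ is a $g_R$-Cauchy curve; completeness of $g_R$ yields a limit $x^*\in M$ as $\ttt\to\ttt^*$. Therefore $\gamma(\ttt)\to(\ttt^*,x^*)\in\m$, contradicting the assumed future-inextendibility of $\gamma$. The analogous argument, run backwards in $\ttt$ towards $a$, handles past-inextendible causal curves.

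The main obstacle is purely analytic: controlling $r(\ttt)$ via the Gronwall-type bootstrap so that the conformal weight $1/(1+d_R(\hat x,\cdot)^2)$ in \eqref{e_unif_bound_gh}, which can vanish at infinity, does not allow $x(\ttt)$ to escape to infinity in finite $\ttt$-time. The quadratic growth permitted in the denominator is exactly the borderline case that Gronwall still accommodates (after the elementary estimate $\sqrt{1+r^2}\le 1+r$), which is why the form of the bound \eqref{e_unif_bound_gh} is natural. Everything else (reparametrization by $\ttt$, reduction to showing convergence in $M$, use of completeness of $g_R$) is routine once this estimate is in hand.
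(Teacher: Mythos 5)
Your proof is correct. It is worth noting, though, that it is organized quite differently from the paper's. The paper argues by comparison with a static product: it first observes that the statement is immediate when $g_{\ttt_0}\geq g_R$ (since then the static metric $-d\ttt^2+g_R$ has wider cones and is globally hyperbolic with $\ttt$ Cauchy, because $g_R$ is complete), then absorbs the factor $1/(1+d_R(\hat x,\cdot)^2)$ by noting that this conformal change does not destroy completeness of the associated Riemannian distance, and finally disposes of $m(\ttt)$ by restricting to a compact $\ttt$-interval and replacing $g_R$ by $m_{-}\,g_R$ with $m_{-}$ the minimum of $m$ there. Your argument replaces both of the paper's key ingredients (global hyperbolicity of static products with complete fiber, and preservation of completeness under the conformal factor $1/(1+d_R^2)$) by a single direct Gronwall bootstrap on $r(\ttt)=d_R(\hat x,x(\ttt))$; in effect you are re-proving the completeness statement for the conformally rescaled metric from scratch, since the estimate $\sqrt{1+r^2}\le 1+r$ together with Gronwall is exactly what shows that $g_R/(1+d_R(\hat x,\cdot)^2)$ remains complete. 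What the paper's route buys is brevity and a conceptual reduction to the well-understood static case (in the spirit of the isocausal comparisons used later in \S\ref{s_causalboun_for_sliced}); what yours buys is a self-contained quantitative proof that makes transparent why quadratic growth in the denominator of \eqref{e_unif_bound_gh} is the borderline the hypothesis can tolerate. All the steps you label as routine (reparametrization by $\ttt$, finiteness of the $g_R$-length, convergence by completeness, contradiction with inextendibility) do go through as stated.
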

Obviously the best choices for $m$   make $\lim_{t\rightarrow a}m(t)=\lim_{t\rightarrow b}m(t)=0$.

This result is a particular case of those in \cite{S}. Anyway,  it is trivial if $g_{\ttt_0}(v,v) \geq g_R(v,v)$,
because $\ttt$ is Cauchy temporal for the static metric $-d\ttt^2+g_R$ and, under this assumption, this metric have wider cones than $g$. 
The conformal factor $1/(1+d_R(\hat x,x)^2)$ does not affect to the completeness of the associated distance and, thus, neither to the Cauchy character of the slices. Moreover the  function $m(\ttt) >0$ can also be incorporated as a factor because, if there were, say,  a future-directed, inextendible  $\ttt$-parametrized curve $[\ttt_0,\ttt_1)\ni \ttt\mapsto (\ttt,c(\ttt))$ with $t_1<\infty$, one can take the minimum $m_{-}$ of $m$ in  $[\ttt_0,\ttt_1]$, replace $g_R$ by $m_{-} \; g_R$ and obtain a contradiction. 

\begin{rem}\label{r_linksLF0}
The general sufficient conditions  in \cite{S} include the case when there are cross terms type $d\ttt\otimes \omega_\ttt+ \omega_\ttt\otimes d\ttt $ for some $\ttt$ dependent 1-form. 

In the $\ttt$-independent case (i.e., the spacetime is stationary with $\omega_\ttt\equiv \omega_0$ and $g_\ttt\equiv g_0$ )
the fact that the slices are Cauchy is characterized by the completeness of the {\em Fermat metric} $F=\sqrt{g_0+ \omega_0^2}+\omega_0$, which is a Finsler metric (of Randers type), see \cite[Th. 4.4]{CJSa}. Notice that, essentially, a Finsler metric replaces  the pointwise scalar products of Riemannian Geometry by  norms. However, these are allowed to be non-reversible (i.e., $F(-v)\neq F(v)$ in general), and this property becomes essential for the characterization of global hyperbolicity \cite[Th. 4.3]{CJSa} (there are globally hyperbolic examples with non-Cauchy slices).

In the $t$-dependent case, a  characterization with a parametrized $F_\ttt$ similar to  
\eqref{e_unif_bound_gh} works, the details can be seen in \cite{Tapia}.
\end{rem}
\begin{rem}\label{r_ej3}
One could wonder whether uniform bounds such as  
\eqref{e_unif_bound_gh} are too restrictive. Notice, however, 
that the example in \S \ref{ss2} lowers expectations. Indeed, it shows that 
if $g$ admits a normalized Cauchy temporal splitting as \eqref{e_orthogonalnormaliz_spl},
{\em a change of the 
parametrized metrics $g_\ttt$  by just   $g_\ttt /2$ may yield a non-globally hyperbolic 
spacetime} (compare with Rem. \ref{r_narices}).
\end{rem}

\subsection{Non-convexity of  global hyperbolicity in  conformal classes}\label{s_paracausal}
Taking into account the procedure to construct the  splitting \eqref{e_orthgonalsplitting} from a Cauchy temporal function
(see above that formula), we can assert that any globally hyperbolic metric on $\R\times M$ admit a  conformal diffeomorphism with one of the metrics written as  in the normalized splitting  \eqref{e_orthogonalnormaliz_spl} (with $I=\R$). All these representatives of the conformal classes\footnote{The representative of a single conformal class will be non unique (indeed,  different choices of the Cauchy temporal function will give different representatives) but this will not be relevant for our purposes, in general.} will share $t$ as the same Cauchy temporal function and will differ only in the $g_t$ part. This simplified setting is natural to pose some questions regarding the space of globally hyperbolic metrics on $\R\times M$.

\subsubsection{Existence of non-globally hyperbolic convex combinations}
Notice that the space of normalized sliced metrics \eqref{e_orthogonalnormaliz_spl} on $\m$ is convex, that is, given two such metrics $g, g'$, all the {\em convex combinations} $\lambda g+(1-\lambda)g'$ with $\lambda\in [0,1]$ are also  metrics of this type\footnote{This relies on the convexity of the set of all the Riemannian metrics, as $\lambda (-dt^2+g_t)+(1-\lambda)(-dt^2+g'_t)=-dt^2+ (\lambda g_t+(1-\lambda)g'_t)$.}.

Then, it is natural to wonder whether, in the case that $g$ and $g'$ share $t$ as a Cauchy temporal function, this property will hold for their convex combinations.  The answer to this question is negative, as  the example in \S\ref{ss3} (Theorem \ref{t_nonconvex}) shows two such   $g, g'$ such that  $(g+g')/2$ is not even globally hyperbolic.

\subsubsection{Connectivity by convex combinations with breaks}
There exists an extra motivation for this question coming from the Cauchy problem for normally hyperbolic operators \cite{FNW, MMV}. Some authors have studied the possible deformation of a representative $g$ of a conformal class into a representative $ g'$ of a different  class; in particular, {\em paracausal} transformations have been  introduced in \cite{MMV}  with this aim. 
Essentially, they realize that there is a third metric $\tilde g$  whose cones are not wider  than those of $g$ and $g'$, that 
is\footnote{The notation $\preceq$ means that inclusion of the cones may be non-strict, as a difference with $\prec$ in \eqref{e_prec}.}
  $\tilde g\preceq g$ and $\tilde g\preceq g'$. Notice that, trivially, the convex combinations of $\tilde g$ with $g$ 
(as well as with $g'$) share $t$ as a Cauchy temporal function, as the timecones of these  combinations are not wider than those of $g$. This allows one to path-connect $g$ with $g'$ by means of  two convex combinations passing through $\tilde g$, that is, by means of a {\em piecewise convex combination with one break}\footnote{Notice that the paths we are using are always piecewise convex combinations, so, we do not need to consider a topology in the space of metrics to define them. Anyway, to ensure the continuity of the path (a property that any  path should fulfill by definition) the $C^0$ topology would not by appropriate, but uniform convergence on compact sets would suffice.}. 
Even though such results were stated in the case of compact $M$,  we can use our previous techniques to obtain the following general one. 
\begin{thm}  \label{t_paracausal}
 The space of  normalized globally hyperbolic metrics which share $t$ as a Cauchy temporal function as  in \eqref{e_orthogonalnormaliz_spl} 
is piecewise convex connected; indeed, any two such metrics can be connected by a piecewise convex combination  with  one break. 

 However, this space is not convex in general, 
 as there exists two such metrics with a non-globally hyperbolic convex combination.  
\end{thm}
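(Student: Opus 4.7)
I would tackle the two assertions of the theorem separately.

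For the piecewise-convex connectedness, the plan is to follow the paracausal philosophy of \cite{MMV}: produce an auxiliary metric $\tilde g$ lying below both $g$ and $g'$ in the cone ordering, and then path-connect $g$ to $g'$ through $\tilde g$ by two straight convex segments. Given normalized sliced metrics
\[
g = -d\ttt^2 + g_\ttt, \qquad g' = -d\ttt^2 + g'_\ttt,
\]
both having $\ttt$ as a Cauchy temporal function, I would simply set
\[
\tilde g := -d\ttt^2 + (g_\ttt + g'_\ttt).
\]
Since $g_\ttt + g'_\ttt \geq g_\ttt$ (and $\geq g'_\ttt$) as Riemannian metrics on each slice, the causal cones of $\tilde g$ are contained in those of $g$ and of $g'$, that is, $\tilde g \preceq g$ and $\tilde g \preceq g'$. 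This cone-narrowing alone forces $\tilde g$ to be globally hyperbolic with $\ttt$ Cauchy temporal, by the following standard reason: every $\tilde g$-causal vector is $g$-causal, hence every inextendible $\tilde g$-causal curve in $\R\times M$ is an inextendible $g$-causal curve and so must cross each slice $\{\ttt = \ttt_0\}$ exactly once (the slice being Cauchy for $g$); moreover any $\tilde g$-closed causal curve would be a closed $g$-causal one, forbidden by global hyperbolicity of $g$. Time orientation is inherited from the normalized form.

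The same monotonicity argument, applied to
\[
\lambda\, g + (1-\lambda)\,\tilde g \;=\; -d\ttt^2 + \bigl(\lambda\, g_\ttt + (1-\lambda)(g_\ttt + g'_\ttt)\bigr), \qquad \lambda \in [0,1],
\]
whose Riemannian part dominates $g_\ttt$, shows that each member of this convex family is itself a normalized globally hyperbolic metric with $\ttt$ as Cauchy temporal function; symmetrically for the family joining $\tilde g$ to $g'$. Concatenating the two convex segments then yields a piecewise convex path from $g$ to $g'$ with a single break at $\tilde g$, which is the first claim. Continuity of the resulting path is automatic for any reasonable topology on the space of metrics (uniform convergence on compact sets is more than enough).

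For the non-convexity statement, the plan is to appeal directly to the explicit counterexample developed in \S \ref{ss3} (Theorem \ref{t_nonconvex}): a pair of metrics on $\R^2$ of warped form $-dt^2 + f^2(t,x)\,dx^2$, each globally hyperbolic with $t$ Cauchy temporal, whose midpoint $(g+g')/2$ fails to be globally hyperbolic. That section does all the hard work and the present theorem merely records the consequence. Accordingly, the main obstacle of the proof is concentrated entirely in \S \ref{ss3}; in the first half the only genuine step is the cone-monotonicity observation, which is insensitive to compactness of $M$ and thereby extends the compact-$M$ result of \cite{MMV} to the generality stated here.
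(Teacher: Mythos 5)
Your proof is correct and follows essentially the same route as the paper: both arguments produce a third normalized metric whose cones are no wider than those of $g$ and $g'$, exploit the fact that narrowing the cones preserves the Cauchy temporal character of $\ttt$ along each convex segment, and delegate the non-convexity claim entirely to Theorem \ref{t_nonconvex}. The only difference is the choice of break point: you take $\tilde g=-d\ttt^2+(g_\ttt+g'_\ttt)$, which depends on the pair, whereas the paper routes through a fixed static product $-d\ttt^2+g_M$ with $g_M$ complete (justified via Prop.~\ref{p_Bari}), so that every metric in the space is connected to every other through one common hub.
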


\begin{proof} 
Let $g_M$ be any complete Riemannian metric and consider the globally hyperbolic static product $\m_0= (\R\times M, -dt^2+g_M)$. For any    $g=-dt^2+g_t$ with $t$ Cauchy temporal, the convex combination 
$$\lambda (-dt^2+g_M) + (1-\lambda) g= -dt^2+ \lambda g_M + (1-\lambda) g_t \prec -dt^2+ \lambda g_M, \quad \forall \lambda\in (0,1),$$ 
maintains $t$ as Cauchy temporal (as $\lambda g_M$ is also complete). This would happen also for a second  $g'$, 
thus, connecting $g$ and $g'$ with a piecewise combination with a break at $\m_0$. 
The counterexample to convexity is given by the metrics  $g_3^{\hbox{\tiny{even}}}$ and $g_3^{\hbox{\tiny{odd}}}$ in Theorem \ref{t_nonconvex}. 
\end{proof}

\begin{rem} \label{r_paracausal} (1) The last assertion can be strenghtened, as the claimed space {\em is convex if and ony if $M$ is compact}. 
   Indeed, when  $M$ is  compact, 
   convexity is just a consequence of the fact that all the metrics as in \eqref{e_orthogonalnormaliz_spl} admit $t$ as a Cauchy temporal function. This property can be checked  directly by applying  Proposition  \ref{p_Bari}  being $g_R$ any (necessarily complete) Riemanian metric $g_R$ on $M$ and $m$ the continuous function
   $$
   m(t)=\hbox{Min} \{g_t(v,v): g_R(v,v)=1, v\in TM\}, \qquad \forall t\in I\subset \R.
   $$ 
When $M$ is not compact, 
take a complete Riemannian  metric on $M$ and choose one of it rays $[0,\infty)\ni s \mapsto c(s)$  
The required counterexample would be obtained by  constructing two Lorentzian  metrics $g, g'$ on $\R\times M$  such that $t$  becomes a normalized Cauchy temporal function for them and $g, g'$ behave as $g_3^{\hbox{\tiny{even}}}$ and $g_3^{\hbox{\tiny{odd}}}$   on $\R\times$  Im$(\gamma)|_{[1,\infty)} \cong \R \times [1,\infty)$. 

(2) The previous results are naturally extended to the case when  $\hat g_1, \hat g_2$ are globally hyperbolic 
metrics with arbitrary lapses $\Lambda_1, \Lambda_2$ for some  Cauchy temporal functions  $\tau_1, \tau_2$, resp.  As explained, they both will be isometric to a Cauchy temporal splitting $g_1, g_2$ as in \eqref{e_orthgonalsplitting} with the same function  $\tau$. Then,   the trivial fact that 
the convex combinations of each $g_i$ with the normalized one\footnote{   
i.e., $[0,1]\ni \lambda \mapsto\Omega_i(\lambda) g_i$ with $\Omega_i(\lambda)=1/\left(\lambda \Lambda_i +(1 - \lambda)\right)$ for $i=1,2$.}   
are  globally hyperbolic,
allows one to connect $g_1$ and $g_2$ by piecewise convex combinations which are also globally hyperbolic.
\end{rem}

\section{From the conformal to the causal boundary}
\label{s2a}


\subsection{Penrose 
embeddings in static models $\R\times M_k$}
\label{s2.1} The beautiful  open conformal embedding   of Lorentz-Minkowski space $\Lo^{n+1}= (\R\times \R^n,\gl)$  into Einstein static universe 
$(\R\times S^n, -dT^2+g_{S^n})$ (which, in certain sense, extends stereographic projection)
underlies the conformal boundary. In \S \ref{s2.2}, we will refer also to a similar  embedding where the sphere $S^n$ is replaced by hyperbolic space $H^n$. 
 Thus, for the sake of completeness,  next we  construct conformal open embeddings  from suitable open subsets of $\Lo^{n+1}$ into $(\R\times M_k, -dT^2+g_k)$, where $(M_k,g_k)$ is the simply connected model $n$-space of constant curvature $k\in \R$ (the familiarized reader can go directly to the next subsection). 
 Consider the functions 
$$S_k(r):=
\left\{ \begin{array}{lll}
\frac{\sin(\sqrt{k}r)}{\sqrt{k}} & & \hbox{if} \; k>0 \\

r & & \hbox{if} \; k=0 \\

\frac{\sinh(\sqrt{-k}r)}{\sqrt{-k}} & & \hbox{if} \; k<0 
\end{array}\right. \qquad  
C_k(r):=
\left\{ \begin{array}{lll}
\cos(\sqrt{k}r) & & \hbox{if} \; k>0 \\

1 & & \hbox{if} \; k=0 \\

\cosh(\sqrt{-k}r) & & \hbox{if} \; k<0, 
\end{array}\right. 
$$
 $ r\in \R$, with derivatives  $S'_k=C_k$, $C'_k=-k S_k$ and $C_k^2+k S_k^2=1$, and put 

$$T_k: \left(-\frac{\pi}{2\sqrt{k}}, \frac{\pi}{2\sqrt{k}}\right)\rightarrow \left(-\frac{1}{\sqrt{-k}}, \frac{1}{\sqrt{-k}}\right), \qquad r \mapsto T_k(r):=\frac{S_k(r)}{C_k (r)},$$
with the convention $1/\sqrt{k}=\infty$ if $k\leq 0$.  $T_k$ is a diffeomorphism with $T_k'=1+kT_k^2$ satisfying
$
T_k(r-r')=(T_k(r)-T_k(r'))$ $/(1+k \, T_k(r)\, T_k(r'))$.

Using normal spherical coordinates for $g_k$  
(see for example \cite[\S III.1]{Chavel}):
$$ g_k= dr^2 +S_k^2(r) \, \gs, \qquad r\in (0,\pi/\sqrt{k}), 
$$
Consider 
$\Lo^{n+1}$ 
in both standard spherical coordinates and the corresponding advanced and retarded ones $(u,v)$,  $u=t-r$, $v=t+r$:
\begin{equation}
\label{e_gl}
\gl = -dt^2+dr^2+r^2 \gs= -\frac{1}{2}(du dv+dv du)+ \frac{1}{4}(v-u)^2 \,\gs
\end{equation}
where  $v>u$ (the timelike axis at 0 is excluded). 
 Let $\ak$ be the inverse function of $T_k$ (thus, $\ak'(s)=(1+k s^2)^{-1}$)
and introduce the new coordinates: 
$$ \begin{array}{ll}
T= \ak v+ \ak u   & \left(dT=\frac{dv}{1+kv^2}+\frac{du}{1+ku^2}\right)\\ 
R= \ak v - \ak u   & \left(dR=\frac{dv}{1+kv^2}-\frac{du}{1+ku^2}\right)
\end{array} \; \hbox{for} \; \; -\frac{1}{\sqrt{-k}}< u<v< \frac{1}{\sqrt{-k}},$$
the latter giving a bounded domain  when $k<0$. The range of $T,R$ is then:
\begin{equation}\label{e_restr_TR}
-\frac{\pi}{\sqrt{k}} < T+R <\frac{\pi}{\sqrt{k}},
\qquad
-\frac{\pi}{\sqrt{k}} < T-R <\frac{\pi}{\sqrt{k}},
\qquad 0<R.
\end{equation}

\begin{prop}\label{p_PenroseTypeEmbed} In the coordinates defined above, 
$$
-dT^2 + dR^2 + S_k^2(R) \, \gs = \Omega^2 \gl \quad 
\hbox{where} \;  \Omega^2(u,v)={4}/{(1+kv^2)(1+k u^2)}.
$$

Thus, for any $k\in \R$, one obtains  an explicit conformal diffeomorphism between the following regions of $\Lo^{n+1}$ and of the product $(\R\times M_k, -dT^2+g_k)$: 
\begin{enumerate}
\item \label{item1} For $k>0$, from the whole $\Lo^{n+1}$ to the region of $(\R\times M_k, -dT^2+g_k)$
under the inequalities for $T\pm R$ in \eqref{e_restr_TR}.
\item For $k=0$, a 
homothety 
from $\Lo^{n+1}$ to $(\R\times M_0, -dT^2+g_0)$.
\item \label{item3} For $k<0$, from 
 the region satisfying 
 $-1/\sqrt{-k}< t-r$ and $t+r< 1/\sqrt{-k}$ 
 of $\Lo^{n+1}$ to the whole spacetime $(\R\times M_k, -dT^2+g_k)$.
\end{enumerate}
\end{prop}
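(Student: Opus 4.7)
My plan is to verify the conformal identity by a direct computation in the null coordinates $(u,v)$, and then read off the three domain cases from the range of $\ak$. First I would use $T=\ak(v)+\ak(u)$, $R=\ak(v)-\ak(u)$, and $\ak'(s)=(1+ks^2)^{-1}$ to obtain $dT+dR=2\,dv/(1+kv^2)$ and $dT-dR=2\,du/(1+ku^2)$. Viewed as a symmetric product, this gives
\[
-dT^2+dR^2 \;=\; -(dT-dR)(dT+dR) \;=\; -\,\frac{2}{(1+ku^2)(1+kv^2)}\,(du\,dv+dv\,du),
\]
which is exactly $\Omega^2$ times the null part $-\tfrac{1}{2}(du\,dv+dv\,du)$ of $\gl$ in \eqref{e_gl}.

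For the angular block I would invoke the addition formula for $T_k$ listed just before the statement to get $T_k(R)=T_k(\ak(v)-\ak(u))=(v-u)/(1+kuv)$. Combining with the identity $1+kT_k^2=1/C_k^2$ yields
\[
\frac{1}{C_k^2(R)} \;=\; 1+\frac{k(v-u)^2}{(1+kuv)^2} \;=\; \frac{(1+kuv)^2+k(v-u)^2}{(1+kuv)^2}.
\]
The key algebraic lemma needed here is the uniform-in-$k$ identity
\[
(1+kuv)^2+k(v-u)^2 \;=\; (1+ku^2)(1+kv^2),
\]
checked by expanding both sides; it produces $C_k^2(R)=(1+kuv)^2/[(1+ku^2)(1+kv^2)]$ and hence
\[
S_k^2(R) \;=\; T_k^2(R)\,C_k^2(R) \;=\; \frac{(v-u)^2}{(1+ku^2)(1+kv^2)} \;=\; \Omega^2\cdot\frac{(v-u)^2}{4},
\]
which matches the angular term $\tfrac{1}{4}(v-u)^2\gs$ of $\gl$. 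Together with the previous step this gives the claimed conformal equality on the chart, and a standard smooth extension across the axis $R=0$ (respectively $r=0$) handles the omitted timelike axis.

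To identify the domains in items \ref{item1}, (2) and \ref{item3} I would simply use that $\ak$ has range $(-\pi/(2\sqrt{k}),\pi/(2\sqrt{k}))$ for $k>0$, equals the identity for $k=0$, and is defined only on $(-1/\sqrt{-k},1/\sqrt{-k})$ for $k<0$. The hypothesis $v>u$ forces $R>0$, and the constraints $T\pm R\in(-\pi/\sqrt{k},\pi/\sqrt{k})$ in \eqref{e_restr_TR} follow directly from the range of $\ak$ for $k>0$. In the flat case $T=2t$ and $R=2r$ produce a homothety of factor $2$ onto all of $\R\times M_0$; in the negative-curvature case the domain condition $u,v\in(-1/\sqrt{-k},1/\sqrt{-k})$ translates into $-1/\sqrt{-k}<t-r$ and $t+r<1/\sqrt{-k}$, and a surjectivity check on $(T,R)$ shows that the image is all of $\R\times M_k$. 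The main obstacle is really the algebraic identity of the second step: it is what collapses the trigonometric/hyperbolic bookkeeping into the clean rational form of $\Omega^2$, after which everything is routine.
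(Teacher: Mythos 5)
Your computation is correct and follows essentially the same route as the paper's proof: the null block is matched via the differentials of $T\pm R$, and the angular block via the addition formula $T_k(R)=(v-u)/(1+kuv)$ together with the identity $(1+kuv)^2+k(v-u)^2=(1+ku^2)(1+kv^2)$, which is exactly the implicit step in the paper's chain $S_k^2(R)=T_k^2(R)/(1+kT_k^2(R))=\Omega^2(v-u)^2/4$. The extra details you supply (the explicit domain bookkeeping for the three cases and the extension across $r=0$) are consistent with the paper's surrounding discussion and do not change the argument.
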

\begin{proof}
From the expressions of $T,R$, the term in $du,dv$ of \eqref{e_gl} is $(-dT^2+dR^2)/\Omega^2$. The other term also matches by using $T_k(R)=(v-u)/(1+kvu)$:
$$
S_k^2(R)=\frac{1}{k}(1-C_k^2(R))=\frac{T_k^2(R)}{1+kT_k^2(R)}=\frac{(v-u)^2}{(1+kvu)^2+k(v-u)^2}= \frac{\Omega^2}{4}(v-u)^2.
$$
\end{proof}
\begin{rem}\label{r_PenroseTypeEmbedding}
(1) In the case $k<0$, let $r_k=1/\sqrt{-k}$ and consider the ball $B_k=\{r<r_k\}$ in the slice $\{t=0\}$ of $\Lo^{n+1}$. Its (future and past) domain of dependence, $D(B_k)=D^+(B_k)\cup D^-(B_k) (\subset \Lo^{n+1})$ is the domain of the conformal map  in the item \eqref{item3} above, that is, this map is a conformal diffeomorphism from $D(B_k)$ to $\R\times M_k$. 

(2) In particular, the previous diffeomorphism maps $B_k$
into 
 the hyperbolic space $M_k$ (regarded as the slice $T=0$ of $\R\times M_k$) and the pullback metric induced on $B_k$  
yields the classical conformal ball representation of the hyperbolic space. Analogously, for $k>1$ one should regard  $B_k=\R^n$ and the 
 restriction of the conformal map in the item \eqref{item1} is (the inverse of) the stereographic projection from the north pole. 
\end{rem}

\subsection{Looking for open conformal embeddings}\label{s2.2}
Penrose embedding $i: \Lo^{n+1}\hookrightarrow \R\times S^n$ in Prop. \ref{p_PenroseTypeEmbed}, \eqref{item1}, permits to regard the topological boundary  $\parti \Lo^{n+1}$ 
of 
$i(\Lo^{n+1})$ in $\R\times S^n$ as the {\em conformal boundary}  of $\Lo^{n+1}$. The notion of {\em asymptotic flatness} was then introduced by imposing the existence of a conformal embedding which resembles some properties of $i$,   \cite{Pe} (see also, for example,  \cite[Chapter 11]{Wald}). The successful interpretations of this boundary suggest  a natural generalization,  namely: 
given a (strongly causal) spacetime $\m$ try to find a suitable open conformal embedding $i: \m \hookrightarrow \hat \m$ into a bigger ({\em aphysical}) spacetime $\hat \m$ and regard the topological boundary  $  \partial \left(i(\m )\right)\subset \hat \m$ as the conformal boundary $\parti \m$ of $\m$. 


\subsubsection{Strictly causal elements} \label{ss2.1}
An observation 
about  $\parti \Lo^{n+1}$ is that it involves two distinct types of elements: 

(a) The strictly causal ones, which are conformally invariant, namely, the well-known points $i^\pm$ and null boundaries 
  $\mathcal{J}^\pm$
diffeomorphic to $  \R^+\times S^{n-1}$
(these lead to an asymptotic double  cone structure on some sphere).
 
(b) Those depending on the conformal factor $\Omega$ including its carefully chosen 
decay 
(see Prop. \ref{p_PenroseTypeEmbed}) which leads to 
the infinite spacelike point $i_0$. 

\smallskip
  
Here, we will focus on conformally invariant properties (as global hyperbolicity itself), dropping (b) so that there are no a priori restrictions for the embeddings. 

\subsubsection{Additional requirements on $i$} \label{ss2.2} The conformal  embedding $i: \m \hookrightarrow \hat \m$ should satisfy some hypotheses so that the properties of $\parti \m$ can reflect intrinsic properties of $\m$. 
A usual requirement is the compactness of $\m \cup \parti \m$. This is satisfied by the standard $\parti\Lo^{n+1}$, but it  would be lost if $i_0$ where removed (as suggested above). A natural weakening of compactness is   {\em chronological completeness}, i.e,  any inextendible  timelike curve  in $\m$ acquires two endpoints in $\parti \m$. This hypothesis turns out  essential  to relate the conformal and causal boundaries, see \cite[\S 4]{FHS11}.  

In particular, one can wonder about the requirements to ensure that $\m$ is globally hyperbolic. It is easy to check that, if $\parti \m$ is a  hypersurface of $\hat \m$ containing a {\em timelike
 point} $z$ (that is,  $\parti \m$ is smooth enough at $z$ so that
the tangent space $T_z(\parti \m)$ is well defined and it is a timelike hyperplane of $T_z\hat \m$) 
 then $\m$ cannot be globally hyperbolic. To ensure the converse, a simple choice of additional hypotheses is the following: 
 \begin{prop}\label{p_timelikepoint_conf} \cite[Corollary 4.34]{FHS11}
 Assume that $\parti \m$ is a $C^1$ hypersurface of $\hat \m$ and  chronologically complete. $\m$ is globally hyperbolic if and only if  $\parti \m$ does not contain any timelike  point.
 \end{prop}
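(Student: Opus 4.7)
The plan is to argue both directions by contradiction. For the forward direction, assume $\m$ is globally hyperbolic and that some $z\in\partial\m$ is a timelike point. I would work in a small convex normal neighborhood $U$ of $z$ in $\hat\m$ where $\partial\m\cap U$ is a $C^1$ timelike hypersurface and locally separates $U$ into two open components, one of which is $\m\cap U$. Picking a future-directed timelike $v\in T_z(\partial\m)$ and tilting it slightly toward the $\m$-side produces a timelike $w\in T_z\hat\m$ such that $p:=\exp_z(-\epsilon w)$ and $q:=\exp_z(\epsilon w)$ lie in $\m\cap U$ with $p\ll z\ll q$ in $\hat\m$ for small $\epsilon>0$. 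For any sequence $z_n\in\m$ with $z_n\to z$, a short-geodesic analysis inside $U$ shows that $p\ll_{\m} z_n\ll_{\m} q$ when $n$ is large, so $z_n\in J_\m^+(p)\cap J_\m^-(q)$; since $z_n\to z\notin\m$, this diamond has no convergent subsequence in $\m$, contradicting the compactness of causal diamonds.

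For the converse, suppose $\partial\m$ has no timelike points and fix $p,q\in\m$ together with a sequence $(x_n)\subset J_\m^+(p)\cap J_\m^-(q)$. Choose future-directed causal curves $\gamma_n\subset\m$ from $p$ through $x_n$ to $q$ and apply the limit curve lemma in $\hat\m$ to extract a causal limit curve $\gamma\colon[0,1]\to\hat\m$ from $p$ to $q$ with $x_n\to x\in\gamma([0,1])$ along a subsequence. If $x\in\m$ the sequence $(x_n)$ clusters in $\m$ and we are done, so the task reduces to excluding $x\in\partial\m$. Let $[s_1,s_2]$ be the smallest parameter interval containing every boundary point of $\gamma$, so $z_i:=\gamma(s_i)\in\partial\m$ and $\gamma$ takes values in $\overline{\m}$ throughout.

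The main obstacle is ruling out this boundary excursion. Because each $\gamma_n\subset\m$ and $\gamma_n\to\gamma$ locally uniformly, $\gamma$ cannot properly cross $\partial\m$ into the non-$\m$ side; combined with the non-timelikeness of the tangent of $\partial\m$ at every point, a local analysis forces $\gamma|_{[s_1,s_2]}$ to either reduce to a single touch point on $\partial\m$ or to be a null curve lying in a null piece of $\partial\m$ (since a causal curve contained in a nowhere-timelike $C^1$ hypersurface must be null, and $\partial\m$ is locally achronal there). In either sub-case I would perturb $\gamma_n$ slightly into $\m$ to obtain inextendible timelike curves whose endpoints, by chronological completeness, must lie in $\partial\m$; a careful tracking of these endpoints together with the local normals to $\partial\m$ at $z_1,z_2$ yields the desired contradiction, forcing $x\in\m$ and hence compactness of $J_\m^+(p)\cap J_\m^-(q)$. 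I expect the null sub-case, where $\gamma$ slides along a null generator of $\partial\m$, to be the main technical obstacle: the naive ``no crossing'' obstruction disappears, and chronological completeness is precisely what prevents timelike perturbations in $\m$ from escaping along the generator without re-touching $\partial\m$.
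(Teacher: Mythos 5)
Your forward implication is essentially sound, modulo one sign slip: if $w$ is future-directed timelike and tilted toward the $\m$-side, then $\exp_z(-\epsilon w)$ lands on the \emph{non}-$\m$ side of $\parti\m$, so $p$ must be produced with a second vector tilted the opposite way (or a past-directed one tilted into $\m$). The converse, however, is not a proof but a plan whose decisive step is left open --- you flag the ``null sub-case'' as the main obstacle and summarize its resolution as ``a careful tracking \dots\ yields the desired contradiction''. There is also a structural gap earlier in that direction: extracting a causal limit curve \emph{from $p$ to $q$} in $\hat\m$ is not free. The completion $\m\cup\parti\m$ is not assumed compact (compactness is deliberately weakened to chronological completeness here) and $\hat\m$ is not assumed globally hyperbolic, so the curves $\gamma_n$ may leave every compact subset of $\hat\m$; the limit curve theorem then only yields an inextendible future limit curve from $p$ and an inextendible past one to $q$, and it is precisely at that point --- not as an afterthought on perturbed curves --- that chronological completeness must be invoked to recapture endpoints on $\parti\m$. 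Your sketch never engages with this dichotomy.

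The proof the paper points to (\cite[Cor.~4.34]{FHS11}) takes a different route that avoids limit curves along the boundary altogether: under the stated hypotheses one first shows that the conformal completion coincides with the causal completion \cite[Th.~4.33]{FHS11}, and then applies the purely intrinsic characterization of Proposition~\ref{p_timelikepoint_caus} (global hyperbolicity $\Leftrightarrow$ no pairs $(P,F)$ with $P\neq\emptyset\neq F$). The dictionary is that a point $z$ of a $C^1$ boundary hypersurface is timelike precisely when both $I^+(z)\cap\m$ and $I^-(z)\cap\m$ are nonempty, i.e.\ when $z$ realizes a nontrivial pair; chronological completeness is what guarantees that every nontrivial pair of $\partial_c\m$ is actually realized by some point of $\parti\m$, so that absence of timelike points really does force absence of nontrivial pairs. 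If you want to keep a direct diamond-compactness argument, that correspondence is the mechanism your converse would have to reproduce.
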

  It is worth pointing out that the proof  in the cited reference shows first that, {\em under the chosen hypotheses, the conformal and causal boundaries are equivalent} \cite[Theorem 4.33]{FHS11}. Thus, the 
properties of $\parti \m$ become intrinsic to $\m$, as so is the causal boundary. This also poses the problem of choosing reasonable hypotheses for the conformal embeddings so that   the properties of $\parti \m$ will be independent of the chosen $i$.

\subsubsection{Inexistence of conformal embeddings}\label{ss2.3}
The main difficulty of the conformal boundary is that there is no a general way to find an open conformal embedding satisfying some natural hypotheses as above (or even non-being trivial). 
The following example may be illustrative, as well as a motivation for the causal boundary later.

\begin{example}\label{ex_NO_confbound}
Conformal boundary for some products $(\R \times M, g=-dt^2+\g$). When $(M,g)$ is the 
hyperbolic space $H^n$,  we can consider
the inverse $i: \R\times H^n \hookrightarrow D(B_k) \subset \Lo^{n+1}$ of the conformal embedding given in 
Prop.~\ref{p_PenroseTypeEmbed}~ \eqref{item3} (see also Rem.~\ref{r_PenroseTypeEmbedding}(1)), and 
$\parti (\R\times S^n)$ would be the 
 boundary of $D(B_k)$ in $\Lo^{n+1}$. 
As $B_k$ is a disk in the slice $\{t=0\}$,  
 the boundary  of its domain of dependence  is composed by two lightlike cones merged at the boundary of $B_k$ in $\{t=0\}$.    Consistently with \S \ref{ss2.1}, we will drop this last part of the boundary 
 and we will regard  $\parti (\R\times S^{n-1})$  just as a double lightcone (as in the case of $\Lo^{n+1}$), which is also chronologically complete.
  
Now, recall that both, $\R^n$ and $H^n$ are Cartan-Hadamard manifolds (i.e., complete simply connected manifolds with nonpositive curvature). Eberlein and O'Neill \cite{EO} proved that such a manifold $M$ is always endowed  with a natural ideal boundary  at infinity homeomorphic to $S^{n-1}$. Thus, for all these manifolds one would expect $\parti (\R\times M)$ to be also a double cone. 
However, one cannot expect that $\R\times M$ admits a  conformal extension in general. Indeed, starting at $H^n$, many $C^2$ perturbations of the metric will retain its Cartan-Hadamard character and, generically neither the Weyl tensor will be 0 nor it will admit a continuous extension to any  point of the ideal boundary $S^{n-1}$. These properties would be transferred to the product and, being the Weyl tensor conformally invariant,  {\em $\R\times M$  will not admit any continuous conformal extension. } 
\end{example}

In conclusion, the existence of many relevant cases where there exists a well-behaved conformal boundary, makes this boundary the most popular one in General Relativity. However, its generic inexistence as well as the necessity of some ad-hoc additional  hypotheses makes natural to look for a general intrinsic boundary which can be constructed systematically. 

\subsection{Notion of c-boundary}\label{s2b1}
The causal boundary (c-boundary, for short) was introduced by Geroch, Kronheimer and Penrose (GKP) in  \cite{GKP}. 
This  boundary $\partial_c\m$ and the corresponding completion $\m_c=\m \cup \partial_c \m$ is constructed systematically for
any strongly causal spacetime $\m$, starting at a partial future completion $\hat \m= \m \cup \hat \partial \m$ and a dual past one $\check \m$. $\hat \m$ is composed by all the  {\em indecomposable past sets} (IP), which may be
either {\em terminal} (TIP),   
$P=I^-(\gamma)$ , where $\gamma$ is any future-directed timelike continuously inextensible curve,  or {\em proper} (PIP) $P=I^-(p), p\in\m$; TIP's yield $\hat \partial \m$ and PIP's identify $\m$ in $\hat \m$.  For $\hat \partial \m$, analogous future IF's, TIF's and PIF's are defined. To construct $\m_c$, some non-trivial issues appear. They were discussed along decades and carefully analyzed in \cite{FHS11}, whose conclusions are summarized next, stressing the globally hyperbolic case.

\subsubsection{Pairings $(P,F)$}\label{s_Pairings}  
In principle, $\m_c$ is composed of pairs $(P,F)$ of IP's and IF's and, in particular, any $p\in\m$ is represented by $(I^-(p),I^+(p))$ in $\m$.
However, the eventual 
pairing of a TIP $P$ with a $TIF$ $F$ is not obvious. Szabados \cite{Sz88,Sz89} introduced the best option \cite[\S 3.1.1]{FHS11}, namely, given $P\in \hat \m$ (a dual statement is imposed given $F\in \check{\m}$),  the pair $(P,F)$ belongs to $\m_c$ whenever: 
(i) $F\in \check \m$, (ii) $F$ is included in the common future $\uparrow P:=\cap_{q\in P} I^+(q)$ and (iii) $F$ is maximal 
there (i.e. no other $F'\supsetneq F$ satisfies (i) and (ii)); if no such an $F
$ exists then $(P,\emptyset)\in \m_c$. Noticeably, a big simplification appears in the globally hyperbolic case \cite[Th. 3.29]{FHS11}:

\begin{prop}\label{p_timelikepoint_caus}
A strongly causal spacetime is globally hyperbolic if and only if no pair $(P,F)$ with $P\neq \emptyset \neq F$ belongs to $\m_c$.
\end{prop}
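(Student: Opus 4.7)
The plan is to combine, on one side, the compactness of causal diamonds in globally hyperbolic spacetimes with the imprisonment principle in strongly causal ones and, on the other, a standard limit-curve construction together with a Zorn-type extraction of a maximal IF. Recall that, under the Szabados condition, whenever $P$ is terminal its partner $F$ must be maximal among IFs contained in the common future $\uparrow P=\bigcap_{p\in P} I^+(p)$; in particular $F\subset \uparrow P$.

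For the direct implication, suppose $\m$ is globally hyperbolic and assume, by contradiction, that some boundary pair $(P,F)$ has $P\neq\emptyset\neq F$. By duality, I may assume $P$ is a TIP, so $P=I^-(\gamma)$ for some future-directed, future-inextensible timelike curve $\gamma$. Choose any $q\in F\subset \uparrow P$: then $P\subset I^-(q)$, hence $\gamma\subset J^-(q)$. Fixing $p_0\in\gamma$, the future tail of $\gamma$ is imprisoned in the compact diamond $J^+(p_0)\cap J^-(q)$, which contradicts the fact that no future-inextensible causal curve in a strongly causal spacetime can be imprisoned in a compact set.

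For the converse, assume $\m$ is strongly causal but not globally hyperbolic, and pick $p\ll q$ with $J^+(p)\cap J^-(q)$ non-compact. A standard limit-curve argument then yields a future-directed, future-inextensible timelike curve $\gamma$ starting arbitrarily close to $p$ and contained in $J^-(q)$. Set $P:=I^-(\gamma)$, a TIP; by construction $q\in\uparrow P$, so $I^+(q)$ is a non-empty IF inside $\uparrow P$. The family of IFs contained in $\uparrow P$, ordered by inclusion, is inductive: unions of chains remain future sets and remain past-directed (for any two points of the union one locates a common link of the chain and applies the indecomposability of that link to obtain a common past predecessor inside it), hence are IFs. Zorn's lemma produces a maximal such $F\supset I^+(q)$, and $(P,F)$ is then a Szabados pair in $\partial_c\m$ with both entries non-empty, contradicting the hypothesis.

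The principal obstacle is the limit-curve construction in the converse: from a sequence in $J^+(p)\cap J^-(q)$ without convergent subsequence one must extract, using strong causality to exclude finite accumulation, a future-inextensible causal curve staying inside $J^-(q)$, a classical but technically delicate step. The subsidiary points (verifying Szabados maximality, the stability of the IF property under chain unions, and the handling of the forward direction when $F$ happens to be a PIF rather than a TIF so that the pair still qualifies as a boundary element because $P$ itself is terminal) are routine once the basic framework is in place.
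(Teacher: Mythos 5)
The paper offers no proof of this proposition; it only cites \cite[Th.~3.29]{FHS11}, and your argument is essentially the standard one underlying that reference: imprisonment of the generating curve of the TIP in a compact diamond for the direct implication, and a limit-curve construction followed by a Zorn extraction of a maximal IF inside $\uparrow P$ for the converse. Both halves are sound, including the verification that unions of chains of IFs are IFs and the reduction by time-duality to the case where $P$ is terminal. One small correction: the limit curve produced in the converse is only causal, not timelike, so to conclude that $P=I^-(\gamma)$ is a TIP you must invoke the fact (recorded in the paper's own footnote via \cite[Prop.~3.32]{FHS11}) that the chronological past of a future-inextendible \emph{causal} curve is still an indecomposable past set; with that citation in place the argument is complete.
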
 

That is, in the globally hyperbolic case, $\partial_c\m$  contains a future and past part, each one identifiable to $\hat\partial \m$ and $\check \partial \m$, resp. (compare with Prop.~\ref{p_timelikepoint_conf}).

\subsubsection{Causality}\label{s_Causality} Once $\m_c$ is defined as a set, it is easy to introduce an extended   {\em chronological relation} $\overline{\ll}$ to $\m_c$, namely, $(P,F)\overline{\ll} (P',F')$ when $F\cap P'\neq \emptyset$. In the globally hyperbolic case, as all the boundary points have either $P$ or $F$ equal to $\emptyset$, the only possible chronological relations appear between points type 
$(\emptyset, F)\in \check \m \subset \partial_c \m$ and  $(P,\emptyset)\in \hat \m\subset \partial_c \m$. 
However, one can still define that, say, $(P,\emptyset)$ and $(P',\emptyset)$ are {\em horismotically related} if $P \supsetneq P'$. In particular, $\{P_t\}_{t\in\R}\subset \hat \partial \m$ is a {\em lightlike line} if $t<t'\Rightarrow  P_t \subsetneq P_{t'}$.

\subsubsection{Topology}\label{s_Topology} To find a reasonable topology for $\m_c$
is much subtler. Indeed,  GKP gave a first approach, but they warned it should  be developed further. 
Beem \cite{Be77} introduced a metrizable topology  applicable to the globally hyperbolic case (see also \cite{CFHadd, Mu19}).   
Harris \cite{Ha00} (see also \cite{Ha98}) introduced the {\em chr-topology} for the partial boundaries $\hat \partial \m, \check \partial \m$. Using it, Flores \cite{Fl07} introduced a topology (also named {\em chr}) in the causal completion $\m_c$. The  analysis in
\cite{FHS11} gave a strong support to this topology, at least as the  minimum  one for $\m_c$, that is, {\em any admissible topology on $\m_c$ should contain (be finer than) the chr-topology}. Indeed, 
Beem's topology fulfills this property. 

The simplification of $\partial_c \m$ in the globally hyperbolic case (Prop.~\ref{p_timelikepoint_caus}) makes the chr-topology in $\partial_c \m$ equivalent to Harris' one in $\hat\partial \m\cup \check{\partial}\m$. 
To describe the chr-topology in $ \hat\partial \m$, first, one defines a limit operator $\hat L$ for sequences $\{P_m\}_m$ of IP's, so that if $P \in \hat L(\{P_m\}_m)$ then $P$ 
is one of the possible limits of the sequence $\{P_m\}_m$. Then, a set $C\subset \hat \m$ is defined as closed  for the chr-topology   if and only if $\hat L(\{P_m\})\subset P$ for any sequence $\{P_m\}_m$  included in $\hat \m$.

Specifically,  $\hat L$ is defined as follows. $P \in \hat L(\{P_m\}_m)$  when it satisfies: (i) $P$ is included in the inferior limit of $\{P_m\}_m$ (for each $p\in P$ exists $k\in\N: p\in P_k$ if $k\geq m$)   and 
(ii) $P$ is {\em maximal} in the superior limit of $\{P_m\}_m$ (which contains the points of $\m$ included in infinitely many $P_m$'s); here, maximal means that  no IP $P'\supsetneq P$ is included in the superior limit.   

\begin{rem} There are some mathematically relevant issues about this procedure. The first one is suggested in the definition of $\hat L$, namely  $\hat L(\{P_m\}_m)$ may contain more than one point and, thus, the chr-topology may be non-Hausdorff (this will be  discussed below), even though it is always $T_1$.  

More subtlety, 
the limit operator may not be of {\em first order}, that is,  $\{P_m\}_m$ might converge to $P$ even if  $P\not\in \hat L(\{P_m\}_m)$. Recall that  the chr-topology is the coarsest one (i.e., the one containing less open subsets)  so that a set,  $C$, is closed whenever it contains the limits pointed out by $\hat L$. However, the construction of such a topology may lead to other limits not pointed out by $\hat L$   (even though the chr-topology remains sequential), see the analysis in  \cite[\S 6]{FHS11}). 
So far, the suggested pathologies  only appear  in maliciously tailored mathematical examples.
\end{rem}

\subsection{The c-boundary of a static product 
}\label{s_cbounc_staticproduct}

\subsubsection{Summary of known results}\label{s_Summary caus bound} Consider a product 
$\m= (\R \times M, -dt^2+\g)$ with complete $\g$ (that is,   globally hyperbolic). Nowadays, its causal boundary is fully well understood. In the case $M$ compact,   $\hat \partial \m$ is just the TIP $i^+=M$, otherwise: 
\ben \item 
\label{ia}
 Harris \cite{Ha01} studied  $\hat\partial \m$ as a pointset proving that there is a natural bijective  correspondence between $\hat\partial \m$ and the space of Busemann-type functions on $M$. Then, $\hat\partial \m$ becomes a lightlike cone on the Busemann boundary $\partial_BM$ with apex $i^+$,  that is, it contains a lightlike line with the common vertex $i^+$ for each point of $\partial_BM$.

\item 
\label{ib}   Flores and Harris \cite{FH07} studied the chr-topology on this spacetime, and  gave a noticeable example, the {\em  unwrapped grapefruit on a stick}, showing that this topology might be non-Hausdorff\footnote{It is worth pointing out that  such a possibility for the causal boundary was known only in the non-globally hyperbolic case. Indeed,  it is very easy to construct examples with two distinct TIP's, $P_1, P_2$, paired with the same TIF,   $F$. The reader can check that this occurs for  the half plane $t>0$ of $\Lo^2$ with the segment $0<t\leq 1, x=0$ removed, being $F$ the chronological future of the point  $t=1,x=0$ in $\Lo^2$.}.     \cite[\S 2.1]{FH07}. 

\item 
\label{ic} The extensive  analysis in \cite{FHS_Memo} showed that the Busemann completion $M_B = M\cup \partial_B M$ is related to a compactification 
$M_G
$ 
introduced by Gromov in \cite{Gr}, so that  $M_B$ is included in $M_G$ in a natural way. 

Moreover,  the three following possible ``pathological'' properties become equivalent: (i)~$M_B\subsetneq M_G$, (ii)~the inclusion $M_B\hookrightarrow M_G$ is not continuous, and (iii)~$M_G$ has ``spurious''  points (which are not reachable by asymtotically ray-like curves). 

When these 
possibilities do not hold, then $M_B=M_G$ and $\hat \partial \m$ becomes a lightcone  on $\partial_BM$, even at the topological level (i.e., so that  the topology is the natural product one when the vertex is removed),   
see the summary in \cite[Figure~6.2]{FHS_Memo}.   
\een

\subsubsection{Brief explanation and an example}\label{s_brief_expalantion}
We will not go into the details of the boundaries mentioned above, but will give  a simple intuitive idea. 

In Riemannian Geometry, given a ray (unit complete minimizing half geodesic) $c:[0,\infty)\rightarrow M$, its Busemann function $b_c: M\rightarrow \R$  is defined  as the limit $b_c(x)=\lim_{t\rightarrow \infty}(t-d(x,c(t))$, for $x\in M$, where $d$ is the distance associated with $\g$.
In the case of Cartan-Hadamard $n$-manifolds, the set of Busemann functions, up to an additive constant, yields the aforementioned Eberlein and O'Neill's boundary equal to $S^{n-1}$. 

In the case of a static product, any future directed  inextensible causal curve parametrized with $t$, $\gamma(t)=(t,c(t)), t\geq t_0$, will have $|\dot{c}| \leq 1$. This permits to define a Busemann-type function $b_c$ by using the same formal expression as above (and including $b_c\equiv \infty$, which can be obtained with a  constant curve $c$). Then, the IP $P=I^{-}(\gamma)$ is the strict hypograph of $b_c$:
$$
I^-(\gamma )= \hbox{hyp}(b_c):=\{(t,x)\in \m: t<b_c(x), x\in M\}.
$$
Thus, there is a bijective correspondence\footnote{Two subtleties in this correspondence are the following: (a) it is irrelevant to impose $|\dot c|\leq 1$ or $|\dot c|< 1$ in the definition of Busemann functions because if $\gamma$ is an inextendible causal curve then $I^-(\gamma)$ is a TIP (see \cite[Prop. 3.32]{FHS11} and \cite[Remark 4.17 (1)]{FHS_Memo}), and (b)~the function  identically equal to $\infty$ is included as a Busemann one, as it corresponds with the Busemann function $b_{c_{x_0}}$ of the curve $c_{x_0}$ constantly equal to any $x_0\in M$
(thus, the timelike curve $\gamma_{x_0}=(t,x_0)$, $t\in\R$ yields the TIP $P=I^-(\gamma_{x_0})=\m=: i^+$); no other Busemann function can reach the value $\infty$ as all of them are Lipschitz.} 
between Busemann type functions and TIP's. Moreover, $\partial_BM$ is defined as the set of the classes of equivalence  $b_c+\R$ each one containing Busemann functions which only differ by an additive constant.  All the TIP's corresponding to the same class $b_c+\R$ will lie in a lightlike line. This  explains the point \eqref{ia} in \S \ref{s_Summary caus bound}.  

To construct the Gromov completion $M_G$ of $M$ one starts  by identifiying  each $x\in M$ with $d(x,\cdot)$ (the function ``distance  to $x$''), which is  Lipschitz,  and noticing that its class  $d(x,\cdot)+\R$ also identifies $x$. $M_G$ is obtained by taking the closure of $\{d(x,\cdot )+\R: x\in M\}$ in the quotient space 
Lip$(M,\g)/\R$. This space is compact\footnote{One can take a representative of each class in Lip$(M,\g)$ which vanishes at some chosen $x_0\in M$ and it is easy to find that the set of Lipschitz function which coincide at a point are compact (recall that the topologies of pointwise convergence and uniform convergence on compact sets agree).} and, then, so is $M_G$ (it is closed set in  a compact one). The inclusion $M_B\subset M_G$ holds taking into account that each Busemann function $b_c$ is Lipschitz, what explains the first assertion in \eqref{ic}, \S \ref{s_Summary caus bound}.

In order to highlight the topological difficulties  in \eqref{ib} and the remainder of \eqref{ic}, we will consider an example introduced in \cite{FHS_Memo}, which is simpler than the aforementioned grapefruit stick. It can be regarded as a surface in $\R^3$ which is topologically a diverging connected sum of infinitely many tori (Fig.~\ref{fig11}). For the purpose of the completions, it is enough to consider the infinite stairs with diverging steps (which is a CW complex rather than  a manifold) depicted  in Fig. \ref{fig22}. As a warning,  the distance between consecutive steps is constant, even though 
they are depicted as if it  decreased towards the left. Gromov's boundary adds a ``last step'' at infinity and, as suggested in the figure, sequences in the stairs diverging towards the left will admit partial subsequences converging to points in this last step. Busemann's boundary, however, only includes two points of Gromov's, see Figure \ref{fig3}. Indeed, the points in the interior of Gromov's  last step are ``spurious'' from Buseman's viewpoint, as they cannot be reached as endpoits of any curve in $M$.  $M_G$ is always a compactification of $M$ and $M_B$ is a sequential compactification; however, Beem's topology would not compactify it (compare with \cite{CFH_II}).

\begin{figure}
	\centering
\includegraphics[height=0.1\textheight]{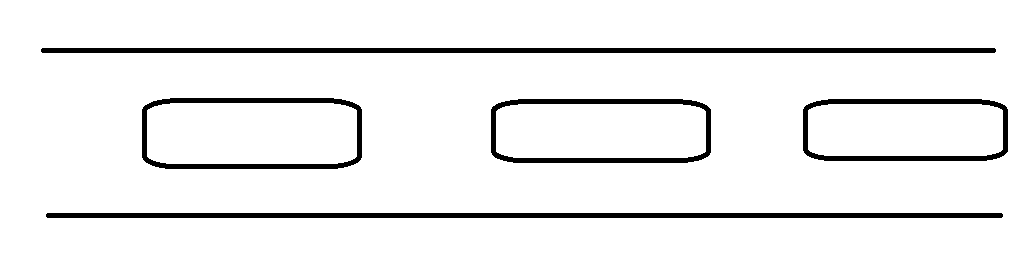}
\caption{\label{fig11} A complete 2-dim example Riemannian manifold with non-Hausdorff $\partial_BM$ homemorphic to the connected sum of infinite tori. $M_B$ and $M_G$ behave qualitatively as the infinite 1-dim stairs in Fig. \ref{fig22}.}
\end{figure}

\begin{figure}
	\centering

\includegraphics[height=0.1\textheight]{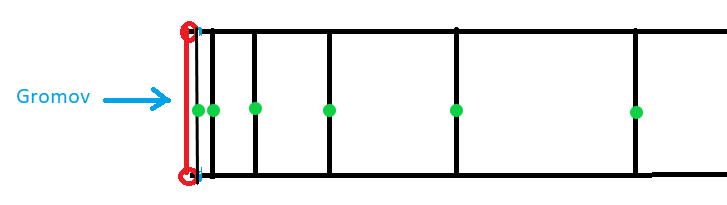}
\caption{\label{fig22}  The steps of the infinite stairs diverge, but we make them to accumulate at the right so that the ``last step'' given by the Gromov boundary can be visualized (in red). The  depicted sequence in the  steps (in green) will  converge to a single point in this last step. }
\end{figure}

\begin{figure}
	\centering

\includegraphics[height=0.1\textheight]{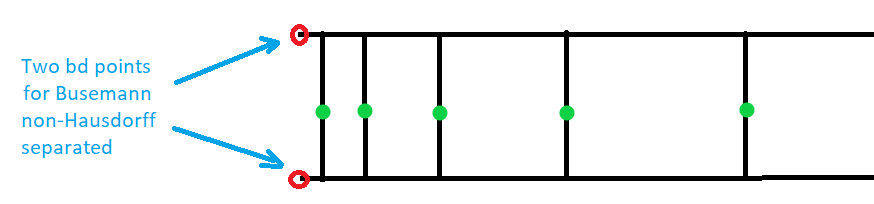}
\caption{\label{fig3} All the  points in $\partial_BM$ must be constructed as Busemann functions for some curve $c$. So, only the two endpoints of Gromov's last step belong to $\partial_BM$, and the depicted sequence converge to both. }
\end{figure}

\begin{rem}\label{r_linksLF1}
(1) The study in \cite{FHS_Memo} includes   the non-globally  hyperbolic case and, thus, the role of the Cauchy boundary of $\g$ in $\partial_c\m$. In short,   each point in that boundary yields a timelike line from $i^-$ to $i^+$ which can regarded as a naked singularity. 

(2) The  case of standard stationary spacetimes was also considered in  \cite{FHS_Memo}. As pointed out in Remark \eqref{r_linksLF0}, the causality of these spacetimes is characterized by the properties of the Fermat metric $F=\sqrt{g_0+ \omega_0^2}+\omega_0$. Indeed, this metric also determines the reversed one 
$$\tilde F(v)=F(-v)=\sqrt{g_0+ \omega_0^2}-\omega_0 
$$
and these two Finsler metrics, $F$ and $\tilde F$, 
yield two different Busemann completions $\partial_B^+M$, $\partial_B^-M$ (as well as two Gromov's) which determine $\hat \partial M$, $\check\partial M$, resp. This reflects the fact that the temporal inversion $t\rightarrow -t$ is not an isometry (thus, $\hat \partial\m$ may not be isomorphic to $\check\partial\m$). 
\end{rem}

\subsubsection{Mild dependence on $t$: GRW spacetimes}\label{s_GRW} Previous examples can be extended to the case of Generalized Robertson-Walker (GRW) spacetimes, which are defined as the warped products 
$$
(I\times M, g^f=-dt^2+  f^2(t)  \g), \qquad I=(a,b)\subset \R, \quad -\infty\leq a < b\leq \infty,
$$
where $f$ is a function on $I$ (see \cite{Sa98} for general background and \cite{AF} for boundaries). The conformal time change $ds=dt/f(t)$ yields 
 $$
\frac{g^f}{f^2(t(s))}=-ds^2 + \g \quad s \in I_0=(a_0,b_0),
$$
where $I_0$ is a new interval. Global hyperbolicity is still equivalent to the completeness of $\g$ and, assuming it,
if  $b_0=\infty$, $\hat \partial M$ is clearly a cone on $\partial_B M$ as in the static product case. However, in the case 
$b_0<\infty$ one has a trivial spacelike conformal boundary  
at  $t_0=b$ diffeomorphic to $M$
which agrees with the causal one $\hat \partial M$ at the topological and causal levels.

It is  worth pointing out that these results admit a non trivial extension to the case of multiwarped spacetimes type $(I\times M_1 \times \cdots \times M_k, -dt^2+\sum_{i=1}^kf_i^2g_{M_i})$, see \cite{AFH}. 

\subsubsection{Conclusion}
Both, the constructive procedure of  the c-boundary of a static product as well as the cases when this boundary is not well-behaved, are fully well understood. 

Under our viewpoint, one should not be specially worried about the mentioned possibility of a bad behaviour. Notice, for example, that the Cauchy boundary of a Riemannian manifold is commonly regarded as the simplest and more natural boundary therein. However,  
such a boundary may have ``bad'' properties \br even in the case of open subsets of $\R^n$, such as being non-locally compact (thus  different to the topological boundary in $\R^n$, see \cite[Example 4.9 (d)]{FHS_Memo}) \er
or having a fractal dimension \br (Koch snowflake). \er

\section{Causal boundary for sliced spacetimes}\label{s_causalboun_for_sliced}

In this section, first, we will discuss  the difficulties to reduce the c-boundary of a globally hyperbolic sliced spacetime  
to the boundary of a static product (which is known from \S \ref{s_cbounc_staticproduct}), 
under uniforms  bounds  in the same vein as in \S \ref{s_unifbound}. Then, we will see how to overcome them.

\subsection{Lowering the expectations on uniform bounds}\label{s_lowering}
One can wonder at what extent the causal boundary of a globally hyperbolic spacetime $\m$ written, up to a conformal factor, as a sliced one \eqref{e_orthogonalnormaliz_spl}, $\m=(I\times M, g=-dt^2+g_t)$, is equal to the boundary of a static product
$\m_0=(I\times M, -dt^2+g_M)$ (with $g_M$ complete), studied in \S \ref{s_cbounc_staticproduct}. Here, $g_M$ will play the role of $g_R$ in Prop. \ref{p_Bari}, so that  one should impose a uniform bound such as \eqref{e_unif_bound_gh} at least. Indeed, being more restrictive,  assume the ``sandwich'' bound: 
\be \label{e_uniform_constant_bounds}
c_1 \, g_M \leq g_{t_0} \leq c_2 \, g_M, \qquad \forall t_0\in I,
\ee
for some constants $0<c_1<c_2$. If $g_i:=-dt^2+c_i \, g_M$, $\m_i:=(I\times M, g_i)$ for $i=1,2$,  this bound implies $g_2\preceq g \preceq g_1$ and, clearly, one can identify naturally $\partial_c\m_0 \equiv \partial_c \m_1 \equiv \partial_c \m_2$. However, the simple example in \S \ref{ss4} (where  all the slices are isometric to $\R$) shows that {\em $\partial_c\m$ may be very different to $\partial_c \m_0$ even under the sandwich bound \eqref{e_uniform_constant_bounds}}.

\subsection{Key of the difficulties: isocausality}\label{s_isocausal}
The previous problem can be well understood thanks to the notion of {\em isocausality}, introduced by Garc\'{\i}a-Parrado and Senovilla in \cite{GPS}.

Two spacetimes $\m$, $\m'$ are called isocausal if there exists two diffeomorphisms, one of them 
 $\phi:\m\rightarrow \m '$ sending the future-directed causal cones of $\m$ inside those of $\m'$
and the other $\psi:\m'\rightarrow \m$ sending the causal cones of $\m'$ inside those of $\m$. Notice that, in general, 
  $\psi\neq \phi^{-1}$, the equality corresponding to the trivial case when $\m$ and $\m'$ are conformally equivalent. As an example, every spacetime $\m$ is {\em locally} isocausal to Lorentz-Minkowski $\Lo^{n+1}$.
  
Most typical causality properties for a spacetime (as being chronological, causal, distinguishing, strongly causal or stably causal) improve when narrowing  cones. So, if $\m$ fulfills one of them, then so will do all its isocausal spacetimes. In particular, this happens for global hyperbolicity, and  underlies  the applicability of the one-sided uniform bound in Prop.~\ref{p_Bari}. 
  
However, this is not the case for all the causal properties. An example in \cite[Example 3.1]{GS} exhibits two isocausal spacetimes, one of them  causally simple and the other not; indeed, it is two steps below in  the classical hierarchy of   causality\footnote{For this hierarchy, see for example \cite{MinSan}.}  as it is  neither causally continuous. 

The fact that  the  c-boundary is {\em not} preserved   by isocausality was noticed in \cite{FHS_isoc}. Indeed, the aforementioned counterexample in \S \ref{ss4} is a variant of one in this reference (which had a timelike boundary and, so, was is not globally hyperbolic).

\subsection{Natural hypotheses to compute the c-boundary}\label{s_computecbound}
 In spite of these difficulties, a closer look shows that isocausality can be used to obtain very valuable information on the c-boundary in the previous setting. A general procedure is developed in  \cite{FHS13}, and we describe briefly it for our case. 
 
 Consider a sliced spacetime $\m= (\R \times M, g=-dt^2+g_t)$, let
$g_M$ be a complete  metric  on $M$ and $\alpha: I\rightarrow \R$ such that  $0<\alpha< 1$, and define $\gc:= -dt^2+g_M$, $\go:= -dt^2+ \alpha^2(t) g_M$. Assume that
 
\begin{equation}\label{e_alpha_bound}
\alpha^2(t) g_M \leq g_t \leq g_M, \qquad \hbox{that is,} \quad \gc\preceq g \preceq \go.
\end{equation}
The key results can be summarized as follows:
\ben
\item \label{(1)} For any $\alpha$, the temporal function $t$ is Cauchy for $\go$  (apply  Prop.~\ref{p_Bari} to $\gc$). In particular, $\go$ (and $g$) is  globally hyperbolic.
\item \label{(2)} If $\int_0^\infty dt/ \alpha(t)=\int^0_{-\infty} dt/ \alpha(t)=\infty$, then $\go$ is conformal to $\gc$ (change to conformal time $ds=dt/\alpha(t)$ as in \S\ref{s_GRW}). Thus, $\m$ is isocausal to the static product $\m_{\tiny{cl}}=(\R\times M, \gc)$.

\item \label{(3)} Assuming that $\alpha$ satisfies the more restrictive hypothesis:
\begin{equation}\label{e_alph_asymp}
\int_0^\infty  \left(\frac{1}{\alpha(t)}-1\right) dt <\infty,
\qquad \int^0_{-\infty}
\left(\frac{1}{\alpha(t)}-1\right) dt <\infty,
\end{equation}
(in particular, if $1-\alpha(t) \sim O(1/|t|^{1+\epsilon})$ for some $\epsilon>0$ and large $|t|$) then there is a natural inclusion of  $j:  \partial_c \m_{\tiny{cl}} \hookrightarrow 
\partial_c  \m$. Thus, $\partial_c  \m$ includes the double cone with apexes  $i^+,i^-$ corresponding to  the Busemann boundary of $(M,g_M)$ (see \cite[Th. 5.5, Cor. 5.6]{FHS13}).
\een 
More precisely about the last point, the inclusion $j$ is defined just as 
$$
j(P_{\tiny{cl}}) = I^-(P_{\tiny{cl}}), 
\qquad \hbox{for any TIP} \; P_{\tiny{cl}}\in \hat\partial \m_{\tiny{cl}}
$$  
(the past $I^-$ computed with $g$), and analogously for TIF's. Here,  \eqref{e_alph_asymp} is used to prove the  injectivity of $j$, \cite[Prop. 5.3 (2)]{FHS13}. 
It is possible, however, the existence of two $g$-TIP's, $P^1,P^2$ such that:
$$P_{\tiny{cl}}\subset P^1\cap P^2 \qquad \hbox{and} \qquad  I_{\tiny{op}}^-(P^1)=I_{\tiny{op}}^-(P^2)=I_{\tiny{op}}^-(P_{\tiny{cl}}).$$ In this case, we say that  $P^1$ and $P^2$ are st-related. 
 Under \eqref{e_alph_asymp}, this is a relation of equivalence\footnote{Lem. 5.10 in \cite{FHS13} provides both, the existence of $P_{\tiny{cl}}$ for reflexivity  and its uniqueness for transitivity in  \cite[Def. 5.11]{FHS13}.},  whose classes are called {\em strains}\footnote{Properly, the name {\em strain} would correspond to the non-trivial classes, i.e., those with more than a point, as in  \cite[Def. 5.11]{FHS13}.}. Then, the map:
$$\hat\partial \m_{\tiny{cl}}\ni P_{\tiny{cl}}\mapsto [j(P_{\tiny{cl}})] \in \{\hbox{strains of} \; \partial \m\}  
$$ is always bijective and continuous. Moreover, it is a homeomorphism when the codomain (the space of all the strains) is Hausdorff   
 \footnote{In this case, the boundary $ \hat  \partial  \m_{cl}$ will be
 Hausdorff too. It is worth pointing out that the non-globally hyperbolic
 case is also considered in \cite[Th. 5.16]{FHS13}. In order to obtain the conclusion
 in this more general setting, one imposes hypotheses so that the Cauchy
 boundary can neither introduce additional non-Hausdorff related points in
 $\hat \partial \m_{cl} $. Then the Hausdorffness of the quotient implies
 that  $\hat \partial \m_{cl}$ is also Hausdorff  (see also \cite[Remark
 5.17]{FHS13}).}  \cite[Th. 5.16]{FHS13}.


\begin{rem}\label{r_linksLF2}
In the same line as Remarks \ref{r_linksLF0}, \ref{r_linksLF1}, it is worth poining out that the previous study of the c-boundary can be extended to the case when the bounding spacetime is standard  stationary instead of a static product (now the Fermat metric $F$ and its reverse $\tilde F$ would play the role of  $g_M$). Indeed, this case as well as  the case when the sliced metric is not globally hyperbolic were carefully considered in \cite{FHS13}. 
\end{rem}

\subsubsection*{Conclusion}  Given a sliced spacetime $\m$, if one finds a static product $\m_0=(\R\times M,-dt^2+g_M)$ with $g_M$ complete and a uniform bound \eqref{e_alpha_bound}  converging asymptotically to the static product as in \eqref{e_alph_asymp},  then $\partial_c\m$ becomes a ``strained'' static double cone. The strains are due to the rather uncontrolled freedom for the cones which is permitted by such bounds on the metric (even if the metric converges uniformly to the static one with $|t|$).
  
This provides an  attractive  picture of $\partial_c\m$ as an ``skeleton'' (see \cite{FHS13}). First, it contains   $\partial_c\m_0$, which serves as the ``column'' (a sort of profile at large scale) of $\partial_c\m$.
  Each one of the points in $\partial_c(\R\times M)$  is represented as $I^-(P_{\tiny{cl}})$ in $\partial_c\m$  and it is  the representative of a strain in the column which is included in $I^-_{\tiny{op}}(P_{\tiny{cl}})$.

Such a procedure may be extended to other cases (see Remark \ref{r_isocaus_paraelfuturo} below) and, eventually,  one could try to control the strains by imposing $C^1$ bounds.  In any case,  direct computations would be possible for specific  physical models (where, probably,  proper strains will not appear).

\subsection{Globally hyp. s.t. with-timelike-boundary }\label{s_ghwithboundary} We end by discussing briefly the general case obtained  taking spacetimes with boundary. Specifically, we consider
globally hyperbolic spacetimes-with-timelike-boundary, which were studied systematically in Sol\'{\i}s' PhD Thesis \cite{Didier}, providing the background for Chrusciel et al. \cite{CGS}. Recently, they have been revisited in \cite{AFS}, which will be followed here (see also Ak\'e's PhD Thesis \cite{Ak}).

\subsubsection{Basic ingredients}\label{s_borde1} The conformal and causal boundary approaches merge naturally in the case of a globally hyperbolic spacetime-with-timelike-boundary $\overline{\m}= \m \cup \partial \m$. In fact, the timelike  boundary $\partial \m$  may have different interpretations such as  a conformal representation of the naked singularities at infinity (as it happens for AdS spacetime in a standard way)
or a cut-off for the asymptotic behaviour of the spacetime.
In the latter case, one would circumvent the subtleties of the c-boundary at infinity, by imposing boundary conditions on the relevant fields at $\partial \m$. Eventually, the cut-off migth be  carried out only in some directions and, in any case, the inclusion $i:\partial \m \hookrightarrow \overline{\m}$ can be seen as a conformal boundary naturally included  in $\partial_c \m$.  

A spacetime-with-timelike-boundary $\overline{\m}$ is a manifold with boundary endowed with a (time-oriented) Lorentzian metric $g$ such that its boundary $\partial \m$ is timelike; in particular, $\partial\m$  becomes a spacetime with the restriction of $g$. The study of its causality mimics the case without boundary. However, the following caution about  regularity 
must be taken into account: causal curves must be taken  locally Lipschitz\footnote{For causal curves this is equivalent to being $H^1$, which is the natural regularity hypothesis to study variationally spacelike geodesics \cite{Masiel, CanSan}.} from the beginning in order to compute the appropriate causal futures and past (even in the case that $\overline{\m}$ is $C^\infty$). Indeed, a striking difference with the case without boundary  is that  piecewise smooth causal curves may reach less points than locally Lipschitz ones (i.e., $J_{ps}(p) \subsetneqq J_{Lipsch}(p)$), see \cite[Appendix B]{AFS} for 
a counterexample.
  
Following the case  without boundary $\overline{\m}$ is called {\em globally hyperbolic} when it is causal and $J^+(p, \overline{\m})\cap J^-(q,\overline{\m})$ is compact\footnote{As pointed out above, $J^\pm$ are computed with locally Lipschitz causal curves. A more restrictive hypothesis would be to assume that  the compactness  holds when $J^\pm$ are computed with piecewise smoooth ones \cite[Prop. 2.22]{AFS}.} for all $p,q\in \m$ (here the notation makes explicit that the futures and pasts  are computed on the whole $\m$). In this case   $\partial \m$ is globally hyperbolic too, and the interior $\m$ is causally continuous \cite[Th. 3.8]{AFS} (both with the restricted metric).
Taking into account Prop. \ref{p_timelikepoint_caus} and the equivalence between the conformal and causal boundaries stated below Prop. \ref{p_timelikepoint_conf} one has:

\begin{prop}\label{p_timelikepoint_borde}\cite[Th. A6]{AFS}
A strongly causal spacetime-with-timelike-boundary $\overline{\m}$ is globally hyperbolic if and only if all the nontrivial pairs $(P,F) \in \overline{\m}, P\neq \emptyset\neq F$ are identifiable to points in $\partial \m$ (i.e. $P=I^-(\bar p, \m), F=I^+(\bar p, \m)$ for some $\bar p\in \partial \m$).
\end{prop}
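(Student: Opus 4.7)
The plan is to combine the two characterizations of global hyperbolicity hinted in the excerpt: the causal one (Prop.~\ref{p_timelikepoint_caus}, no nontrivial pairs in $\m_c$) and the conformal one (Prop.~\ref{p_timelikepoint_conf}, no timelike point on a $C^1$ chronologically complete conformal boundary), linked by the equivalence of conformal and causal boundaries recalled below Prop.~\ref{p_timelikepoint_conf} (i.e.\ \cite[Theorem~4.33]{FHS11}). The key observation is that the inclusion $i:\m\hookrightarrow\overline{\m}$ is a conformal open embedding whose topological boundary is exactly the timelike hypersurface $\partial\m$, so $\partial\m$ simultaneously plays the role of a conformal boundary for $\m$ and of the candidate realizer of the nontrivial pairs of $\m_c$.

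For the ``only if'' direction, assume $\overline{\m}$ is globally hyperbolic. By \cite[Th.~3.8]{AFS}, $\m$ is causally continuous, hence strongly causal, and by hypothesis $\partial\m$ is a $C^1$ timelike hypersurface, which in particular has no timelike point in the sense of Prop.~\ref{p_timelikepoint_conf}. I would then verify chronological completeness of this conformal extension as follows: given a nontrivial pair $(P,F)\in\m_c$, choose $p\in P$ and $q\in F$ with $p\ll_\m q$; compactness of the diamond $J^+(p,\overline{\m})\cap J^-(q,\overline{\m})$ forces the future-inextensible timelike curve defining $P$ to accumulate at some $\bar p\in\overline{\m}$, which must lie in $\partial\m$ by inextensibility in $\m$. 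A dual past argument produces the same kind of endpoint for $F$, and strong causality of $\overline{\m}$ forces the two endpoints to coincide. The conformal/causal equivalence \cite[Theorem~4.33]{FHS11} then identifies $(P,F)$ with $(I^-(\bar p,\m),I^+(\bar p,\m))$.

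For the converse direction, I would reason by contradiction. If some diamond $J^+(p,\overline{\m})\cap J^-(q,\overline{\m})$ is non-compact, take a sequence of causal curves from $p$ to $q$ without convergent subsequence; the limit curve lemma, valid in the locally Lipschitz setting recalled in \S\ref{s_borde1}, yields an inextensible causal accumulation curve whose associated TIP $P$ and TIF $F$ form a nontrivial pair in $\m_c$ which cannot admit any common endpoint in $\overline{\m}$, contradicting the hypothesis that $(P,F)$ should correspond to some $\bar p\in\partial\m$.

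The main technical obstacle lies in the forward direction: once $\bar p\in\partial\m$ is produced as an accumulation point, one must check the Szabados maximality condition, namely that $I^+(\bar p,\m)$ is the \emph{largest} IF contained in $\uparrow P$ (and dually for $I^-(\bar p,\m)$). The timelike character of $\partial\m$, together with strong causality of $\overline{\m}$ near the boundary, should preclude any strictly larger IF, since any such enlargement would require a causal curve leaving a neighborhood of $\bar p$ and re-entering $\m$ through $\partial\m$. Making this rigorous, and thereby fitting the situation into the hypotheses of \cite[Theorem~4.33]{FHS11}, is the essential new content beyond a direct appeal to the cited results.
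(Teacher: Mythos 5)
Your proposal follows essentially the same route as the paper, which derives the statement in a single line from Prop.~\ref{p_timelikepoint_caus} together with the conformal/causal boundary equivalence recalled below Prop.~\ref{p_timelikepoint_conf}, deferring all details to the cited \cite[Th. A6]{AFS}. The extra work you supply (endpoints on $\partial\m$ via compact diamonds, the limit-curve argument for the converse, and the Szabados maximality check) is precisely the content delegated to that reference, and your identification of the maximality condition as the delicate step is accurate.
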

\subsubsection{Extension of results to the case with boundary}\label{s_borde2}
The techniques in the case without boundary apply now yielding a natural extension of Geroch theorem and, in particular, the equivalence between global hyperbolicity and the existence of an acausal Cauchy hypersurface-with-boundary $\bar S=S\cup \partial S$ (crossed exactly once by any inextensible causal curve in $\overline{\m}$). Moreover,  
the following non-trivial extension also holds.

\begin{thm}\label{t_princ_borde}
\cite[Th. 1.1 and \S 5.2]{AFS} Any  globally hyperbolic spacetime 
with-timelike-boundary $\overline{\m}$ admits a steep  Cauchy temporal function $\tau$ {\em with $\nabla \tau$  tangent to $\partial \m$.}

As a consequence, $\overline{\m}$ is isometric to a Cauchy temporal splitting 
 $(\R\times \bar S, 
g= -\Lambda d\tau^2 + g_\tau)$  
 with $0<\Lambda\leq 1$, where $\bar S$ is a spacelike Cauchy hypersurface-with -boundary, 
  and every $\tau$-slice is also Cauchy and inherits a Riemannian metric given by $\tau$. 
\end{thm}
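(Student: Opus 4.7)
The plan is to adapt the construction of a steep Cauchy temporal function from \S \ref{s_F4} to the setting with timelike boundary, with the extra requirement that the gradient of $\tau$ lies tangent to $\partial\m$. The strategy decomposes into (a) producing an appropriate supply of local building blocks that already satisfy the tangency constraint, (b) reassembling them via the inductive locally finite sums as in the proof of Theorem \ref{t_isomNash}(3), and (c) deriving the splitting from the flow of $-\nabla\tau/|\nabla\tau|^2$. The extension of Geroch's theorem for spacetimes-with-timelike-boundary (which I would invoke as already established) provides Cauchy hypersurfaces-with-boundary $\bar S_k$, giving the strips $J(\bar S_k,\bar S_{k+1})$ on which the construction proceeds.

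First, I would refine the steep forward cone functions $h_{p,\bar S}$ of \S \ref{s_F4} so that they behave correctly at $\partial\m$. For interior points $p\in\m$ the classical construction is unaffected, provided the support is chosen inside $\m$. For $p\in\partial\m$ there are two natural routes. The direct one is to work in a collar neighborhood where one may fix ``Gaussian'' coordinates $(s,y)\in(-\varepsilon,\varepsilon)\times \partial\m$ relative to the outward $g$-unit normal $\nu$ to $\partial\m$ (note $\nu$ is spacelike since $\partial\m$ is timelike), writing $g=ds^2+g_{\partial\m,s}$ in the collar; one builds $h_{p,\bar S}$ as the even-in-$s$ extension of the analogous construction carried out intrinsically inside the globally hyperbolic boundary spacetime $\partial\m$, and then glues this with a bump supported in the collar. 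By construction $\partial_s h_{p,\bar S}\equiv 0$ on $\{s=0\}=\partial\m$, i.e.\ $g(\nabla h_{p,\bar S},\nu)=0$ there, which is exactly the tangency condition. Alternatively, one may double $\overline\m$ across $\partial\m$ using the adjusted collar, obtain a globally hyperbolic spacetime-without-boundary $\tilde\m$ carrying a $\mathbb Z_2$-isometry, apply M\"uller's equivariant refinement of the steep Cauchy temporal construction (Remark \ref{r_olaf}, \cite{Mu16}), and restrict the resulting $\mathbb Z_2$-invariant function to $\overline\m$; invariance forces its derivative in the reflected direction to vanish on the fixed set $\partial\m$.

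Either way, tangency of the gradient at $\partial\m$ is a \emph{linear} condition ($d\tau(\nu)=0$ on $\partial\m$), hence preserved by all the operations used to assemble $\tau$: locally finite sums, composition with non-decreasing smooth rescalings, and the $k$-step inductive combinations $h_0^++h_1^+ +\dots$ that produce steepness in a sequence of strips $J(\bar S_m,\bar S_{m+1})$. Steepness itself is obtained exactly as in \S \ref{s_F4}, by carefully choosing the successive summands so the new gradient, added to the previous partial sum, remains past-directed timelike and satisfies $|g(\nabla\tau,\nabla\tau)|\geq 1$ on each already-treated strip. Completing the dual construction towards the past yields a steep Cauchy temporal function $\tau$ on all of $\overline\m$ with $\nabla\tau$ tangent to $\partial\m$.

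For the second assertion, set $T=-\nabla\tau/|\nabla\tau|^2$, so that $d\tau(T)\equiv 1$ and, crucially, $T$ is tangent to $\partial\m$; hence its flow $\phi_s$ is complete (each slice is Cauchy) and preserves $\partial\m$. The spacelike Cauchy hypersurface-with-boundary $\bar S:=\tau^{-1}(0)$ together with the map $(s,\bar p)\mapsto\phi_s(\bar p)$ gives a diffeomorphism $\R\times\bar S\to\overline\m$ in which $g$ takes the claimed form $-\Lambda\,d\tau^2+g_\tau$ with $\Lambda=1/|\nabla\tau|^2\le 1$ thanks to steepness. The main obstacle is the boundary-adapted version of the local building blocks: the classical $h_{p,\bar S}$ is manufactured from a Lorentzian distance in a convex neighborhood of $p$, and at $p\in\partial\m$ one must enforce the tangency $\partial_s h\equiv 0$ on $\{s=0\}$ while still keeping $\nabla h$ timelike past-directed on the relevant region and steep on $J(p,\bar S)$. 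Once a functional calculus symmetric under the collar reflection (or, equivalently, the equivariant framework of the doubling) is set up, the rest of the argument is essentially the same as in \S \ref{s_F4}, extended verbatim to the boundary case.
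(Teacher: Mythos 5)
Your second route (doubling plus M\"uller's equivariant construction) is exactly the strategy of the paper, but as written it has a genuine gap at its central point: you cannot simply ``double $\overline{\m}$ across $\partial\m$'' for the given metric $g$. The even reflection of $g$ across $\partial\m$ is smooth only when $g$ is a product near the boundary (equivalently, when the obstruction coming from the second fundamental form of $\partial\m$ vanishes); for a general timelike boundary the doubled metric is merely continuous, the reflection is not an isometry of anything smooth, and M\"uller's theorem does not apply. The paper's proof is organized precisely around this difficulty: (a) first produce any Cauchy temporal function $\tilde\tau$ for $g$ and use its $C^0$-stability to obtain a strictly wider-coned $g'$ such that $\tilde\tau$ stays Cauchy temporal for every metric $g^*$ with $g<g^*<g'$ (this is what guarantees that the modified metric below is still globally hyperbolic); (b) choose such a $g^*$ inside this cone corridor which \emph{is} a product around $\partial\m$, so that it extends smoothly to the double $\overline{\m}^d$ with the reflection $r$ an isometry; (c) apply the equivariant construction to $g^*$ on $\overline{\m}^d$ and restrict. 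Since $g\prec g^*$, the resulting $\tau$ is automatically Cauchy temporal for the original $g$. Your proposal omits steps (a) and (b) entirely, i.e.\ it neither ensures smoothness of the doubled metric nor global hyperbolicity of the double, nor does it say why a function built for a modified metric serves the original one.

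Your first route is also not viable as stated. Extending evenly in the Gaussian collar coordinate $s$ a building block constructed \emph{intrinsically} in the boundary spacetime $\partial\m$ only controls the gradient along $T\partial\m$ at $s=0$; away from $s=0$ (and even at $s=0$, in the $s$-transverse causal directions) there is no reason for the gradient to be past-directed timelike for the full metric $g$, whose causal cones in the collar are not those of a product. Moreover, ``gluing with a bump supported in the collar'' is exactly the partition-of-unity-type operation that the Bernal--S\'anchez technique is designed to avoid, because multiplying by bump functions destroys the causal character of the gradient. The linearity of the tangency condition $d\tau(\nu)=0$ under sums and rescalings is a correct and useful observation, but it does not rescue the construction of the individual blocks.
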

Here, the specific difficulty comes from the fact that  $\nabla \tau$ must be tangent to $\partial \m$ so that $\partial \m$ remain orthogonal to all the $\tau$-slices. This  is solved in three steps: 

(a) Working as in the case without boundary, one finds any Cauchy temporal function   $\tilde\tau$. As $\tilde\tau$ will remain $C^0$-stable in the set of all the metrics, there will exist another metric with stricly wider cones $g'>g$ such that $\tau$ remains Cauchy temporal for every $g^*\leq g'$.
 
(b) Now, one finds $g^*$ satisfying  $g<g^*<g'$   such that it can be extended smoothly to the double manifold\footnote{i.e., $\overline{\m}^d$ is the manifold without boundary obtained by taking two copies of $\overline{\m}$ and merging the homologous points along the boundary.} $\overline{\m}^d$  so that the natural reflection $r$ in\footnote{$r$ maps each point in one of the copies of $\overline{\m}$ into  its homologous in the other copy.} $\overline{\m}^d$  is a isometry. This is achieved by choosing $g^*$ as a 
product around $\partial\m$.

(c) Take a  Cauchy temporal function $\tau^d$ for $g^*$ on $M^d$
 invariant by the reflection $r$ (recall that  $\tau^d$ can be constructed invariant by any compact group of isometries, see Rem. \ref{r_olaf}). The required $\tau$ is then just the restriction of $\tau^d$ to $\overline{\m}$.

\begin{rem}\label{r_withbound_extensionrepera} Once the splittings extending those in \S \ref{s3} have been achieved, one can wonder for similar extensions of other studied results  in the case without boundary. This can be done with no serious problems, indeed: 

\ben \item[(A)] Consider a sliced spacetime with boundary as in \eqref{e_orthogonalnormaliz_spl} (replacing the manifold $M$ by a manifold with boundary $\overline{M}$). Uniforms bounds as in Prop.  \ref{p_Bari} can be assumed ($g_R$ would be now a complete Riemannian metric on $\overline{M}$) and, clearly, they suffice to ensure that $t$ is Cauchy temporal. 

\item[(B)] The study of the c-boundary in \S \ref{s_causalboun_for_sliced} can be also naturally extended. In particular, 
the three items \eqref{(1)}, \eqref{(2)}, \eqref{(3)} in \S \ref{s_computecbound} remain valid for spacetimes with timelike boundaries, when comparing $\partial_c\m$  with the boundary of a static product-with-timelike-boundary $\overline{\m}_{\tiny{cl}}=(\R\times \overline{M},\gc=-dt^2+g_M)$ with $g_M$ complete. 

This is easy to check for \eqref{(1)} and \eqref{(2)}, but becomes subtler for \eqref{(3)}, as  now the c-boundary  $\partial_c \m$  is not split into $\hat \partial \m$ 
and $\check\partial \m$ (recall that $\partial \m (\subset \partial_c \m)$ is composed of  non-trivial pairs $(P,F)$). However, this case was also considered in \cite{FHS13} and the results therein can still be applied to obtain \eqref{(3)} as in \S \ref{s_causalboun_for_sliced}. Indeed, this is  straightforward when one considers $\partial_c\m$ at the pointset and causal levels, as \cite[Thms. 6.6, 6.9]{FHS13} is applicable\footnote{Recall that the assumption on $d_Q^+$ which appear in that reference is trivially satisfied in a static product. Indeed, it reduces to the fact that the distance $d$ on $M$ associated with $g_R$ extends continuously to  $\overline{M}$.}. 
 At the topological level, \cite[Thm. 6.11]{FHS13} (as well as  previous Lem. 5.13, Thm 5.16 therein for the partial boundaries) is also applicable\footnote{The essential hypothesis for their applicability reduces to imposing that the Cauchy boundary of $(M,g_R)$ is locally compact. However, this is automatically satisfied, as this boundary  is identifiable to $\partial M$.}. 

Summing up, these results for the item \eqref{(3)} assert that, when $\overline{\m}$ admits a uniform bound  by $\overline{\m}_{\tiny{cl}}$ (as in \eqref{e_alpha_bound}   with \eqref{e_alph_asymp}), then $\partial_c \m$ is equal to $\partial_c \m_{\tiny{cl}}$ with eventual ``strains'' at each point. 
$\partial_c \m_{\tiny{cl}}$  can be described in terms of the topological boundary  $\partial (\R\times \overline{M}) = \R\times \partial M$, as $\partial M$ can be identified with the Cauchy boundary\footnote{The Cauchy boundary  is naturally included in the Busemann one $\partial_BM$ but it was not taken into account in \S \ref{s_Summary caus bound}, as the static product therein was globally hyperbolic without boundary.} for $g_M$. Indeed, extending the description in \S \ref{s_Summary caus bound},  $\partial_c \m_{\tiny{cl}}$ is composed by a timelike line for each point in $\partial M$  from $i_-$ to $i_+$ in addition to two cones of lightlines (one starting at $i_-$ and the other ending at $i_+$)  for the remainder of points of the Busemann boundary $\partial_BM$. When $\partial_BM$ is Hausdorff this topology is the natural product one \cite[Figure 6.2]{FHS_Memo}.


\item[(C)] The results can  also be extended to the case when    cross terms type $dt\otimes \omega_t + \omega_t\otimes dt$ are added to the metric splitting, in the spirit of   
Remarks \ref{r_linksLF0}, \ref{r_linksLF1}, \ref{r_linksLF2}. In this case, the 1-form $\omega_t$ is assumed to be defined on the whole $\overline{M}$. Then, $\partial M$ will correspond with points of the  Cauchy boundary for both, the Fermat metric $F$ and its reverse $\tilde F$, so that they will affect to the c-boundary in a similar way as in the case studied above.
\een 
\end{rem}

\subsubsection{Conclusion} \label{s_borde3}
In  any   spacetime $\m$ which is strongly causal (so that its c-boundary is well-defined),   the non-trivial pairs     $(P,F)\in \partial_c\m$,  $P\neq \emptyset \neq  F$   correspond to naked singularities (this may be regarded as a natural interpretation of Prop. \ref{p_timelikepoint_caus}). So, Prop. \ref{p_timelikepoint_borde} states that a spacetime-with-timelike-boundary $\overline{\m}$ is globally hyperbolic if and only if $\partial \m$ contains its naked singularities. In this case,  Th. \ref{t_princ_borde} provides an  evolution of the naked singularities  through the integral curves of $-\nabla\tau/|\nabla\tau|$, which become orthogonal to the Cauchy slicing of the spacetime. The remainder of the c-boundary $\partial_c\m\setminus \partial\m$ behaves essentially as in the case without boundary and  suitable bounds by a static product allows one to understand $\partial_c\overline{\m}$ in terms of the boundary of the latter. 

The   consistency with the case without boundary, including  the  technical simplification of orthogonality for  the naked singularities, 
suggests the possibility  to recover   predictability for globally hyperbolic spacetimes-with-timelike-boundary from the PDE viewpoint. For this purpose, one should prescribe initial data both, at a Cauchy hypersurface-with-boundary $\bar S$ and on the timelike boundary $\partial \m$. This leads to a mixed boundary PDE problem,  as those  studied first by  
K.O. Friedrichs in 1958 \cite{Fr} 
(see \cite{GM} for a recent global study). 
In the last two decades,   many authors  have considered this issue  from different viewpoints \cite{FN, KW, Wi09, Wi12, CV18a,CV18b,EK,Lu15,Lu18,DDF}. So,  this topic has potential to attract interesting future research.


\section{Counterexamples}\label{s6}

Next, let us construct some examples mentioned  along the article\footnote{Other explicit counterexamples involving geometric  properties of globally hyperbolic spacetimes which might have interest for different purposes appear in \cite{MorSan} (infinitely many timelike homotopy classes  between two points) or \cite[\S 6]{MinSan2} (in spite of the existence of solutions to the Lorentz force equation in each timelike homotopy class when  the electromagnetic form is exact, they may not exist when it is only closed).}. We will follow the structure explained by the end of the Introduction, \S \ref{s_Intro}. 

\subsubsection*{Initial framework}\label{ss}
Consider $\R^2$ endowed with several twisted\footnote{Even though not specially relevant for our purposes, some interesting background properties on twisted products are developed systematically in  \cite{PR}.} metrics  type
\begin{equation}\label{e_g}
g=-dt^2 + f^2 dx^2
\end{equation}
where:
\begin{enumerate}
\item $f>0$. So, $g$ is a Lorentzian metric and $t$ a temporal function with the natural time-orientation,
\item 
For each $t_0\in\R$, $f(t_0,x)=1$ for all $x$ outside  a ($t_0$-dependent) compact interval.   Thus, each slice $S_{t_0}:=\{t\equiv t_0\}$ is a spacelike hypersurface which is complete as Riemannian 1-manifold (therefore, isometric to $\R$).
\end{enumerate}
We will  also consider the curves $\gamma_0, \gamma_\pm: [1,\infty)\rightarrow \R$, $\gamma_0(x)=(t_0 (x),x), \gamma_\pm(x)=(t_\pm (x),x)$ and the open subset $U$
satisfying
\begin{equation}\label{e_curvas}
 t_0(x)=-1/x, \qquad t_-(x)<t_0(x)<t_+(x)<0, \qquad \forall x\geq  1, 
\end{equation}
\begin{equation}\label{eU}
\begin{array}{c}
U=\{(t,x): t_-(x)<t<t_+(x), \, \hbox{and} \, 1<x \}. 
\end{array}
\end{equation}

\begin{figure}
\includegraphics[height=0.25\textheight]{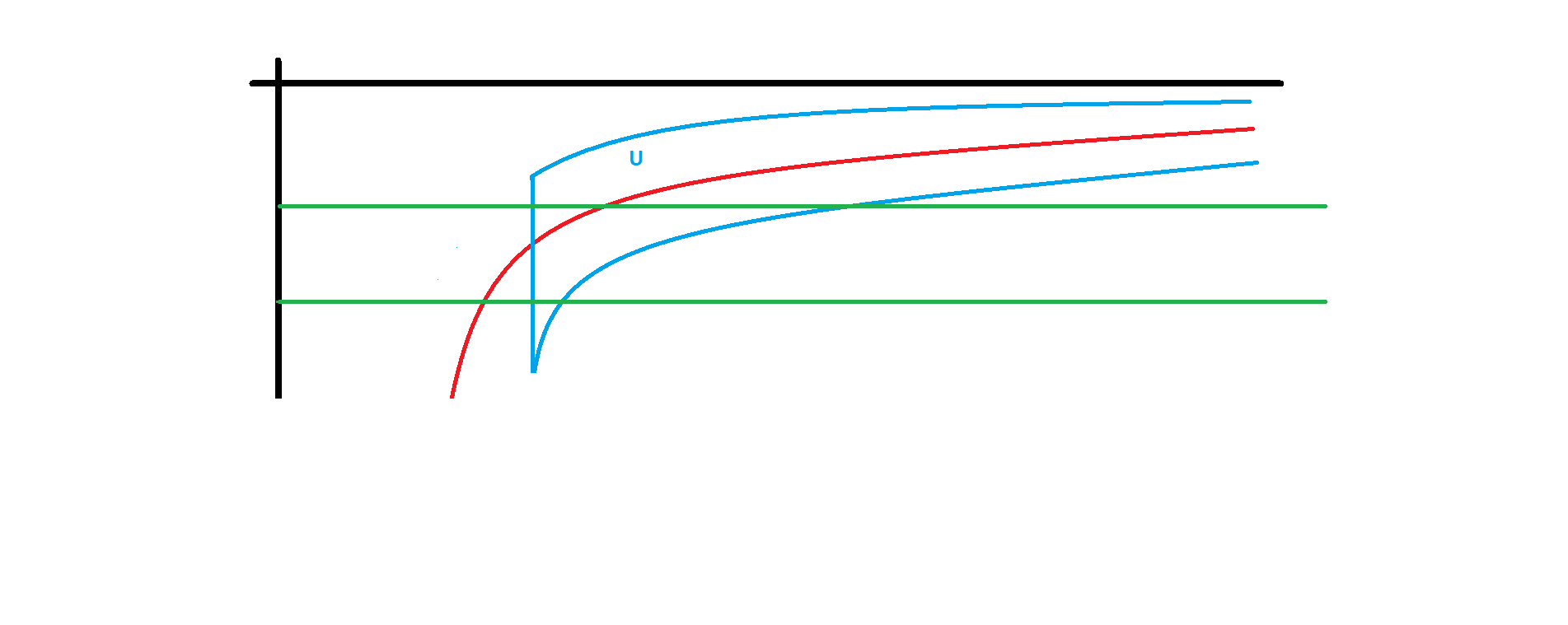}

\caption{\label{fig1} The metric is as in $\Lo^2$ except inside  $U$ (delimited in blue). There, the cones are  opened so that the red curve ($t=-1/x$) becomes lightlike therein. All the $t$ slices (as those in green) intersect $U$ in a bounded interval and are complete, but $t=0$ is not Cauchy.}
\end{figure}

\subsection{Temporal function $+$ complete  slices $\not\Rightarrow$ 
global hyperbolicity}\label{ss1}
Let us construct a metric $g_1=-dt^2+f_1^2(t,x)dx^2$ which is non-globally hyperbolic and  will be the starting point for other examples. Choose:
\begin{equation} \label{e1}
f_1(t,x)=
\left\{
\begin{array}{lll}
1 & & \hbox{if} \; (t,x)\not\in  U, \\
\frac{1}{x^2} & & \hbox{if} \; t=-\frac{1}{x}, \; \hbox{whenever} \; 2\leq x, \\
>0 & & \hbox{otherwise.} 
\end{array}
\right.
\end{equation}
(see Fig. \ref{fig1}). Notice that, $\gamma_0|_{[2,\infty)}$ is a lightlike curve inextensible to the future and, so, the slices $S_{t_0}$ with $t_0\geq 0$ cannot be Cauchy hypersurfaces. However, each slice $S_t$, $t\in\R$ is  complete, as explained above.

\begin{rem} \label{r0} The following two improvements can be obtained directly in this example and the subsequent ones.
\ben\item {\em Ensuring non-hyperbolicity.}
The fact that $t$ is not Cauchy temporal  does not imply directly that  global hyperbolicity fails (a different Cauchy slicing might exist). However, this is guaranteed by the following small modification: {\em choose $f_1$ as above only for $t\leq 0$ and put $f_1(t,x)=f_1(-t,x)$ for $t>0$.} 
Indeed,  the corresponding spacetime would have non-compact $J^+(-1/2,2) \cap J^-(1/2,2)$. 

\item {\em Existence of analytic examples}.
Our example  is $C^\infty$, but analytic ones must exist. The key is to use the stability of the involved   estimates and properties\footnote{This  also applied to construct analytic Cauchy temporal functions in \S \ref{ss_analytic}.}. Indeed, 
the crucial
completeness of each $S_t$ for $g_1$  is a $C^0$-stable property\footnote{\label{f_Lerner2} The $C^0$-stability of metric completeness and incompleteness in the space of all the Riemannian metrics is straighforward and, then, so is the stability of geodesic completeness in the Riemannian case (as a difference with the Lorentzian one, see footnote \ref{f_Lerner}).}.
For the other examples below, notice that all the temporal or Cauchy temporal functions are also temporal or Cauchy temporal for
all the Lorentzian metrics in a $C^0$-neighborhood (see Remark \ref{r_narices} and \S\ref{s_Festability}).

Concretely, we can approximate the function $f_1$ above by an analytic one $h$ which retains its essential qualitative properties, namely: $h \geq  1/2$ outside $U$, $h \geq  1/x^2$ if $t=-1/x$ and $h>0$ everywhere.
This can be achieved directly by applying the 
Whitney $C^k$-fine approximation by analityc functions\footnote{See also further background in \cite{Azagra}. Of course, the posterior Grauert's criterion used in \S \ref{ss_analytic} also works.} \cite[Lemmas 6, 7]{W}: {\em for every $C^k$ function $f: \R^n \rightarrow \R^m$ and every continuous $\epsilon: \R^n \rightarrow (0,\infty)$ there exists a real analytic function $h$ such that
$\parallel D_j h(x) - D_j f(x)
\parallel 
\leq 
\epsilon(x)$ for all 
$x \in \R^n$ 
and $j = 0,1,...,k,$} (where
$D_j$ denotes derivatives of order $j$ and $\parallel \cdot \parallel$ the natural norm induced from $\R^n$).
\een
\end{rem}

\subsection{Cauchy splitting $-dt^2+g_t \not\Rightarrow -dt^2+(1/2)g_t$ glob. hyp.}\label{ss2}

 The following  metric 
 $g_2=-dt^2+f_2^2(t,x)dx^2$ will be a particular case of the previous  $g_1$ satisfying that, for each  $\alpha>1$, the new metric 
\begin{equation} \label{e_g2alpha} 
 g_2^\alpha:=-dt^2+\alpha^2 f_2^2(t,x)dx^2
 \end{equation} 
 is globally hyperbolic with $t$ Cauchy temporal. In particular, $g_2^{\alpha=\sqrt{2}}$ becomes the  counterexample searched in this subsection (up to the modifications in Remark \ref{r0}).
  Concretely, first
 consider the particular choice: 
\begin{equation}
\label{e_tpm}
\gamma_\pm(x)= (t_\pm(x),x) \quad \hbox{with} \quad t_\pm(x)=-\frac{1}{x} \pm \frac{e^{-x}}{x}, \quad \forall x\geq 1.
\end{equation}
Now, assume  the following  strengthening of\eqref{e1} for $f_2$ (the expression of $U$ is just the particular case of \eqref{eU} for our choice of $\gamma_\pm$):

\begin{equation} \label{e3}
\begin{array}{l}
U =\{(t,x): -e^{-x}<tx+1<e^{-x}, 1<x\} \\ 
\\
\left\{
\begin{array}{lll}
f_2(t,x)=1 & & \hbox{if} \; (t,x)\not\in  U,  \\
\frac{1}{x^2} \leq f_2(t,x) \leq 1 & & \hbox{if} \; (t,x)\in  U,
\\
f_2(t,x)=\frac{1}{x^2} & & \hbox{if} \; t=-\frac{1}{x}, \; \hbox{whenever} \; 2\leq x,
\end{array}
\right. 
\end{array} \end{equation}
 Thus, for each vertical line $x=x_0 \geq 2$, one has the cones of Lorentz-Minkowski space (up to a rescaling in the $x$ coordinate, i.e., $-dt^2+d\bar x^2$ with $\bar x=\alpha x$) outside the interval limited by $\gamma_\pm(x_0)$ and, inside this interval, the cones are opened and become wider at $\gamma(x_0)$.

\begin{figure}
	\centering
\includegraphics[height=0.5\textheight]{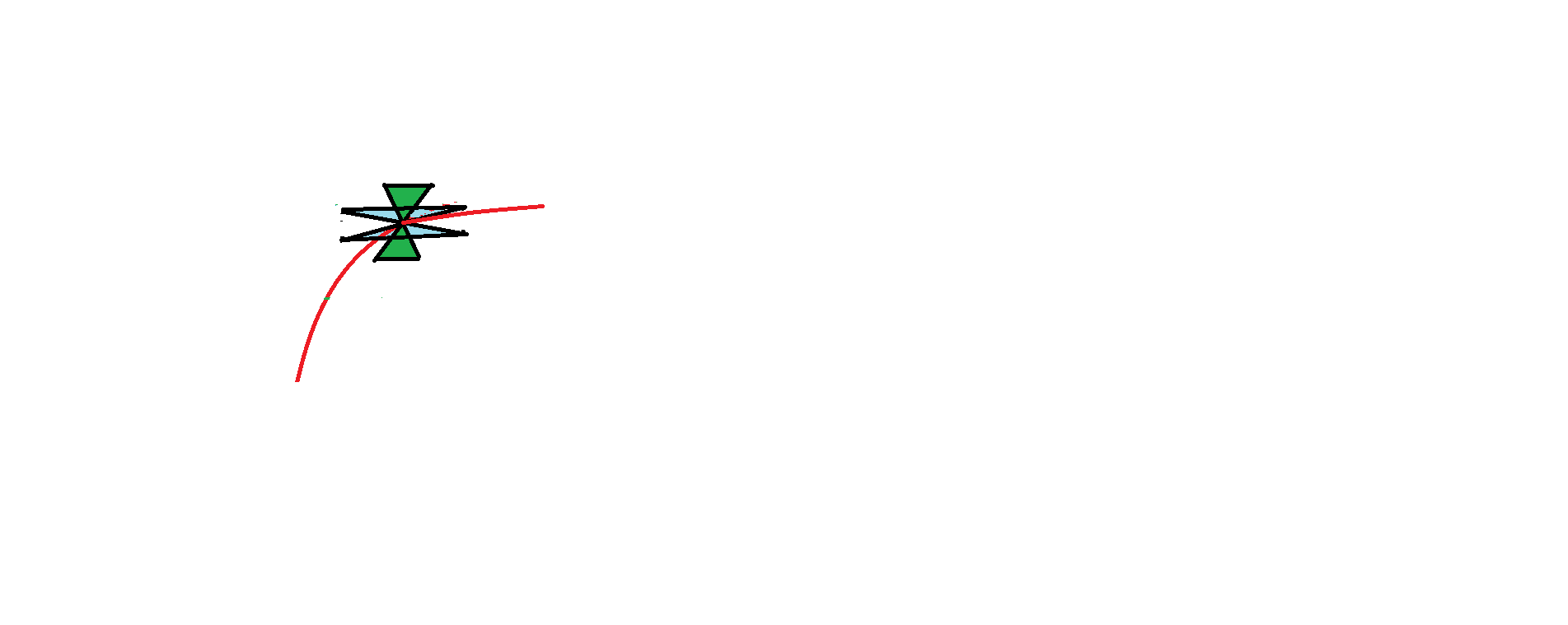}

\caption{\label{fig2} Modification of the example in Fig \ref{fig1}. The narrow cone in green is associated with the globally hyperbolic metric $
g_2^\alpha=-dt^2+\alpha^2f^2_2dx^2$ with $ \alpha=\sqrt{2}$. The wide cone in blue is associated with $g_2^\alpha$ for $\alpha=1$, which is not globally hyperbolic (as its cones behave as in Fig. \ref{fig1}).}
\end{figure}

Thus, the problem is reduced to prove that   $t$ is Cauchy temporal for $g^\alpha_2$ with $\alpha>1$. Then, it is enough to check the following property:

\begin{quote}
(P) Any inextensible 
causal curve $\beta$ starting at $(t_0,x_0)\in U$  will leave $U$ both towards the future and towards the past, and it will not return to $U$, for any $g^\alpha_2$ with $\alpha>1$.
\end{quote}
Indeed, in this case $\beta$ will cross all the slices $S_t$ with $t>0$ and $t<-2$ (recall that $g^\alpha_2$ is  Lorentz-Minkowski in  $\R^2\setminus U$) and, by continuity, all the slices $S_t$ will be Cauchy. 

In order to prove (P), the following simplifications will be used:
\begin{enumerate}

\item \label{i1} Check (P) only for each
$\beta$ either inextensible to the future or to the past totally included in the region \begin{equation} \label{e_des0}
x>a \qquad \hbox{for some} \, a>0.
\end{equation}

Indeed, the region $U\cap \{x\leq a\}$ has compact closure and no such a $\beta$ can be partially imprisoned in it (this is a general property for strongly causal metrics, in particular $g_2^\alpha$, which is stably causal).

\item \label{i2} Check the previous property only for $\beta$ inextensible to the future. 

Indeed, assume $\beta$ is past inextensible. When its $t$-coordinate reaches any value $t_0<0$ then $\beta$ remains in the region $t<t_0$ (as $t$ is a time function) an it cannot remain partially imprisoned in  $\{t\leq t_0\}\cap U$ because  the closure of this set is compact (recall that $\gamma_-$ is asymptotic to the $x$-axis).

\item \label{i3}  
Check only that, whenever $(t_0,x_0)\in \bar U$ (the closure of $U$) with
\begin{equation}\label{e_des}
x_0> -\log (\alpha-1) \qquad \hbox{and} \qquad x_0>1+\frac{2}{\alpha}\; , 
\end{equation}
the two lightlike  $g^\alpha_2$-geodesics $\rho_+,  \rho_-$ (up to reparametrization) starting at  $(t_0,x_0)$ 
must cross $\gamma_+$ at a (necessarily unique) point. 

Indeed,  if this is proved, $\gamma_+$, 
$\rho_+$ and $ \rho_-$ will give the sides of a closed piecewise smooth curve (a curved triangle) 
which encloses a compact region $R_+$. 
The future-directed curve $\beta$ (which  can be assumed to lie in the region \eqref{e_des0} with $a$ bigger than the   right-hand sides of \eqref{e_des}) is included initially in $R_+$. However,    $\beta$  cannot be imprisoned in $R_+$ and it will leave $R_+$ crossing $\gamma_+$. As $\gamma_+$ is  spacelike in the region \eqref{e_des0}, it is  acausal 
too\footnote{\label{foot} Even though a particular proof adapted to our case would not be difficult (see footnote \ref{foot2}),  a  property  with interest in its own right is: {\em for any Lorentz metric $g$ on $\R^2$, every spacelike curve $\bar \beta$ must be acausal}. \\ Recall first that such a $g$ must be stably causal \cite[Th. 3.43]{BEE}. So, any inextensible lightlike geodesic $\rho$ will be diverging and (using Jordan's theorem after the compactifying $\R^2$ by one point), $\R^2 \setminus$ (Im $\rho$) will consist of two connected 
parts. Moreover, such a $\rho$ will be maximizing, i.e. it will not have  cut points (this follows from \cite[Prop. 3.42]{BEE} taking into account that any piecewise smooth causal curve which is 
not a pregeodesic can be deformed into a timelike curve with the same endpoints). 
\\ Now, assume  by contradiction  that a spacelike curve $\bar \beta$ and a causal one $\beta$  join two distinct points $p, q$. We can assume that $\beta$ is timelike because, otherwise,  $\beta$ could be assumed to be a lightlike pregeodesic and a contradiction with the inexistence of its cut points for $-g$ appears. So, let $\rho$ be the 
inextensible lightlike geodesic starting at $p$ satisfying that $\bar \beta'(0)$ and $\beta'(0)$ lie in different connected parts of $T_p \R^2\setminus $ Span$(\rho'(0))$. Initially $\bar \beta$ and $\beta$ lie in different connected parts of $\R^2 \setminus$ (Im $\rho$) and, as $\beta$ and  $\bar \beta$ will meet at $q$, one of these two curves must leave its connected part and meet the lightlike curve
  $\rho$ 
  at a second point $q' \in$ Im$(\rho)$. If  $\beta$ is the leaving curve, $\rho$ will have a cut point  (or directly \cite[Prop. 3.42]{BEE} fails).  Otherwise, the same contradiction appears for $-g$.}, and  $\beta$ cannot come back to $U$. 

\item\label{i4}
Check the previous property replacing $g^\alpha_2$ on $\bar U$ by the simpler one:  $$g^\alpha:=- dt^2+  \frac{\alpha^2}{x^4}dx^2$$

 Indeed, the cones of this metric are wider than those of $g^\alpha_2$ (recall \eqref{e3}) and any $g^\alpha_2$-causal curve will be also $g^\alpha$ causal. So, if the region $R'$ corresponding to $g^\alpha$ is compact, the original region $R\subset R'$ will be compact too. 
\end{enumerate}
Accordingly, let us prove the third item 
using the metric $g^\alpha$ in the fourth one. 
The initial velocities of $\rho_\epsilon$, $\epsilon=\pm 1$, can be chosen: 
 $$\rho_\epsilon(0)=(t_0,x_0),
 \qquad \rho'_\epsilon(0)=(\frac{\epsilon \alpha}{x_0^2}, 1) 
 .$$
 The reparametrization $\bar\rho_\epsilon$ of $\rho_\epsilon$ with the $x$ coordinate is then: 
$$
\bar\rho_\epsilon(x)=\left(t_0+ \epsilon\alpha \left(\frac{1}{x_0}-\frac{1}{x}\right), x\right), \qquad 
\hbox{for} \left\{ \begin{array}{ll}
x\geq x_0, \; \hbox{if} \; \epsilon=1, \\
x\leq x_0, \; \hbox{if} \; \epsilon=-1.
\end{array} \right.
$$
Thus, $\bar\rho_\epsilon  \cap \gamma_+$ is obtained by equating:
$$
t_0+ \epsilon \alpha \left(\frac{1}{x_0}-\frac{1}{x}\right)=-\frac{1}{x} + \frac{e^{-x}}{x},
$$
that is,
$
(\epsilon\alpha + t_0x_0) x - (\epsilon\alpha-1)x_0= x_0 e^{-x}
$, or   
\begin{equation} \label{e10}
F(x)=0, \quad \hbox{where} \quad F(x):=(\epsilon\alpha-1) (x-x_0)-x_0e^{-x}+(1+t_0x_0)x.
\end{equation}
Recall that, $F(x_0) =  x_0(-e^{-x_0}  +1+t_0x_0)$ and, by the expression of $U$ in \eqref{e3}, 
 $F(x_0)<0$. Thi yields the required result for $\rho_\epsilon$: 

(a) Case $\epsilon=1$:
$$\begin{array}{rl}
\lim_{x\rightarrow \infty}F(x) & 
=\lim_{x\rightarrow \infty}\left(
 (\alpha-1) (x-x_0)+(1+t_0x_0)x\right) 
 \\ 
 &  \geq 
\lim_{x\rightarrow \infty}
 \left((\alpha-1) (x-x_0)-e^{-x_0}x\right)=\infty
\end{array}
$$ 
(use  \eqref{e3} again for the inequality, and the first inequality in \eqref{e_des} for the last limit). Thus, there is  a solution of the equation \eqref{e10} in $(x_0,\infty)$. 

(b) Case $\epsilon=-1$:
$$\begin{array}{rl}
F(1) & =(\alpha+1)(x_0-1)-x_0e^{-1}+(1+t_0x_0)>
(\alpha+1)(x_0-1)-x_0-e^{-x_0}
\\
& >
(\alpha+1)(x_0-1)-(x_0+1)=\alpha x_0-(\alpha+2)>0.
\end{array}
$$
(use  \eqref{e3}  for the first inequality, and the second inequality in \eqref{e_des} for the last inequality) 
and a solution appears in\footnote{\label{foot2} In both cases, we know that the required solution of \eqref{e10} must be unique by footnote~\ref{foot}. Anyway, this can be checked directly here using $F'(x)=(\epsilon \alpha-1)+x_0e^{-x}+(1+x_0t_0)$. When $\epsilon=1$, $F'(x)>(\alpha-1)+x_0 e^{-x}-e^{-x_0}>x_0 e^{-x}$ (the latter by the first inequality in \eqref{e_des}) and $F$ is strictly increasing 
on $x_0<x$.  When $\epsilon=-1$, $F''(x)=-x_0 e^x <0$; so, once attained its absolute positive maximum in $[1,x_0)$, $F$ will decrease strictly until $F(x_0)<0$.}
 $(1,x_0)$.

\subsection{Metrics sharing a Cauchy temporal splitting are not convex}\label{ss3}
Our aim will be to construct two metrics $g_3^{\hbox{\tiny{even}}}, g_3^{\hbox{\tiny{odd}}}$ as in \eqref{e_g} admitting $t$ as a Cauchy temporal function such that 
$$\lambda g_3^{\hbox{\tiny{even}}}+ (1-\lambda) g_3^{\hbox{\tiny{odd}}} \quad \quad \lambda\in [0,1]$$
does not admit $t$ as a Cauchy temporal function for some $\lambda\in (0,1)$. By using the modification in Remark \ref{r0}, one can ensure also that the counterexample  is neither globally hyperbolic (and can be chosen analytic).
More precisely, the counterexample will appear for $\lambda=1/2$, and we will prove:
$$ g_2 (\equiv g_2^{\alpha=1}) \prec \frac{1}{2}  g_3^{\hbox{\tiny{even}}}+ \frac{1}{2}g_3^{\hbox{\tiny{odd}}} 
$$
on $\gamma_0(x)=(-1/x ,x)$  for large $x>0$ (recall \eqref{e_curvas}), so that this curve remains causal for the right-hand sum and the slice $t=0$ is not Cauchy   (as happened for $g_2$). Notice that this sum metric has the same cones as the homothetic one 
$g_3^{\hbox{\tiny{even}}} + g_3^{\hbox{\tiny{odd}}}$, 
but the fact that its cones are  wider 
will become apparent because our construction will yield:
\begin{equation}
\label{e_f}
\frac{1}{2} (f_3^{\hbox{\tiny{even}}})^2+ \frac{1}{2}(f_3^{\hbox{\tiny{odd}}})^2 \leq  \frac{3}{4} f_2^2  < f_2^2, \quad  
\hbox{for large} \, x>2. 
\end{equation}
With this aim, we will construct first an auxiliary metric type:
\begin{equation}
\label{g_3}
g_3:= -dt^2 +\alpha^2(x) f_2^2(t,x) dx^2.
\end{equation}
where $f_2$ fulfills \eqref{e3}, i.e., $g_3$ generalizes $g_2^\alpha$ allowing $\alpha$ to depend on $x$. 
 We will use the following technical  property of the studied metrics $g^\alpha_2$. 
 
 \begin{lemma}\label{l1} Let $g_2^\alpha$ as in \eqref{e_g2alpha} with  $\gamma_\pm$ as in \eqref{e_tpm} and assume $\alpha>1$ . 
 
 For each $\tilde x>0$ there exists $ \tilde x < \tilde y^- < \tilde y < \tilde y^+$   such that:
$$J_{g_2^\alpha}^+(\gamma_-(\tilde y )) \cap \bar U \subset \R\times [\tilde y^- , \tilde y^+],$$ 
  where $J_{g_2^\alpha}^+(\gamma_-(\tilde y ))$ is the $g_2^\alpha$-causal future of $\gamma_-(\tilde y )$. 
 \end{lemma}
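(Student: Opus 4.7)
The overall strategy is to exploit the two-dimensionality: in a $2$-dimensional stably causal spacetime, the boundary of the causal future of a point is exactly the union of its two future-directed lightlike geodesics from the point (they have no cut points and any spacelike curve is acausal, as recalled in footnote \ref{foot}). Computing these lightlike geodesics for $g_2^\alpha$ directly is inconvenient because $f_2$ is not explicit on $\bar U$, so I would pass to the auxiliary metric $g^\alpha := -dt^2 + (\alpha^2/x^4)\,dx^2$ introduced in \S \ref{ss2}, item \eqref{i4}, extended as a Lorentzian metric on the whole half-plane $\{x > 0\}$. On the region $\{x \geq 1\}$ the $g^\alpha$-cones contain those of $g_2^\alpha$: on $\bar U$ because $f_2^2 \geq 1/x^4$ by \eqref{e3}, and on $\{x > 1\}\setminus \bar U$ because $1 > 1/x^4$. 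Hence $J^+_{g_2^\alpha}(\gamma_-(\tilde y)) \cap \bar U \subset J^+_{g^\alpha}(\gamma_-(\tilde y)) \cap \bar U$, and it suffices to bound the latter in the $x$-coordinate.

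Next, choose $\tilde y > \tilde x$ large enough so that both conditions of \eqref{e_des} hold at $x_0 = \tilde y$ and so that the number $\tilde y^-$ produced below satisfies $\tilde y^- > \tilde x$. The two future-directed $g^\alpha$-lightlike geodesics from $\gamma_-(\tilde y)$ admit, as in items \eqref{i3}--\eqref{i4} of \S \ref{ss2}, the explicit reparametrizations
$$\bar\rho_\pm(x) \,=\, \bigl(t_-(\tilde y)\,\pm\, \alpha\,(1/\tilde y - 1/x),\; x\bigr).$$
Repeating the scheme around \eqref{e10}, each $\bar\rho_\pm$ meets $\gamma_+$ at a unique point with $x$-coordinate $\tilde y^\pm$, and $\tilde y^- < \tilde y < \tilde y^+$. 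In the leftward case the intersection equation collapses to
$$(\alpha+1)(\tilde y - x) \,=\, \tilde y\, e^{-x} + x\, e^{-\tilde y},$$
from which a Taylor expansion at $x = \tilde y$ yields $\tilde y - \tilde y^- = O(\tilde y\, e^{-\tilde y})$, so indeed $\tilde y^- > \tilde x$ for $\tilde y$ large.

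Finally, by the footnote \ref{foot} picture of $J^+$ in $2$D, the boundary of $J^+_{g^\alpha}(\gamma_-(\tilde y))$ is exactly the union of $\bar\rho_+$ and $\bar\rho_-$, and the slice of $J^+_{g^\alpha}(\gamma_-(\tilde y))$ at fixed $x$ is bounded below in $t$ by the relevant $\bar\rho_\pm(x)$. For $x > \tilde y^+$ the $g^\alpha$-lightlike slope of $\bar\rho_+$ equals $\alpha/x^2$, while $t_+'(x) = 1/x^2 - e^{-x}/x - e^{-x}/x^2$; since $\alpha > 1$, the difference $\bar\rho_+'(x) - t_+'(x) = (\alpha-1)/x^2 + e^{-x}(1/x + 1/x^2)$ is strictly positive for all $x > 0$, so together with $\bar\rho_+(\tilde y^+) = t_+(\tilde y^+)$ this yields $\bar\rho_+(x) > t_+(x)$ for every $x > \tilde y^+$; hence the slice of $J^+_{g^\alpha}$ at such $x$ does not meet $\bar U = \{t \leq t_+(x)\}$. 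An analogous slope comparison for $\bar\rho_-$ (here the sign calculation reduces to checking $\alpha + 1 > e^{-x}(x+1)$, which holds for all $x \geq 0$ when $\alpha > 0$) excludes $x < \tilde y^-$, and the lemma follows. The main obstacle is precisely the clean $2$D description of $\partial J^+$ as the union of the two lightlike geodesics, but the Jordan-separation argument reviewed in footnote \ref{foot} delivers it with no extra work.
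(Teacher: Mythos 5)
Your proof is correct, and it shares with the paper the same basic machinery inherited from \S \ref{ss2} (replacement of $g_2^\alpha$ by the wider-coned $g^\alpha$, explicit null geodesics $\bar\rho_\pm$, and their intersections with $\gamma_+$ at $\tilde y^\pm$); your slope comparisons $\bar\rho_+'-t_+'>0$ and the check $\alpha+1>e^{-x}(x+1)$ are a clean, self-contained way of seeing that $J^+_{g^\alpha}$ lies above $\gamma_+$ outside $[\tilde y^-,\tilde y^+]$, where the paper simply invokes the compact region $R_+$ already built below \eqref{e_des}. The genuine divergence is in the pivotal step, namely forcing $\tilde y^-> \tilde x$. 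You extract this from the explicit crossing equation $(\alpha+1)(\tilde y-x)=\tilde y e^{-x}+xe^{-\tilde y}$ via the asymptotic estimate $\tilde y-\tilde y^-=O(\tilde y e^{-\tilde y})$, which is valid (the concavity of the corresponding $F$, noted in the paper's footnote on uniqueness, guarantees the root really does sit within distance $1$ of $\tilde y$ for large $\tilde y$, so the bootstrap in your bound closes). The paper instead argues softly and without any computation: since $\gamma_-$ is asymptotic to the $x$-axis one may pick $\tilde y$ with $t_+(\tilde x)<t_-(\tilde y)$; the coordinate $t$ increases along the future-directed $\rho_-$, so the crossing point with $\gamma_+$ has $t$-value $t_+(\tilde y^-)>t_-(\tilde y)>t_+(\tilde x)$, and the strict monotonicity of $t_+$ gives $\tilde x<\tilde y^-$ at once. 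Your route buys a quantitative rate (how close $\tilde y^-$ is to $\tilde y$), which the lemma does not need but which makes the later interleaving of the intervals $I_n, J_n$ very concrete; the paper's route buys brevity and robustness, since it uses only that $t$ is a time function and $t_+$ is increasing, not the precise exponential form of $\gamma_\pm$.
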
 

\begin{proof}
Recall the construction of $R_+$ below \eqref{e_des} and choose $(t_0,x_0)=\gamma_-(y)=(t_-(y),y)$ therein. The $x$-coordinate of $J_{g_2^\alpha}^+(\gamma_-(y ))\cap \bar U$ lies in $[y^-,y^+]$ where $y^-,y^+$ are the $x$-coordinates of $\gamma_+ \cap \rho_-$, $\gamma_+ \cap \rho_+$, resp., and they can be regarded as a function of $y$. So, we have just to prove that, when $\tilde{x}>0$ has been prescribed, there exists $\tilde{y}$ such that  $\tilde{x}<\tilde y^- (:=y^-(\tilde{y}))$.  Consider the point $\gamma_+(\tilde x)=(t_+(\tilde x), \tilde x)$. As $t_+(\tilde x)<0$ and $\gamma_-$ is asymptotic to the $x$-axis, there exists $\tilde y>\tilde x$ such that $t_+(\tilde x)<t_-(\tilde y)$. Necessarily, $t_-(\tilde y)<t_+(\tilde y^-)$, thus, $t_+(\tilde x)< t_+(\tilde y^-)$ and, as $t_+$ is strictly increasing, $\tilde x< \tilde y^-$.
\end{proof}

\begin{rem}\label{r_basic_escape}
Notice that the segment $[t_-(\tilde y), t_+(\tilde y)] \times \{\tilde y\}$ is included in  
 $J_{g_2^\alpha}^+(\gamma_-(\tilde y )) \cap \bar U$  and  separates it into two connected pieces.  No  future-directed causal curve in $U$ can go from a point with $x<\tilde y^-$ to a point with $x> \tilde y^+$ because  it would reach 
$J_{g_2^\alpha}^+(\gamma_-(\tilde y )) \cap \bar U$ and, then,  either would remain there or  would abandon $U$.
\end{rem}
We will use the following application of the previous lemma.
 \begin{lemma}\label{l2} Consider the metric $g_2^\alpha$ for $\alpha^2=5/4$ $(>1)$. 
 
 There exists two interwined sequences $\{x_n\}_n \, , \; \{y_n\}_n   \nearrow \infty, 
 $
 and two sequences $I_n$, $J_n$ of disjoint intervals,
 $$
I_n=[x^-_n,x_n^+], \quad J_n=[y^-_n,y_n^+], \qquad 2< x^-_n < x_n < x_n^+ < y^-_n < y_n < y_n^+ < x^-_{n+1}, 
 $$
 such that, for all $n\in \N$:

$J_{g_2^\alpha}^+(\gamma_-(x_n )) \cap \bar U \subset \R\times [ x^-_n ,  x^+_n], \quad 
J_{g_2^\alpha}^+(\gamma_-(y_n )) \cap \bar U \subset \R\times [ y^-_n ,  y^+_n].$  

\end{lemma}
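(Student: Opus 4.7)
The plan is to build the two intertwined sequences by iteratively applying Lemma~\ref{l1}, alternating between producing an $x$-triple $(x_n^-, x_n, x_n^+)$ and a $y$-triple $(y_n^-, y_n, y_n^+)$. Concretely, I would set $\tilde x_0 := 2$ and, having produced the previous triple with largest coordinate $\xi$, feed $\tilde x := \max(\xi, n)$ into Lemma~\ref{l1} to obtain a new triple $(\tilde y^-, \tilde y, \tilde y^+)$ lying strictly above $\xi$. Labelling these alternately as $(x_n^-, x_n, x_n^+)$ and $(y_n^-, y_n, y_n^+)$ gives immediately the desired ordering
\[
2<x_1^-<x_1<x_1^+<y_1^-<y_1<y_1^+<x_2^-<\cdots
\]
together with the required causal inclusions, which are precisely the conclusion of Lemma~\ref{l1} for each application.

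The inclusions $I_n\cap J_n=\emptyset$, and the fact that consecutive intervals of either family do not touch, follow automatically from the strict inequalities provided by Lemma~\ref{l1} at each step. The monotonicity $x_n<x_{n+1}$, $y_n<y_{n+1}$ is likewise automatic because every new triple lies strictly above the previous one. Thus the only substantive point is to guarantee $x_n, y_n\to\infty$, and this is precisely where the freedom of choosing $\tilde x\geq n$ before invoking Lemma~\ref{l1} is used: since Lemma~\ref{l1} yields $\tilde y>\tilde x$ for every starting $\tilde x>0$, inflating the starting value forces $x_n>n$ and $y_n>n$, hence divergence.

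The main (and essentially only) obstacle is this divergence issue: a blind iteration of Lemma~\ref{l1} could in principle produce a bounded sequence, so one must explicitly use that the hypothesis of Lemma~\ref{l1} holds for all $\tilde x>0$ in order to restart the induction at an arbitrarily large value whenever the growth stalls. Once this is observed, the construction is a routine induction and no further estimates on the metric $g_2^\alpha$ are needed beyond those already encoded in Lemma~\ref{l1}.
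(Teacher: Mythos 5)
Your proposal is correct and is essentially the paper's own proof, which reads in full: ``Choose any $I_1$ and apply inductively Lemma \ref{l1} ensuring $x_n>n$.'' Your only addition is to spell out the bookkeeping (feeding $\max(\xi,n)$ into Lemma \ref{l1} to force divergence), which is exactly the point the paper compresses into the clause ``ensuring $x_n>n$.''
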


\begin{proof}
Choose any $I_1$ and apply inductively Lemma \ref{l1} ensuring $x_n>n$.
\end{proof}

\begin{prop}
Consider a metric $g^*_3$ as in \eqref{g_3}, \eqref{e3} with $\alpha$ satisfying:
\begin{enumerate}
\item $\frac{1}{4} \leq \alpha^2(x) \leq \frac{5}{4},  \forall x\geq 1$ and $\alpha^2(x)=1, \forall x\leq 1. $
\item $\alpha^2 \equiv 1/4$ on $I_n$.
\item  $\alpha^2 \equiv 5/4$ on $J_n$. 
\end{enumerate}
Then, the function $t$ is Cauchy temporal for $g^*_3$ (see Fig. \ref{fig222}).
\end{prop}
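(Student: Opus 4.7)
The approach is a barrier argument in the spirit of \S\ref{ss2}. Since $t$ is manifestly temporal for $g_3^*$, it suffices to show that along every inextensible causal curve $\beta$ the coordinate $t$ sweeps all of $\R$. The past-inextensible direction is the simpler one: if $\inf_s t(\beta(s))=T_->-\infty$, then stable causality forces $|x|\to\infty$ along $\beta$ in the past, and for $|x|$ large one is either above $\gamma_+$ (if $T_-\geq 0$) or below $\gamma_-$ (if $T_-<0$), i.e.\ outside $\bar U$; there $g_3^*=-dt^2+\alpha(x)^2 dx^2$ with $\alpha\geq 1/2$, so the causal bound $|dt|\geq (1/2)|dx|$ forces the $t$-variation of $\beta$ to exceed $(1/2)$ its $x$-variation $=\infty$, a contradiction. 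The remainder of the sketch concerns the future-inextensible case.

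Assume by contradiction that $\beta$ is future-inextensible with $\lim t\circ\beta=T_\infty<\infty$. Stable causality again rules out imprisonment, so $x\to+\infty$ along $\beta$ (the case $x\to-\infty$ being trivial because $f_2\equiv 1$ there). The same bound $|dt|\geq (1/2)|dx|$ rules out that $\beta$ stays outside $\bar U$ at all sufficiently large $x$; hence $\beta$ re-enters $U$ for $x$ arbitrarily large, and because the $J_n$ are disjoint with $y_n^-\to\infty$, we obtain for each sufficiently large $n$ a point $P_n=\beta(s_n)\in U$ with $x(P_n)\in J_n$. On the strip $\R\times J_n$ the metric $g_3^*$ coincides with $g_2^\alpha$ for $\alpha^2=5/4$, which is globally hyperbolic by \S\ref{ss2}. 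Applying the compact-region construction of \S\ref{ss2} at $P_n$, the two $g_3^*$-null geodesics issuing from $P_n$, joined by an arc of $\gamma_+$, enclose a compact domain $R_+(P_n)\subset\R\times J_n$; the curve $\beta$ starts inside $R_+(P_n)$, cannot be imprisoned there, and cannot cross the null sides, so it must exit through $\gamma_+$ at some $x_n^*\in J_n$ with $\bar t(x_n^*)=t_+(x_n^*)$.

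Once above $\gamma_+$, the metric reverts to $-dt^2+\alpha(x)^2 dx^2$, and since $t_+'(x)\sim 1/x^2$ while causality forces $t'(x)\geq \alpha\geq 1/2$, the gap $\bar t(x)-t_+(x)$ is strictly increasing, so $\beta$ cannot re-enter $U$ and consequently $\bar t(x)\geq t_+(x_n^*)+(x-x_n^*)/2\to +\infty$, contradicting $T_\infty<\infty$.

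The delicate technical point in this chain is to verify that the two $g_3^*$-null geodesics emanating from an arbitrary $P_n\in \bar U\cap(\R\times J_n)$ really do reach $\gamma_+$ without first leaving the strip $\R\times J_n$; only then does the compact-region argument of \S\ref{ss2} apply with $\alpha=\sqrt{5/4}$. Using the explicit majorant $g^\alpha=-dt^2+(\alpha/x^2)^2 dx^2$ on $\bar U$ (as in the final reduction of \S\ref{ss2}), the $x$-coordinate of such an intersection with $\gamma_+$ differs from $x(P_n)$ by $O(x(P_n)e^{-x(P_n)}/(\sqrt{5/4}-1))$. Since Lemma \ref{l2} constructed the $J_n$ precisely to absorb the lightcone of $\gamma_-(y_n)$, whose spread is of the same exponential order, the geodesic endpoints fall inside $J_n$ for $n$ large, possibly after a slight refinement of the choice of $y_n$ in Lemma \ref{l2} so that the base point $P_n$ may range over the entire interior of $J_n$ rather than only over a central portion. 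This is the main place where the specific selection of the sequences $\{x_n\},\{y_n\}$ afforded by Lemma \ref{l2} enters the proof in an essential way.
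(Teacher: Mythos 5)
Your overall strategy---using the strips $\R\times J_n$, where $\alpha^2\equiv 5/4$, as barriers that force any causal curve out of $U$, and reducing to the compact-region construction of \S\ref{ss2}---is the same as the paper's, and the past-inextensible case and the ``no return after crossing $\gamma_+$'' step are fine. However, there is a genuine gap exactly at the point you yourself flag as delicate: you anchor the trapping triangle $R_+(P_n)$ at an \emph{arbitrary} point $P_n\in\beta\cap U$ with $x(P_n)\in J_n$, and you then need the two null geodesics issuing from $P_n$ to reach $\gamma_+$ without leaving $\R\times J_n$; otherwise they enter a region where $\alpha^2$ may drop to $1/4$, the estimates of \S\ref{ss2} (which require $\alpha>1$) no longer apply, and the region they bound need not be compact in $\bar U$. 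This containment fails whenever $P_n$ lies within the (exponentially small, but positive) $x$-spread of the null cone from the endpoints $y_n^{\pm}$ of $J_n$, and your proposed remedy---refining the choice of $y_n$ in Lemma \ref{l2} so that ``$P_n$ may range over the entire interior of $J_n$''---cannot work as stated: no choice of the interval controls a base point that is allowed to sit arbitrarily close to its edge, so the fix is circular.

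The paper closes this gap with no modification of Lemma \ref{l2}, by anchoring the future set at $\gamma_-(y_n)$ instead of at a point of $\beta$. The vertical segment $\{y_n\}\times\left(t_-(y_n),t_+(y_n)\right)$ lies in $I^+(\gamma_-(y_n))$, so any causal curve that stays in $U$ while its $x$-coordinate crosses the value $y_n$ automatically enters $J^+(\gamma_-(y_n))$. By Lemma \ref{l2} (with $\alpha^2=5/4$) this future set meets $\bar U$ only inside $\R\times J_n$, where $g_3^*=g_2^{\alpha}$; hence $J^+_{g_3^*}(\gamma_-(y_n))\cap\bar U=J^+_{g_2^{\alpha}}(\gamma_-(y_n))\cap\bar U$ is the compact curved triangle of \S\ref{ss2}, and Remark \ref{r_basic_escape} forces $\beta$ to leave $U$ there. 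In other words, the containment in the strip is supplied by the lemma itself once the base point is $\gamma_-(y_n)$; the point of $\beta$ you should use is its crossing with $\{x=y_n\}$, not an arbitrary $P_n$ with $x(P_n)\in J_n$. With that single change your argument becomes the paper's proof.
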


\begin{proof}
 As a first observation,  the metric $g_3^*$  outside $U$ is equal to $-dt^2+\alpha^2(x)dx^2$. On $\R^2$, this  is isometric to $\Lo^2$; indeed, the change $\bar x(x)= \int_0^x \alpha(x')dx'$ yields a global isometry with 
$\Lo^2$ (as $\alpha(x)\geq 1/4$ the range of $\bar x$ is the whole $\R$). So, the background of the previous examples still hold. 

Now, notice  that $J_{g_2^\alpha}^+(\gamma_-(y_n ))=J_{g_3^*}^+(\gamma_-(y_n )) $ (for $\alpha^2=5/4$) and $g_2^\alpha=g_3^*$ therein. Let $\beta$ be a future-inextensible $g^*_3$-causal curve starting at a point of $\bar U$. This curve must leave $U$ because, otherwise  it cannot remain imprisoned in any compact set $\bar U \cap (\R\times [1,y_n])$. Thus, it will enter at some $J_{g_3^*}^+(\gamma_-(y_n ))$ and will leave $U$ therein (Remark \ref{r_basic_escape} applies). 
If $\beta$ were past-inextensible, the same argument as in the item \ref{i2} of the example in \S\ref{ss2} applies.
\end{proof}

\begin{figure}
	\centering
\includegraphics[height=0.38\textheight]{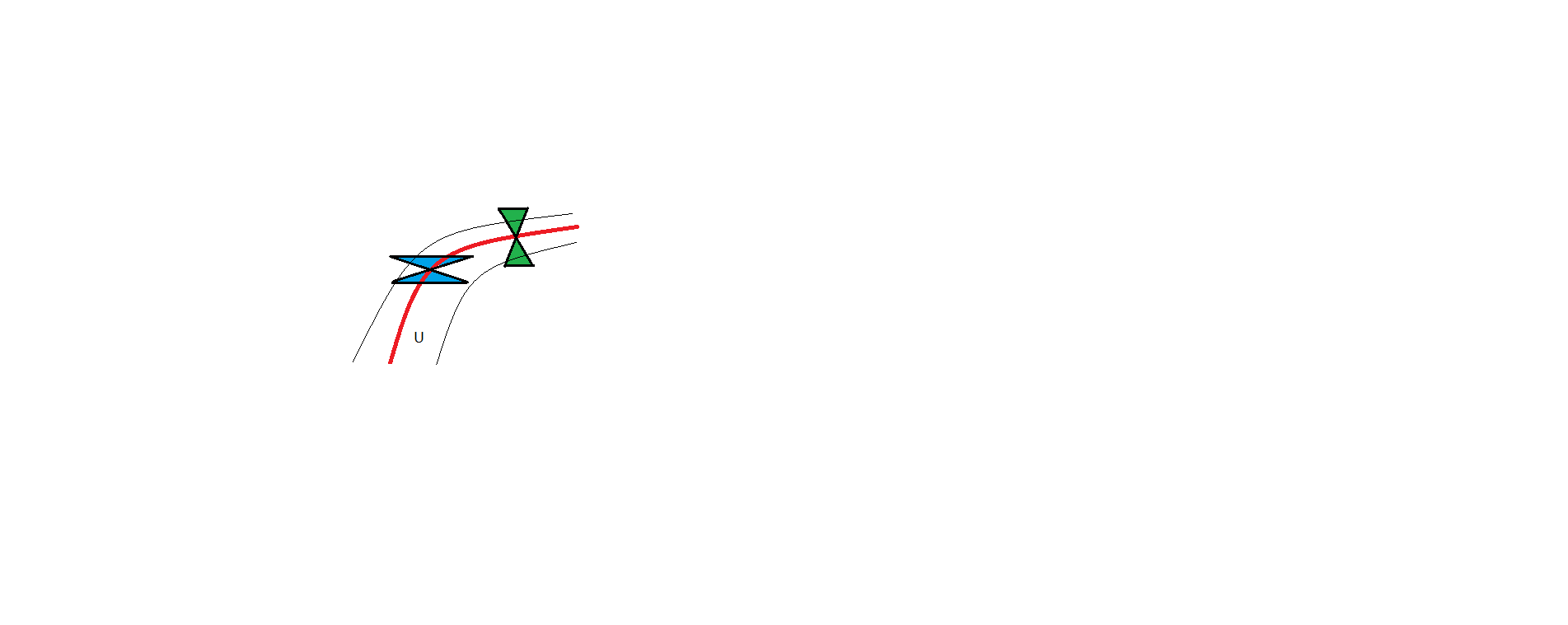}
\caption{\label{fig222}  Function $t$ is Cauchy for the metric $g_3^*$, even if its cones at $U\cap (\R\times I_n)$ are opened so that  $\gamma_0$ becomes timelike. Indeed, such a causal curve is forced to abandon $U$ when crossing the region $\R\times J_n$. The metrics $g_3^{\hbox{\tiny{even}}}$ and $g_3^{\hbox{\tiny{odd}}}$ will have wider and narrower cones in alternate regions, so that each one is globally hyperbolic but  $\gamma_0$ becomes timelike for their semisum. }
\end{figure}

\begin{defi}
Given $g_3^*$ the metrics $g_3^{\hbox{\tiny{even}}}$ and $g_3^{\hbox{\tiny{odd}}}$ are defined as follows. 

\begin{enumerate}
\item $g_3^{\hbox{\tiny{even}}} \equiv g_3^*$ on $[x_{2n}^-,x_{2n+1}^+]$. In particular, in $I_{2n}, J_{2n}$ and $ I_{2n+1}$.

$g_3^{\hbox{\tiny{even}}} \equiv 1/4$ on  $[x_{2n-1}^-,x_{2n}^+]$. In particular, in $I_{2n-1}, J_{2n-1}$ and $ I_{2n}$ (notice that this agree with $g_3^*$ on $I_{2n}$). 

\item $g_3^{\hbox{\tiny{odd}}} \equiv g_3^*$ 
on  $[x_{2n-1}^-,x_{2n}^+]$. In particular, in $I_{2n-1}, J_{2n-1}$ and $ I_{2n}$. 

$g_3^{\hbox{\tiny{odd}}} \equiv 1/4$ 
on $[x_{2n}^-,x_{2n+1}^+]$. In particular, in $I_{2n}, J_{2n}$ and $ I_{2n+1}$ (notice that this agree with $g_3^*$ on $I_{2n}$).
\end{enumerate}

\end{defi}

\begin{thm}\label{t_nonconvex}
The metrics $g_3^{\hbox{\tiny{even}}}$ and $g_3^{\hbox{\tiny{odd}}}$ admit $t$ as a Cauchy temporal function but its convex combination $g_3^{\hbox{\tiny{even}}}/2 + g_3^{\hbox{\tiny{odd}}}/2$ does not admit $t$ as a Cauchy temporal function.
\end{thm}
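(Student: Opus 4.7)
The plan is to verify the three claims---$t$ being Cauchy temporal for each of $g_3^{\mathrm{even}}$ and $g_3^{\mathrm{odd}}$, and failing to be so for their semisum---by exploiting the alternating block structure of the $I_n$, $J_n$ decomposition.

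For $g_3^{\mathrm{even}}$ (the argument for $g_3^{\mathrm{odd}}$ being symmetric, using instead the odd-indexed barriers $J_{2n-1}$), the function $t$ is automatically temporal (since $\nabla t = -\partial_t$ for any metric of the form $-dt^2+F\,dx^2$), so I must only verify the Cauchy slice property. The key observation is that, although $g_3^{\mathrm{even}}$ has strictly wider cones than $g_3^*$ on the odd-indexed bands $J_{2n-1}$ (where $\alpha^2\equiv 1/4$ instead of $5/4$), every even-indexed barrier $J_{2n}$ is preserved intact, since $g_3^{\mathrm{even}}$ coincides with $g_3^*$ on the whole block $[x_{2n}^-, x_{2n+1}^+]$. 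I then adapt the preceding Proposition: any future-inextensible $g_3^{\mathrm{even}}$-causal curve $\beta$ starting in $\bar U$ is non-imprisoned (as $g_3^{\mathrm{even}}$ is stably causal), so if $\beta$ never leaves $U$ its $x$-coordinate must be unbounded; in particular, it first meets $\{x=y_{2n}\}$ at some point of $J^+(\gamma_-(y_{2n}))$ for every large $n$. On the block $[x_{2n}^-, x_{2n+1}^+]$ the metric coincides with $g_3^*$, so Lemma~\ref{l1} confines the causal future of $\gamma_-(y_{2n})$, restricted to $\bar U$ within the block, to the strip $\R\times [y_{2n}^-, y_{2n}^+]$. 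Since by Lemma~\ref{l2} one has $y_{2n}^+ < x_{2n+1}^-$, the curve cannot escape this strip through the right side of the block without leaving $U$; combined with non-imprisonment in the compact set $\bar U\cap (\R\times [1, y_{2n}^+])$, this forces $\beta$ to leave $U$. Non-return follows from the acausality of $\gamma_+$ outside $U$ (spacelike in $\R^2$, hence acausal by footnote~\ref{foot}), and past-inextensibility is handled as in item~\ref{i2} of \S\ref{ss2}.

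For the semisum $g:=(g_3^{\mathrm{even}}+g_3^{\mathrm{odd}})/2$, I will exhibit a future-inextensible $g$-timelike curve that never meets $S_0=\{t=0\}$, taking $\gamma_0(x)=(-1/x,x)$ for $x\geq 2$. By construction, at every point of $\bar U$ with $x\geq x_1^-$ at least one of $\alpha^2_{\mathrm{even}}$, $\alpha^2_{\mathrm{odd}}$ equals $1/4$ while the other lies in $[1/4,5/4]$, so
\[
\tfrac{1}{2}(f_3^{\mathrm{even}})^2+\tfrac{1}{2}(f_3^{\mathrm{odd}})^2 \;\leq\; \tfrac{1}{2}\bigl(\tfrac{1}{4}+\tfrac{5}{4}\bigr)f_2^2 \;=\; \tfrac{3}{4}\,f_2^2.
\]
Along $\gamma_0$ one has $f_2(\gamma_0(x))=1/x^2$ and $\gamma_0'(x)=(1/x^2,1)$, hence
\[
g(\gamma_0', \gamma_0')\;\leq\;-\frac{1}{x^4}+\frac{3}{4x^4}\;=\;-\frac{1}{4x^4}\;<\;0.
\]
Since $t\circ\gamma_0=-1/x<0$ for $x\geq 2$ and $\gamma_0$ is future-inextensible as $x\to\infty$, $\gamma_0|_{[2,\infty)}$ is a future-directed $g$-timelike curve that never crosses $S_0$. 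Therefore $S_0$ is not a Cauchy hypersurface for $g$, and $t$ fails to be Cauchy temporal, as required.

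The main delicate point is the first step: ensuring that the wider cones on the odd bands $J_{2n-1}$ do not spoil the trapping at the even barriers for $g_3^{\mathrm{even}}$. This works precisely because Lemma~\ref{l1} uses only the metric values on $[x_{2n}^-, x_{2n+1}^+]$---where $g_3^{\mathrm{even}}\equiv g_3^*$---and the gap $y_{2n}^+<x_{2n+1}^-$ from Lemma~\ref{l2} supplies the geometric buffer preventing the reopened cones on the adjacent odd block from creating an escape route around the barrier.
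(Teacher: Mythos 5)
Your proposal is correct and follows essentially the same route as the paper: the first assertion is obtained by observing that each of $g_3^{\hbox{\tiny{even}}}$, $g_3^{\hbox{\tiny{odd}}}$ still satisfies the hypotheses of the preceding Proposition with the surviving barriers $J_{2n}$ (resp. $J_{2n-1}$) relabelled as the new sequence, and the second by the pointwise bound $\tfrac12(f_3^{\hbox{\tiny{even}}})^2+\tfrac12(f_3^{\hbox{\tiny{odd}}})^2\leq \tfrac34 f_2^2$, which makes $\gamma_0$ timelike for the semisum exactly as in \eqref{e_f}. Your extra care in checking that the widened cones on the odd bands cannot spoil the trapping at the even barriers is just an unpacking of the paper's one-line ``relabelling'' remark, not a different argument.
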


\begin{proof}
The first assertion follows because both metrics can be regarded as a $g_3^*$, relabelling the subsequences of the even and odd terms.
For the second, notice that, in the points with, say $x\geq x_2^+$, whenever $\alpha^2(x)>1/4$  for one of the metrics (necessarily  then $\leq 5/4$), the other one satisfies $\alpha^2(x)=1/4$. Thus, the semisum is $\leq 3/4$ and  \eqref{e_f} holds, 
as required. 
\end{proof}

\subsection{Normalized Cauchy  splitting with an incomplete slice}\label{ss0}

Let us construct a metric $g_4=-dt^2+f_4^2(t,x)dx^2$ such that $t$ is Cauchy temporal but $t=0$ is incomplete.
For each integer $m\geq 2$, consider the compact region $\hat T_m$ delimited by the polygon (double equilateral triangle)   with vertices $(0,m), (0, m+1), (\pm \sqrt{3}/2,m+1/2)$ 
and a similar one inside 
$T_m$ with  vertices in the $x$-axis 
$(0,m+\frac{1}{2m^2}), (0, m+1-\frac{1}{2m^2})$, 
see Fig. \ref{fig0}.
Now, choose any $f_4$ satisfying:
\begin{equation} \label{e0}
\left\{
\begin{array}{lll}
f_4(t,x)=\frac{1}{1+x^2} & & \hbox{if} \; (t,x)\in  T_m, \\
\frac{1}{1+x^2} \leq f_4(t,x) \leq 1  & & \hbox{if} \; (t,x)\in  \hat T_m\setminus T_m, 
\\
f_4(t,x)=1 & & \hbox{otherwise.} 
\end{array}
\right.
\end{equation}
Clearly, $t$ is a temporal function and no inextensible causal curve can be imprisoned in any $\hat T_m$. Being unaltered the metric of $\Lo^2$ in the regions outside the $\hat T_m$'s, all the slices of $t$ must be Cauchy hypersurfaces. However, the length $\ell$ of the positive $x$-semiaxis satisfies:
$$
\ell \leq 2+ \sum_{m=2}^\infty \frac{1}{m^2} \; + \;  \int_0^\infty \frac{dx}{1+x^2}<\infty
$$

\begin{figure}
	\centering
\includegraphics[height=0.35\textheight]{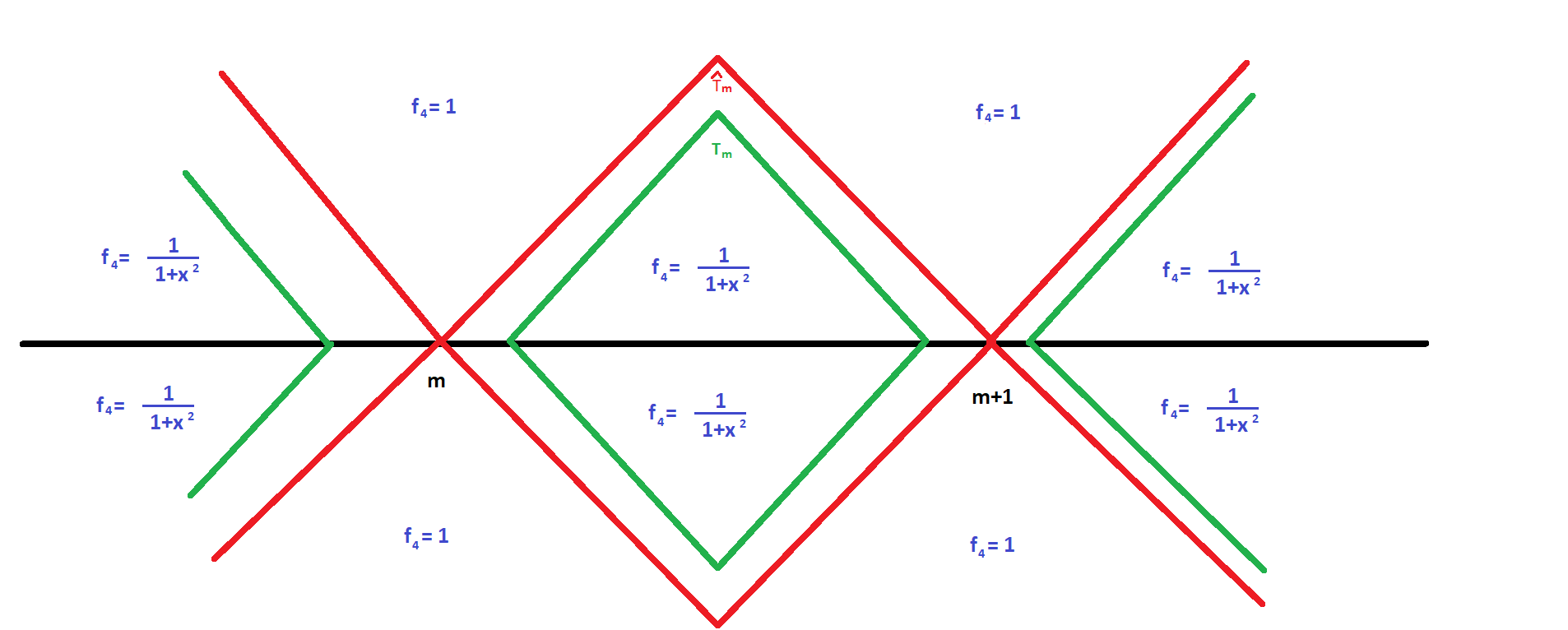}

\caption{\label{fig0} Poligons $T_m$ in green, each one inside a  polygon $\hat T_m$. Outside the region delimited by the  $\hat T_m$'s, the metric determined by $f_4$ in \eqref{e0} is as in $\Lo^2$. However, the Cauchy slice $t=0$ inherits the metric $dx^2/(1+x^2)^2$ on the $T_m$'s. This fact and the decreasing contribution for its length $\ell$ in each  $\hat T_m\setminus T_m$ makes this slice incomplete. }
\end{figure}

\subsection{Isocausal glob. hyp. spacetimes with different c-boundaries}\label{ss4}
In this example the metric \eqref{e_g}
will be defined in $(-\infty,0)\times \R$ rather than
$\R^2$.
Consider  the following metrics
 $g_{\cc} , g_5 , g_{\ca}$:
\[
g_{\cc}=-dt^2+ 4 dx^2,\qquad g_5=-dt^{2}+f_5^2\left(\left| \frac{x}{t}\right|\right)dx^{2},\qquad
g_{\ca}=-dt^2+\;dx^2,
\]
where $f_5:[0,\infty) \rightarrow [1/2,1]$ is a smooth function which
satisfies:

\begin{figure}
\includegraphics[height=0.5\textheight]{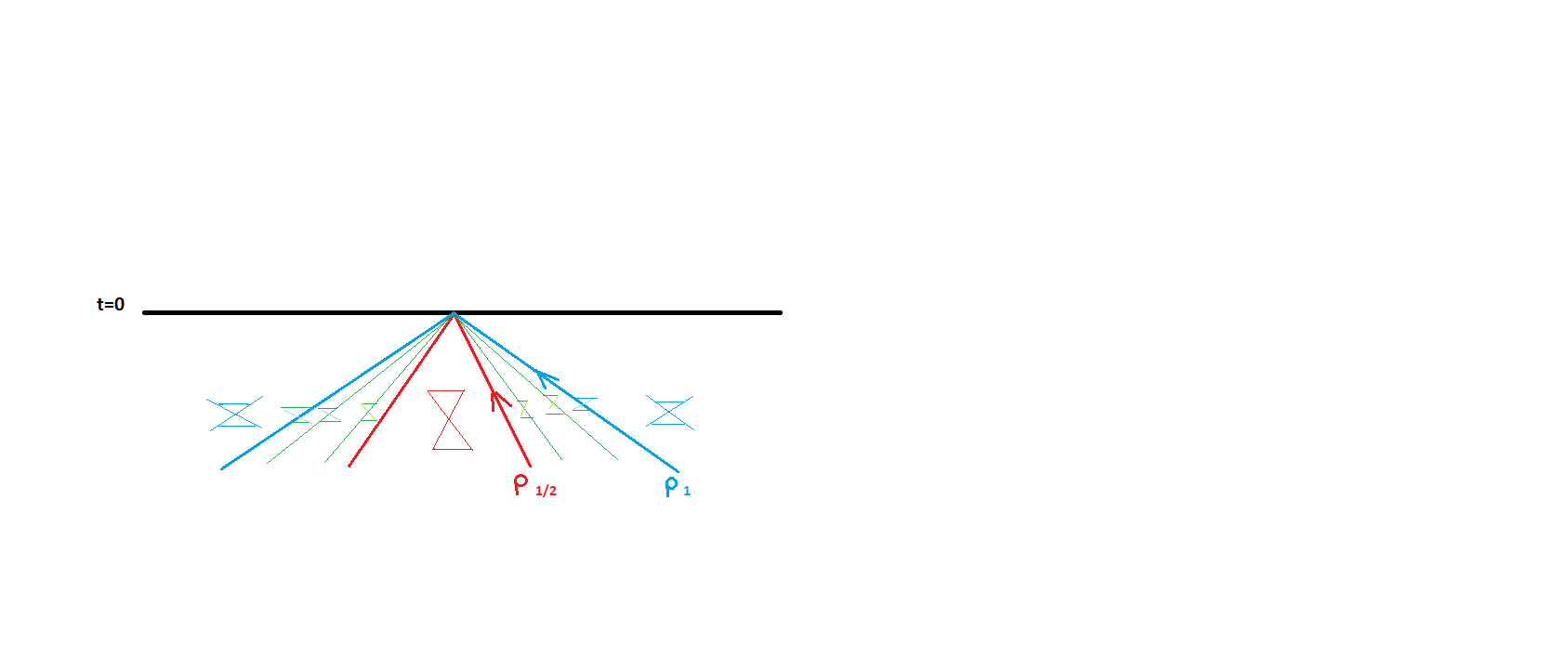}

\caption{\label{fig4} 
For the metric $g_5$, the cones in the central red wedge $|x|\leq |t|/2$ are narrower than in the lateral ones $|t|\leq |x|$ (between the blue line and the semiaxes). The green cones show the region of transition (in this region, the picture suggests $f_5(|x/t|)=|t/x|$ for simplicity, but  the metric $g_5$ is somewhat different to make it $C^\infty$).}
\end{figure}
\begin{figure}
\includegraphics[height=0.5\textheight]{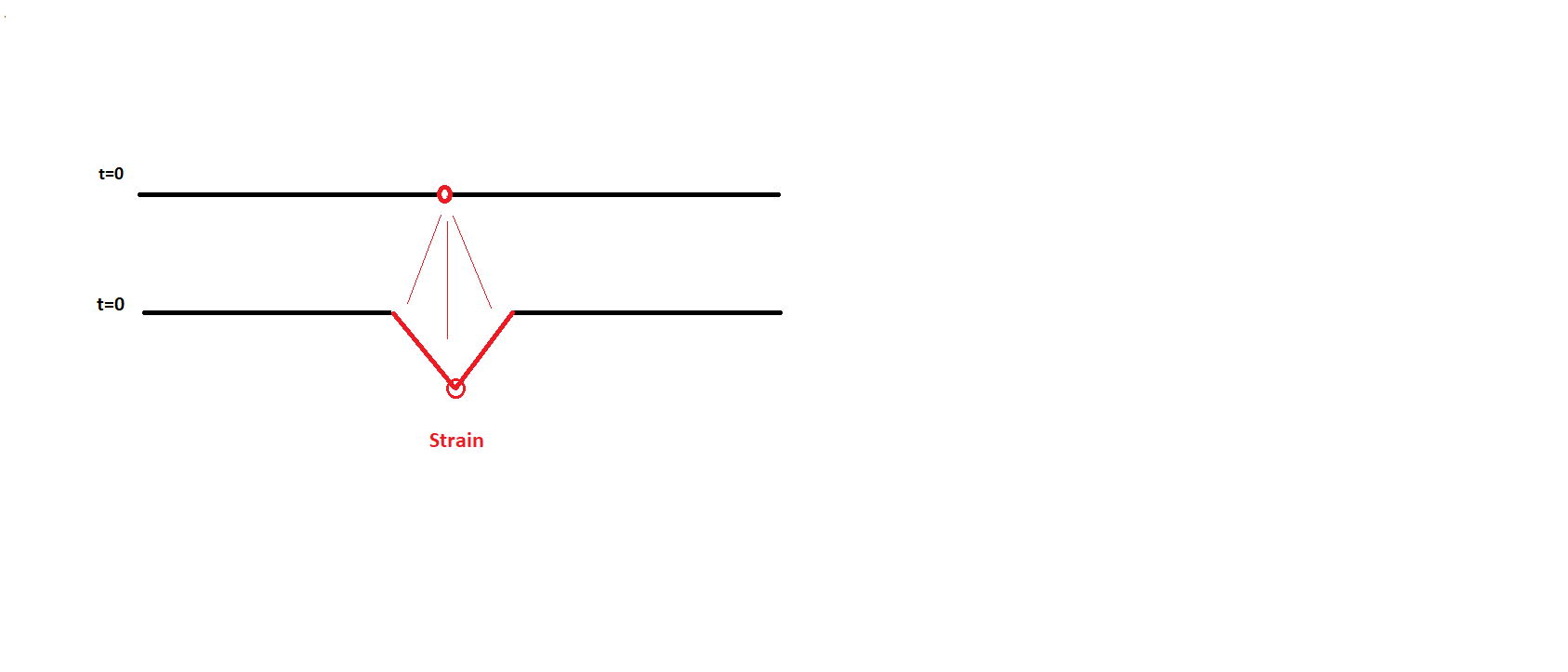}

\caption{\label{fig8} 
Above, the $x$-axis is  the future part of the c-boundary of the region $t<0$ in $\Lo^2$. Below, a strain appears in $\hat\partial\m$ for the metric $g_5$ (Fig. \ref{fig4}).}
\end{figure}

\begin{itemize}\item  $f_5(|x/t|)\equiv 2$ if $|x/t| \leq 1/2$, that
is, $g=g_{\cc}$ in the region $|x|\leq |t|/2 $.

\item $f_5$   decreases strictly from $2$ to $1$ on the interval
$ 1/2\leq |x/t|\leq 1$, 
so that the causal cones of $g_{\cc}$ (resp.
of $g$) are strictly contained in the ones of $g$ (resp. of
$g_{\ca}$) in the region $|t|/2 <|x|< |t|$.

\item $f_5(|x/t|)\equiv 1$ if $1\leq |x/t| $, that is, $g=g_{\ca}$ in
the region $|t| \leq |x|$.
\end{itemize}
Clearly,  $g_{\cc}$ and $g_{\ca}$ are isometric (under $(t,x)\mapsto (t,2x)$)
 and $g_{\cc} \prec g \prec g_{\ca}$ (as $2\geq f_5 \geq 1$); so, these three metrics are isocausal in the sense explained in \S 
\ref{s_isocausal}. 
 Moreover, the future part of the c-boundary of  $g_{\cc}$ and $g_{\ca}$ is spacelike and it can be identified with the $x$-axis (as each TIP   for $g_{\cc}$ or $g_{\ca}$ can be naturally identified with some $I^-(0,x), x\in \R$, where $I^-$ is computed with the natural extension of  $g_{\cc}$ or $g_{\ca}$ to $\R^2$, resp.). 

 However,  the causal boundary $\hat \m$  of $g_5$ contains  two lightlike segments and, so, it is not causally isomorphic to the one of  $g_{\cc}$ and $g_{\ca}$.  
 To check the appeareance of one of these segments, consider the  lightlike vector field  
 $$
 X=\partial_t - \frac{1}{f_5(|x/t|)} \partial_x 
 $$
(and analogous reasoning with $\partial_t + \partial_x/f_5$ would give the other lightlike segment), 
and take  its following integral curves $\rho_{x_0}$:
$$
\rho_{x_0}(s)=(-1+s,c_{x_0}(s)), \; \forall s\in [0,1), \quad  \rho_{x_0}(0)=(-1,x_0), \quad \hbox{for} \; \; \frac{1}{2}\leq x_0 \leq 1,
$$
where, $c_{x_0}$ is a suitable function. Clearly, 
$$
\rho_{1/2}(s)=(-1+s,\frac{1}{2}(1-s)), \quad \rho_{1}(s)=(-1+s,1-s), \quad , \quad \forall s\in [0,1), 
$$
and,  for each $s$,  $c_{x_0}(s)$ must grow strictly with $x_0 \in [1/2, 1]$  (otherwise, two integral curves of $X$ would cross), that is:
$$ c_{x_1}(s)<c_{x_2}(s), \; \forall s\in [0,1), \; \frac{1}{2} \leq x_1 < x_2 \leq 1 \qquad \lim_{s\rightarrow 1} c_{x_0}(s)= 0, \; \forall x_0 \in [\frac{1}{2},1].
$$
 Then, each $I^{-}(\rho_{x_0})$ is a TIP and it satisfies 
$$I^{-}(\rho_{x_1}) \subsetneq I^{-}(\rho_{x_2}), \quad  \hbox{whenever} \quad \frac{1}{2} \leq x_1 < x_2 \leq 1,
$$
that is, they are distinct points in the c-boundary which are causally related but non-timelike related, and the required lightlike segment of $\partial_c \m$, $\m=((-\infty,0)\times \R, g_5)$ is obtained. 
 

\begin{rem}\label{r_isocaus_paraelfuturo}  The two lightlike segments of $ \partial_c ((-\infty,0)\times \R,g_5)$ can be regarded as the strain of  $(0,0)$ (this point identified with one of $\partial_c ((-\infty,0)\times \R,g_{\cc}) \equiv \{0\}\times \R$) as in \S \ref{s_computecbound}. Indeed,   the procedure   for comparison with a static product $\R\times M$ can be extended to  our example. Here, the natural  comparison occurs replacing   $\R$ in this  product  by the interval $I=(-\infty,0)$ (whose c-boundary is trivially known). In this case, the bound of metrics \eqref{e_alpha_bound} holds with $\alpha\equiv 1/2$. Then, the first inequality in \eqref{e_alph_asymp} (i.e., the hypothesis for the comparison of the future c-boundary) should be replaced,  being $\int_{-1}^0 \left(\frac{1}{\alpha(t)}-1\right) dt <\infty$  the natural option (which is satisfied by $\alpha\equiv 1/2$).
\end{rem}

\section*{Acknowledgments}
The author acknowledges warmly the  exciting discussions and suggestions  by 
Valter Moretti, Simone Murro, Daniele Volpe, 
Felix Finster, Albert Much, Kyriakos Papadopoulos, 
Olaf M\"uller, Jonat\'an Herrera, 
 Stefan Suhr and  Leonardo Garc\'{i}a Heveling.  
Special thanks also to the organizers of the meeting SCRI21, which motivated the present contribution. The author is partially supported by
the grants P20-01391 (PAIDI 2020, Junta
de Andaluc\'{\i}a) and PID2020-116126GB-I00 (MCIN/ AEI /10.13039/501100011033), as well as the framework IMAG/  Mar\'{\i}a de Maeztu,   CEX2020-001105-MCIN/ AEI/ 10.13039/501100011033.
 

\section*{Data Availability Statement}
Data sharing is not applicable to this article as no datasets were generated or analysed during the current study.

\end{document}